\newcounter{MYtempeqncnt}
\theoremstyle{remark} \newtheorem{theorem}{Theorem}
\theoremstyle{remark} 
\newtheorem{proposition}{Proposition}
\theoremstyle{remark} \newtheorem{remark}{Remark}
\newtheorem*{corollaryA}{Corollary}
\newcommand{\eps}{\epsilon}
\newcommand{\styp}{A_{\epsilon}^{*(n)}}
\newcommand{\stypt}{A_{\epsilon_2}^{*(n)}}
\newcommand{\stypm}{A^{*(m)}_{\epsilon}}
\newcommand{\mB}{\mathcal{B}}
\newcommand{\mX}{\mathcal{X}}
\newcommand{\mY}{\mathcal{Y}}
\newcommand{\mW}{\mathcal{W}}
\newcommand{\mV}{\mathcal{V}}
\newcommand{\mS}{\mathcal{S}}
\newcommand{\mT}{\mathcal{T}}
\newcommand{\mQ}{\mathcal{Q}}
\newcommand{\mU}{\mathcal{U}}
\newcommand{\mCN}{\mathcal{CN}}
\newcommand{\tH}{\tilde{H}}
\newcommand{\tU}{\tilde{U}}
\newcommand{\tXvec}{\tilde{\mathbf{X}}}
\newcommand{\txvec}{\tilde{\mathbf{x}}}
\newcommand{\hXvec}{\hat{\mathbf{X}}}
\newcommand{\hxvec}{\hat{\mathbf{x}}}
\newcommand{\yvec}{\mathbf{y}}
\newcommand{\xvec}{\mathbf{x}}
\newcommand{\zvec}{\mathbf{z}}
\newcommand{\svec}{\mathbf{s}}
\newcommand{\avec}{\mathbf{a}}
\newcommand{\tvec}{\mathbf{t}}
\newcommand{\Zvec}{\mathbf{Z}}
\newcommand{\Qvec}{\mathbf{Q}}
\newcommand{\vvec}{\mathbf{v}}
\newcommand{\wvec}{\mathbf{w}}
\newcommand{\qvec}{\mathbf{q}}
\newcommand{\tsvec}{\tilde{\svec}}
\newcommand{\hsvec}{\hat{\svec}}
\newcommand{\ttvec}{\tilde{\tvec}}
\newcommand{\htvec}{\hat{\tvec}}
\newcommand{\tqvec}{\tilde{\qvec}}
\newcommand{\bqvec}{\bar{\qvec}}
\newcommand{\hqvec}{\hat{\qvec}}
\newcommand{\msg}{u}
\newcommand{\msgBig}{\MakeUppercase{\msg}}
\newcommand{\msgCal}{\mathcal{\msgBig}}
\newcommand{\tmsg}{\tilde{\msg}}
\newcommand{\hmsg}{\hat{\msg}}
\newcommand{\Xvec}{\mathbf{X}}
\newcommand{\Yvec}{\mathbf{Y}}
\newcommand{\Vvec}{\mathbf{V}}
\newcommand{\E}{\mathds{E}}
\newcommand{\CN}{\mathcal{CN}}
\newcommand{\msA}{\mathscr{A}}
\newcommand{\msD}{\mathscr{D}}
\newcommand{\setR}{\mathfrak{R}}
\newcommand{\setC}{\mathfrak{C}}
\newcommand{\chR}{\hat{R}}
\newcommand{\LL}[1]{\left#1}
\newcommand{\RR}[1]{\right#1}
\newcommand{\dst}{\displaystyle}
\newcommand\independent{\protect\mathpalette{\protect\independenT}{\perp}}
\def\independenT#1#2{\mathrel{\rlap{$#1#2$}\mkern2mu{#1#2}}}
\long\def\symbolfootnote[#1]#2{\begingroup\def\thefootnote{\fnsymbol{footnote}}\footnote[#1]{#2}\endgroup}
\newcommand{\thmlabel}[1]{\label{thm:#1}}
\newcommand{\thmref}[1]{\ref{thm:#1}}
\newcommand{\Thmref}[1]{Thm.~\thmref{#1}}
\newtheorem{MyDefinition}{Definition}
\title{Source-Channel Coding Theorems for the Multiple-Access Relay Channel
\thanks{
Yonathan Murin and Ron Dabora are with the Department of
Electrical and Computer Engineering, Ben-Gurion University, Israel; Email: {\tt  \{moriny,ron\}@ee.bgu.ac.il}.
 Deniz G\"und\"uz is with the Department of Electrical and Electronic Engineering, Imperial College London, London, United Kingdom; Email: {\tt d.gunduz@imperial.ac.uk}.
This work was partially supported by the European Commission's Marie Curie IRG Fellowship
PIRG05-GA-2009-246657 under the Seventh Framework Programme. Deniz G\"{u}nd\"{u}z was partially supported by the Spanish Government under project TEC2010-17816 (JUNTOS) and the European Commission's Marie Curie IRG Fellowship with reference number 256410 (COOPMEDIA). Parts of this work were presented at the International Symposium on Wireless Communication Systems (ISWCS), Aachen, Germany, November
2011, and at the International Symposium on Information Theory (ISIT), Boston, MA, July 2012.
}
}
\author{Yonathan Murin, Ron Dabora, and Deniz G\"und\"uz} 
\begin{document}
\maketitle
\thispagestyle{empty}

\begin{abstract}
    We study reliable transmission of arbitrarily correlated sources over multiple-access relay channels (MARCs) and multiple-access broadcast relay channels (MABRCs). In MARCs only the destination is interested in reconstructing the sources, while in MABRCs both the relay and the destination want to reconstruct them. In addition to arbitrary correlation among the source
    signals at the users, both the relay and the destination have side information correlated with the source signals. Our objective is to determine whether a given pair of sources can be
    losslessly transmitted to the destination for a given number of channel symbols per source sample, defined as the source-channel rate.
    Sufficient conditions for reliable communication based on operational separation, as well as necessary conditions on the achievable source-channel
    rates are characterized. Since operational separation is generally not optimal for MARCs and MABRCs, sufficient conditions for reliable communication
    using joint source-channel coding schemes based on a combination of the correlation preserving mapping technique with Slepian-Wolf source coding are
    also derived. For correlated sources transmitted over fading Gaussian MARCs and MABRCs, we present conditions under which separation (i.e., separate and stand-alone source and channel codes) is optimal.
    This is the first time optimality of separation is proved for MARCs and MABRCs.
\end{abstract}

\begin{IEEEkeywords}
    Multiple-access relay channel, separation theorem, Slepian-Wolf source coding, fading, joint source and channel coding, correlation preserving mapping.
\end{IEEEkeywords}

\section{Introduction}

The multiple-access relay channel (MARC) models a network in which several users communicate with a single destination with the help of a relay \cite{Kramer:00}.
The MARC is a fundamental multi-terminal channel model that generalizes both the multiple access channel (MAC) and the relay channel models, and has received a lot of attention in the recent years \cite{Kramer:00},
\cite{Kramer:2005}, \cite{Sankar:07}, \cite{Tandon:CISS:11}. If the relay terminal also wants to decode the source messages, the model is called the multiple-access broadcast relay channel (MABRC).

Previous work on MARCs considered independent sources at the terminals. In the present work we allow arbitrary correlation among the sources to be transmitted to the destination in a lossless fashion, and also let the
relay and the destination have side information correlated with the sources. Our objective is to determine whether a given pair of sources can be losslessly transmitted to the destination for a specific number of channel
uses per source sample, which is defined as the {\it source-channel rate}.

In \cite{Shannon:48} Shannon showed that a source can be reliably transmitted over a  point-to-point memoryless channel, if its entropy is less than the capacity of the channel. Conversely, if the source entropy is greater
than the channel capacity, reliable transmission of the source over the channel is not possible.
 Hence, a simple comparison of the rates of the optimal source code and the optimal channel code for the respective source and channel, suffices to determine whether reliable communication is feasible or not. This is called
 the \emph{separation theorem}. An implication of the separation theorem is that the independent design of the source and the channel codes is optimal. However, the optimality of source-channel separation does not generalize
 to multiuser networks \cite{Shannon:61}, \cite{Cover:80}, \cite{GunduzErkip:09}, and, in general the source and the channel codes need to be designed jointly for every particular combination of sources and~channel.

The fact that the MARC generalizes both the MAC and the relay channel models reveals the difficulty of the problem studied here. The capacity of the relay channel, which corresponds to a special case of our problem, is
still unknown. While the capacity region of a MAC is known in general, the optimal joint source-channel code for  transmission of correlated sources over the MAC remains open \cite{Cover:80}.
Accordingly, the objective of this work is to construct lower and upper bounds for the achievable source-channel rates in MARCs and MABRCs. We shall focus on decode-and-forward (DF) based achievability schemes,
such that the relay terminal decodes both source signals before sending cooperative information to the destination. Naturally, DF-based achievable schemes for the MARC directly apply to the MABRC model as well.
Moreover, we characterize the {\em optimal} source-channel rate in some special cases. Our contributions are listed below:

    1) We establish an achievable source-channel rate for MARCs based on operational separation \cite[Section I]{Tuncel:06}.
    The scheme uses the DF strategy with irregular encoding \cite{CoverG:79}, \cite[Section I-A]{Kramer:2005}, successive decoding at the relay
    and backward decoding at the destination. We show that for MARCs with correlated sources and side information, DF with irregular encoding yields a higher
    achievable source-channel rate than the rate achieved by DF with regular encoding. This is in contrast to the scenario without side information, in which
    DF with regular encoding achieve the same source-channel rate as DF with irregular encoding. The achievability result obtained for MARCs applies directly to MABRCs as well.

    2) We derive two sets of necessary conditions for the achievability of  source-channel rates for MARCs (and MABRCs).

    3) We investigate MARCs and MABRCs subject to independent and identically distributed (i.i.d.) fading, for both phase fading and Rayleigh fading. We find
    conditions under which {\em informational source-channel separation (in the sense of \cite[Section I]{Tuncel:06}) is optimal for each channel model}.
    {\em This is the first time the optimality of separation is proven for some special case of MARCs and MABRCs}. Note that these models are not degraded in the sense of \cite{Sankar:09}.

    4) We derive two joint source-channel coding achievability schemes for MARCs and MABRCs for the source-channel rate $\kappa=1$. Both proposed schemes use a combination of
    Slepian-Wolf (SW) source coding \cite{SW:73} and joint source-channel coding implemented via the correlation preserving mapping (CPM) technique \cite{Cover:80}.
    In the first scheme CPM is used for encoding information to the relay and SW source coding combined with an independent channel code is used for encoding information to the destination. In the second scheme,
    SW source coding is used for encoding information to the relay and CPM is used for encoding information to the destination. These are the first {\em joint source-channel
    achievability schemes, proposed for a multiuser network with a relay, which take advantage of the CPM technique.}

\subsection*{Prior Work}

The MARC has been extensively studied from a channel coding perspective. Achievable rate regions for the MARC were derived in \cite{Kramer:2005}, \cite{Sankar:07} and \cite{KramerMandayam:04}. In \cite{Kramer:2005}
Kramer et al. derived an achievable rate region for the MARC with independent messages. The coding scheme employed in \cite{Kramer:2005} is based on decode-and-forward relaying, and uses regular encoding, successive
decoding at the relay, and backward decoding at the destination. In \cite{Sankar:07} it was shown that, in contrast to the classic relay channel, in a MARC different DF schemes yield different rate regions.
In particular, backward decoding can support a larger rate region than sliding window decoding. Another DF-based coding scheme which uses offset encoding, successive decoding at the relay, and sliding-window decoding
at the destination was presented in \cite{Sankar:07}. Outer bounds on the capacity region of MARCs were obtained in \cite{KramerMandayam:04}. More recently, capacity regions for two classes of MARCs were characterized
in \cite{Tandon:CISS:11}.

In \cite{ShamaiVerdu:95}, Shamai and Verd\'u considered the availability of correlated side information at the receiver in a point-to-point scenario, and showed that source-channel separation still holds. The availability
of correlated side information at the receiver enables transmitting the source reliably over a channel with a smaller capacity compared to the capacity needed in the absence of side information. In \cite{Cover:80} Cover et al. derived
finite-letter sufficient conditions for communicating discrete, arbitrarily correlated sources over a MAC, and showed the suboptimality of source-channel separation when transmitting correlated sources over a MAC. These sufficient conditions were later shown in \cite{Dueck:81} not to be necessary in general. The transmission technique
introduced by Cover et al. is called \emph{correlation preserving mapping (CPM)}. In the CPM technique the channel codewords are correlated with the source sequences, resulting in correlated channel inputs.
CPM is extended to source coding with side information over a MAC in \cite{Ahlswede:83} and to broadcast channels with correlated sources in \cite{HanCosta:87} (with a correction in \cite{KramerNair:09}).

%
%
%

In \cite{Tuncel:06} Tuncel distinguished between two types of source-channel separation. The first type, called {\em informational separation}, refers to classical separation in the Shannon sense. The second type,
called {\em operational separation}, refers to statistically independent source and channel codes, which are not necessarily the optimal codes for the underlying source or the channel, coupled with a joint decoder
at the destination. Tuncel also showed that when broadcasting a common source to multiple receivers, each with its own  correlated side information,
operational separation is optimal while informational separation is not.


In \cite{GunduzErkip:09} G\"und\"uz et al. obtained necessary and sufficient conditions for the optimality of informational separation in MACs with correlated sources and side information at the receiver.
The work \cite{GunduzErkip:09} also provided necessary and sufficient conditions for the optimality of operational separation for the compound MAC. Transmission of arbitrarily correlated sources over
interference channels (ICs) was studied in \cite{SalehiKurtas:93}, in which Salehi and Kurtas applied the CPM technique; however, when the sources are independent, the conditions derived in \cite{SalehiKurtas:93}
do not specialize to the Han and Kobayashi (HK) region, \cite{HK:81}, which is the largest known achievable rate region for ICs. Sufficient conditions based on the CPM technique, which specialize to the HK region were derived in \cite{LiuChen:2011}.
Transmission of independent sources over ICs with correlated receiver side information was studied in \cite{GunduzLiu:Rev}. The work \cite{GunduzLiu:Rev} showed that source-channel separation is optimal when each receiver has
access to side information correlated with its own desired source. When each receiver has access to side information correlated with the interfering transmitter's source, \cite{GunduzLiu:Rev} provided sufficient
conditions for reliable transmission based on a joint source-channel coding scheme which combines Han-Kobayashi superposition encoding and partial interference cancellation.


Lossless transmission over a relay channel with correlated side information was studied in \cite{ErkipGunduz:07}, \cite{Gunduz:12}, \cite{ElGamalCioffi:07} and \cite{Sefidgaran:ISIT:09}.
In \cite{ErkipGunduz:07} G\"und\"uz and Erkip developed a DF-based achievability scheme and showed that operational separation is optimal for physically degraded relay channels as well as for
cooperative relay-broadcast channels. The scheme of \cite{ErkipGunduz:07} was extended to multiple relay networks in \cite{Gunduz:12}.

Prior work on source transmission over fading channels is mostly limited to point-to-point channels (see \cite{Gunduz:IT:08} and references therein). In this work we consider two types of
fading models: phase fading and Rayleigh fading. Phase fading models apply to high-speed microwave communications where the oscillator's phase noise and the sampling clock jitter are the key impairments.
Phase fading is also the major impairment in communication systems that employ orthogonal frequency division multiplexing \cite{BarNess:04}. Additionally, phase fading can be used to model systems which employ
dithering to decorrelate signals \cite{Erez:06}.
For cooperative multi-user scenarios, phase-fading models have been considered for MARCs  \cite{Kramer:2005}, \cite{KramerMandayam:04}, \cite{Ron:2012}, for broadcast-relay channels (BRCs) \cite{Kramer:2005},
and for interference channels \cite{Goldsmith:07}.
Rayleigh fading models are very common in wireless communications and apply to mobile communications in the presence of multiple scatterers without line-of-sight \cite{Sklar:97}.
Rayleigh fading models have been considered for relay channels in \cite{Farhadi:08}, \cite{Farhadi:09} and \cite{Simeone:09}, and for MARCs in \cite{Ron:2012}.
The key similarity between the two fading models is the uniformly distributed phase of the fading process. The phase fading and the Rayleigh fading models differ in the behavior of the fading magnitude component,
 which is fixed for the former but varies following a Rayleigh distribution for the latter.

The rest of this paper is organized as follows: in Section \ref{sec:NotationModel} the model and notations are presented. In Section \ref{sec:sepBasedMARC} an achievable source-channel rate based on operational
separation is presented. In Section \ref{sec:outerBounds} necessary conditions on the achievable source-channel rates are derived. In Section \ref{sec:sepOpt} the optimality of separation for correlated sources
transmitted over fading Gaussian MARCs is studied, and in Section \ref{sec:JointAchiev} two achievable schemes based on joint source-channel coding are derived. Concluding remarks are provided in Section \ref{sec:conclusions}, followed by the appendices.

\section{Notations and Model} \label{sec:NotationModel}
In the following we denote the set of real numbers with $\setR$, and the set of complex numbers with $\setC$.
We denote random variables (RVs) with upper-case letters, e.g. $X$, $Y$, and their realizations with lower case letters, e.g. $x$, $y$. A discrete RV $X$ takes values in a set $\mX$. We use $|\mX|$ to denote the cardinality
of a finite, discrete set $\mX$, $p_X(x)$ to denote the probability mass function (p.m.f.) of a discrete RV $X$ over $\mX$, and $f_X(x)$ to denote the probability density function (p.d.f.) of a continuous RV $X$ on $\setC$.
For brevity we may omit the subscript $X$ when it is the uppercase version of the sample symbol $x$.
We use $p_{X|Y}(x|y)$ to denote the conditional distribution of $X$ given $Y$.
We denote vectors with boldface letters, e.g. $\xvec$, $\yvec$; the $i$'th element of a vector $\xvec$ is
denoted by $x_i$, and we use $\xvec_i^j$ where $i<j$, to denote $(x_i, x_{i+1},...,x_{j-1},x_j)$; $x^j$ is a short form notation for $x_1^j$, and unless specified otherwise, $\xvec \triangleq x^n$.
We denote the empty set with $\phi$, and the complement of the set $\mB$ by $\mB^c$.
We use $H(\cdot)$ to denote the entropy of a discrete RV, and $I(\cdot;\cdot)$ to denote the mutual information between two RVs, as defined in \cite[Ch. 2, Ch. 9]{cover-thomas:it-book}. 
We use $\styp(X)$ to denote the set of $\eps$-strongly typical sequences with respect to the distribution
$p_X(x)$ on $\mX$, as defined in \cite[Ch. 5.1]{YeungBook}; when referring to a typical set we may omit the RVs from the notation, when these variables are clear from the context.
We use $\mCN(a,\sigma^2)$ to denote a proper, circularly symmetric, complex
Gaussian distribution with mean $a$ and variance $\sigma^2$ \cite{NM:93}, and $\E\{ \cdot \}$ to denote stochastic expectation.
We use $X-Y-Z$ to denote a Markov chain formed by the RVs $X,Y,Z$ as defined in \cite[Ch. 2.8]{cover-thomas:it-book}, and $X \independent Y$ to denote that $X$ is statistically independent of $Y$.


\subsection{Problem Formulation} \label{subsec:MARC}
The MARC consists of two transmitters (sources), a receiver (destination) and a relay.
Transmitter $i$ has access to the source sequence $S_i^m$, for $i=1,2$.
The receiver is interested in the lossless reconstruction of the source sequences observed by the two transmitters.
The relay has access to side information $W_3^m$, and the receiver has access to side information $W^m$.
The objective of the relay is to help the receiver decode the source sequences.
For the MABRC the relay is also interested in a lossless reconstruction of the source sequences.
Figure \ref{fig:MABRCsideInfo} depicts the MABRC with side information setup. The MARC is obtained when the reconstruction at the relay is omitted.
\begin{figure}[t]
    \centering
    \scalebox{0.43}{\includegraphics{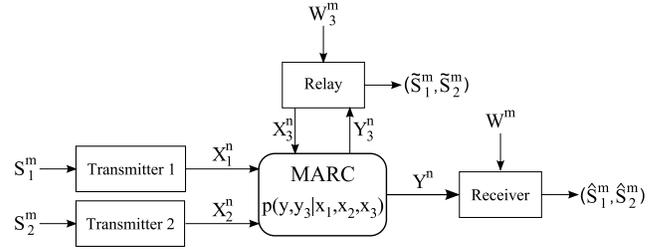}}
    \caption{Multiple-access broadcast relay channel with correlated side information. $(\tilde{S}^m_{1}, \tilde{S}^m_{2})$ are the reconstructions of $(S^m_{1}, S^m_{2})$ at the relay, and $(\hat{S}^m_{1}, \hat{S}^m_{2})$
    are the reconstructions at the destination.}
    \label{fig:MABRCsideInfo}
\end{figure}

The sources and the side information sequences, $\{ S_{1,k},S_{2,k},W_{k},W_{3,k} \}_{k=1}^{m}$, are
arbitrarily correlated according to a joint distribution $p(s_1,s_2,w,w_3)$ over a
finite alphabet $\mS_1 \times \mS_2 \times \mW \times \mW_3$, and independent across different sample indices $k$.
All nodes know this joint distribution.

For transmission, a discrete memoryless MARC with inputs $X_1, X_2, X_3$ over finite input alphabets $\mX_1,\mX_2,\mX_3$, and outputs $Y, Y_3$ over finite output alphabets $\mY,\mY_3$, is available.
The MARC is memoryless, that is,
\begin{align}
    & p(y_{k},y_{3,k}|y^{k-1}, y_{3,1}^{k-1}, x_{1,1}^k, x_{2,1}^k, x_{3,1}^k, s_{1,1}^m, s_{2,1}^m, w_{3,1}^m, w^m) \nonumber \\
		& \qquad = p(y_{k},y_{3,k}|x_{1,k},x_{2,k},x_{3,k}), \quad k=1,2,\dots,n.
\label{eq:MARCchanDist}
\end{align}

\begin{MyDefinition}
    \label{def:MABRCcodeDef}
    \textnormal{An $(m, n)$ source-channel code for the MABRC with correlated side information consists of two encoding functions,
    \begin{equation}
        f_i^{(m,n)} : \mS_i^m \mapsto \mX_i^n, \qquad i=1,2,
    \label{eq:MABRC_encFunc}
    \end{equation}
a set of causal encoding functions at the relay, $\{ f_{3,k}^{(m,n)} \}_{k=1}^n$, such that
    \begin{equation}
        x_{3,k} = f_{3,k}^{(m,n)}(y_{3,1}^{k-1},w_3^m), \qquad 1 \leq k \leq n,
    \label{eq:MABRC_relayEncFunc}
    \end{equation}
and two decoding functions
    \begin{subequations}
    \begin{align}
        g^{(m,n)} &: \mY^n \times \mW^m \mapsto \mS_1^m \times \mS_2^m, \\
        g_3^{(m,n)} &: \mY_3^n \times \mW_3^m \mapsto \mS_1^m \times \mS_2^m.
    \label{eq:MABRC_decFunc}
    \end{align}
    \end{subequations}
		}
\end{MyDefinition}
An $(m, n)$ source-channel code for the MARC is defined as in Definition \ref{def:MABRCcodeDef} with the exception that the decoding function $g_3^{(m,n)}$ does not exist.

\begin{MyDefinition}
    \label{def:MABRCpErr}
    \textnormal{Let $\hat{S}_i^m$ denote the reconstruction of $S_i^m$ at the receiver, and $\tilde{S}_{i}^m$ denote the reconstruction of $S_i^m$ at the relay, for $i=1,2$.
    The average probability of error, $P_e^{(m,n)}$, of an $(m,n)$ code for the MABRC is defined as
    \begin{align}
        P_e^{(m,n)} \triangleq \Pr & \Big\{ \big\{(\hat{S}_1^m,\hat{S}_2^m) \neq (S_1^m,S_2^m)\big\} \nonumber \\
				& \qquad \bigcup \big\{(\tilde{S}_{1}^m,\tilde{S}_{2}^m) \neq (S_1^m,S_2^m)\big\}  \Big\},
    \label{eq:MABRC_pErr}
    \end{align}}

    \noindent \textnormal{while for the MARC the average probability of error is defined as
    \begin{eqnarray}
        P_e^{(m,n)} & \triangleq & \Pr \Big\{(\hat{S}_1^m,\hat{S}_2^m) \neq (S_1^m,S_2^m) \Big\}
        .
    \label{eq:MARC_pErr}
    \end{eqnarray}}
\end{MyDefinition}

\begin{MyDefinition}
    \label{def:MABRCrate}
\textnormal{A source-channel rate $\kappa$ is said to be achievable for the MABRC if, for every $\epsilon > 0$, there exist positive integers $n_0, m_0$ such that for all $n>n_0, m>m_0,n \leq \kappa m  $ there exists an
$(m,n)$ code for which $P_e^{(m,n)} < \epsilon$.
    The same definition applies to the MARC.}
\end{MyDefinition}

\subsection{Fading Gaussian MARCs}
\label{sec:Fading Gaussian MARCs}
    The fading Gaussian MARC is depicted in Figure \ref{fig:FadeGaussMARC}. In fading Gaussian MARCs, the received signals at time $k$ at the receiver and at the relay are given by
    \begin{subequations}    \label{eq:FadeGaussMARC}
    \begin{eqnarray}
        Y_{k} &=& H_{11,k} X_{1,k} + H_{21,k} X_{2,k} + H_{31,k} X_{3,k} + Z_{k}, \\
        Y_{3,k} &=& H_{13,k} X_{1,k} + H_{23,k} X_{2,k} + Z_{3,k},
    \end{eqnarray}
    \end{subequations}

    \begin{figure}[t]
    \centering
    \scalebox{0.4}{\includegraphics{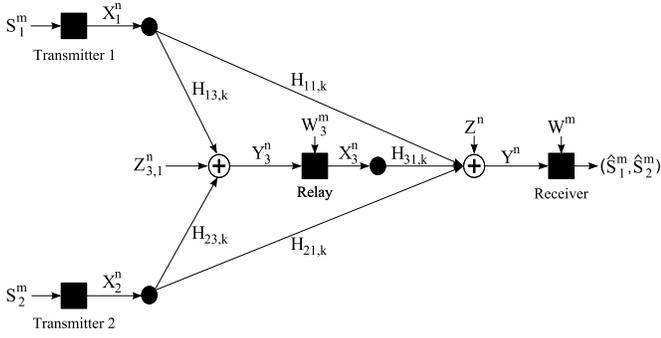}}
    \caption{Sources transmitted over fading Gaussian MARC with side information at the relay and destination.}
    \label{fig:FadeGaussMARC}
    \end{figure}

    \noindent for $k=1,\dots,n$, where $Z$ and $Z_3$ are independent of each other,
 i.i.d., circularly symmetric, complex Gaussian RVs, $\mCN(0,1)$.
 The channel input signals are subject to per-symbol average power constraints: $\E\{\left|X_i\right|^2 \} \leq P_i, i=1,2,3$.
 In the following it is assumed that the destination knows the instantaneous channel coefficients from the transmitters and the relay to itself, and the relay knows the instantaneous channel coefficients from
 both transmitters to itself.
This is referred to as receiver channel state information (Rx-CSI). Note that the destination does not have CSI on the links arriving at the relay, and that the relay does not have CSI on the links arriving at the destination.
 It is also assumed that the sources and the relay do not know the channel coefficients on their outgoing links (no transmitter CSI).
    We represent the CSI at the destination with $\tH_1 \triangleq \big( H_{11}, H_{21}, H_{31}\big)$, the CSI at the relay with $\tH_3 \triangleq \big(H_{13}, H_{23}\big)$, and define $\tH \triangleq \big\{H_{11}, H_{21}, H_{31}, H_{13}, H_{23} \big\}$.
    We consider two types of fading; phase fading and Rayleigh fading:
    \begin{enumerate}
    \item {\bf Phase fading channels}:
      The channel coefficients are characterized as $H_{li,k} = a_{li} e^{j\Theta_{li,k}}$, where $a_{li} \in \setR$ are constants representing the attenuation, and $\Theta_{li,k}$ are uniformly distributed over $[0,2\pi)$, i.i.d., and
      independent of each other and of the additive noises $Z_3$ and $Z$.

    \item {\bf Rayleigh fading channels}:
      The channel coefficients are characterized as $H_{li,k} = a_{li} U_{li,k}$, where $a_{li} \in \setR$ are constants representing the attenuation, and  $U_{li,k}$ are circularly symmetric, complex Gaussian RVs, $U_{li,k} \sim \CN(0,1)$, i.i.d., and independent of each
      other and of the additive noises $Z_3$ and $Z$. We define $\tU = \big\{U_{11}, U_{21}, U_{31}, U_{13}, U_{23} \big\}$.
    \end{enumerate}

    In both models the values of $a_{li}$ are fixed and known at all nodes. Observe that the magnitude of the phase-fading process is constant, $|H_{li,k}| = a_{li}$, but for Rayleigh fading the fading magnitude varies
between different time instances.

\section{An Achievable Source-Channel Rate Based on Operational Separation} \label{sec:sepBasedMARC}

In this section we derive an achievable source-channel rate for discrete memoryless (DM) MARCs and MABRCs using separate source and channel codes. The achievability is
established by using SW source coding, a channel coding scheme similar to the one detailed in \cite[Sections II, III]{Sankar:07}, and is based on DF relaying with irregular block Markov encoding,
successive decoding at the relay and backward decoding at the destination. The results are summarized in the following theorem:

\begin{theorem}
    \thmlabel{thm:separationCond}
    For DM MARCs and DM MABRCs with relay and receiver side information as defined in Section \ref{subsec:MARC}, source-channel rate $\kappa$ is achievable if,
    \begin{subequations} \label{bnd:sepBased}
    \begin{eqnarray}
        H(S_1|S_2,W_3) &<& \kappa I(X_1;Y_3|V_1, X_2, X_3) \label{bnd:rly_S1} \\
        H(S_2|S_1,W_3) &<& \kappa I(X_2;Y_3|V_2, X_1, X_3) \label{bnd:rly_S2} \\
        H(S_1,S_2|W_3) &<& \kappa I(X_1,X_2;Y_3|V_1, V_2, X_3) \label{bnd:rly_S1S2} \\
        H(S_1|S_2,W) &<& \kappa I(X_1,X_3;Y|V_2, X_2) \label{bnd:dst_S1} \\
        H(S_2|S_1,W) &<& \kappa I(X_2,X_3;Y|V_1, X_1) \label{bnd:dst_S2} \\
        H(S_1,S_2|W) &<& \kappa I(X_1,X_2,X_3;Y), \label{bnd:dst_S1S2}
    \end{eqnarray}
    \end{subequations}

    \noindent for some joint distribution $p(s_1,s_2,w_3,w,v_1,v_2,x_1,x_2,x_3)$ that factorizes as
    \begin{equation}
        p(s_1,s_2,w_3,w)p(v_1)p(x_1|v_1)p(v_2)p(x_2|v_2)p(x_3|v_1,v_2).
    \label{eq:SepJointDist}
    \end{equation}
\end{theorem}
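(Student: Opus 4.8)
The plan is to establish achievability through an operationally separate architecture in which Slepian--Wolf (SW) source codes produce bin indices that are fed into a decode-and-forward (DF) channel code employing irregular block-Markov encoding, so that the joint treatment of source and channel is confined to the decoders. I would transmit over $B$ blocks, each carrying one source block of $m$ samples over $n$ channel uses, so that the overall source-channel rate approaches $n/m \to \kappa$ as $B \to \infty$, the single block of block-Markov overhead becoming asymptotically negligible. The factorization in (\ref{eq:SepJointDist}) dictates the dependency structure of the codebooks: the sources' channel inputs depend on their private cooperative auxiliaries through $p(x_1|v_1)$ and $p(x_2|v_2)$, while the relay's input is a function of both auxiliaries through $p(x_3|v_1,v_2)$, and the channel variables are generated independently of the source realizations (only the bin indices play the role of messages).

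First, for the source layer, in each block I would apply SW binning to $(S_1^m,S_2^m)$: transmitter $i$ hashes its source block to a bin index, which is then split into a coarse part that the relay will re-encode and forward through the cooperative codewords $\mathbf{v}_1,\mathbf{v}_2$, and a residual part carried directly by $\mathbf{x}_i$. The conditional entropies $H(S_1|S_2,W_3)$, $H(S_2|S_1,W_3)$, and $H(S_1,S_2|W_3)$ are precisely the SW rate thresholds for lossless recovery of the pair at the relay equipped with side information $W_3^m$, and the corresponding quantities conditioned on $W$ are the thresholds at the destination with side information $W^m$. For the channel layer I would generate $\mathbf{v}_1,\mathbf{v}_2$ i.i.d.\ from $p(v_1),p(v_2)$, the inputs $\mathbf{x}_1,\mathbf{x}_2$ conditionally from $p(x_1|v_1),p(x_2|v_2)$, and the relay codeword $\mathbf{x}_3$ from $p(x_3|v_1,v_2)$; in block $b$ the sources superimpose the current block's residual information on the cooperative codewords indexing block $b-1$'s information, while the relay, having already decoded that information, transmits the matching $\mathbf{x}_3$.

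Decoding proceeds in two stages. The relay recovers $(S_1^m,S_2^m)$ from $(Y_3^n,W_3^m)$ in each block, having already resolved the cooperative codewords $V_1,V_2$ and its own input $X_3$; a union bound over the three error events (source $1$ alone, source $2$ alone, or both) then yields precisely the relay constraints (\ref{bnd:rly_S1})--(\ref{bnd:rly_S1S2}), with the mutual informations conditioned on $(V_1,X_2,X_3)$, $(V_2,X_1,X_3)$, and $(V_1,V_2,X_3)$, respectively. The destination then performs backward decoding from block $B$ down to block $1$: at each step it combines the source's direct transmission with the relay's coherent forwarding, so that the effective channel carries $(X_1,X_3)$ or $(X_2,X_3)$ jointly, and the analogous three error events give the destination constraints (\ref{bnd:dst_S1})--(\ref{bnd:dst_S1S2}). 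The factor $\kappa$ multiplying each channel mutual information reflects that each source sample may be allocated up to $\kappa$ channel uses under $n \le \kappa m$.

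The main obstacle I anticipate is the rate-matching bookkeeping that couples the SW source-coding rates to the channel-coding rates, and in particular justifying the irregular split of each bin index into a coarse cooperative part and a fine direct part so that the relay's constraints (which involve $W_3$) and the destination's constraints (which involve $W$ together with the relay's help) are simultaneously satisfiable; it is exactly this added freedom that lets irregular encoding outperform regular encoding in the presence of side information. Once the index split and the associated binning rates are fixed, the typicality error bounds are routine. I would conclude by letting $\epsilon \to 0$ and $m,n,B \to \infty$ with $n/m \to \kappa$, so that every $\kappa$ satisfying the strict inequalities (\ref{bnd:sepBased}) is achievable; the argument for the MABRC is identical, since under DF the relay already reconstructs both sources.
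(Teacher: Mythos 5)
There is a genuine gap in the encoding structure, and it is precisely the point on which the theorem turns. You propose that each transmitter compute a \emph{single} SW bin index for its source block and then split it into a coarse part (forwarded cooperatively via $V_i$ and $X_3$) and a residual part (carried directly by $X_i$). With that architecture the relay must recover the full index (coarse plus residual) in block $b$ in order to re-encode the coarse part in block $b+1$, and the destination must also recover the full index to run its SW decoder. Hence both the relay's and the destination's source decoders operate on the \emph{same} binning rate $R_i$, and you end up needing $\max\{H(S_1|S_2,W_3),H(S_1|S_2,W)\}<R_1$ together with $R_1$ below both channel rates, i.e.\ exactly the merged constraint
\begin{align}
&\max\big\{H(S_1|S_2,W_3),H(S_1|S_2,W)\big\}\nonumber\\
&\quad<\kappa\min\big\{I(X_1;Y_3|V_1,X_2,X_3),\,I(X_1,X_3;Y|V_2,X_2)\big\}\nonumber
\end{align}
that the paper identifies (Remark \ref{rem:regIrregDiff}) as the hallmark of \emph{regular} encoding. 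This is strictly weaker than the decoupled pair \eqref{bnd:rly_S1}, \eqref{bnd:dst_S1}: when the relay's side information is poor but the source--relay link is strong, while the destination's side information is good but its link is weak, the theorem's conditions hold and the merged condition fails. No split of a single index can repair this, because the total index rate seen by the two decoders is forced to coincide.

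The paper's construction avoids this by using \emph{two statistically independent binnings} of each source block, $f_i^r$ at rate $R_i^r$ and $f_i^d$ at rate $R_i^d$, with no relation between the two rates. In block $b$ transmitter $i$ sends $\xvec_i(\msg_{i,b}^r,\msg_{i,b-1}^d)$: the fresh layer carries the \emph{relay} bin index of the current block (decoded only by the relay, then combined with $W_3$ to reconstruct the sources), while the cloud-center layer $\vvec_i(\msg_{i,b-1}^d)$ carries the \emph{destination} bin index of the previous block. The key observation is that the relay never channel-decodes $\msg_{i}^d$ at all: having reconstructed $\svec_{i,b-1}$, it simply recomputes $f_i^d(\svec_{i,b-1})$ and transmits $\xvec_3(\tmsg_{1,b-1}^d,\tmsg_{2,b-1}^d)$ coherently with the sources. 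This is what decouples $R_i^r$ from $R_i^d$ and lets each be matched separately to $H(\cdot|\cdot,W_3)$ and $H(\cdot|\cdot,W)$. Your remaining ingredients --- SW binning, block-Markov superposition, successive decoding at the relay, backward decoding at the destination, and the union-bound/typicality error analysis --- match the paper, but the proof as proposed establishes only the regular-encoding region, not \eqref{bnd:sepBased}.
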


\begin{IEEEproof}
    The proof is given in Appendix \ref{annex:proofSeparation}.
\end{IEEEproof}

\subsection{Discussion}
\ifthenelse{\boolean{SquizFlag}}{}{}

\begin{remark}
    In \Thmref{thm:separationCond}, equations \eqref{bnd:rly_S1}--\eqref{bnd:rly_S1S2} are constraints for reliable decoding at the relay, while equations \eqref{bnd:dst_S1}--\eqref{bnd:dst_S1S2} are reliable decoding constraints at the destination.
\end{remark}

\begin{remark}\label{rem:regIrregDiff}
    In regular encoding, the codebooks at the sources and at the relay have the same cardinality, see for example \cite{Sankar:07}. Now, note that the achievable source-channel rate of \Thmref{thm:separationCond} is established by using two different Slepian-Wolf coding schemes at different coding rates:
    one for the relay and one for the destination.
    The main benefit of different encoding rates is that it allows adapting to the different quality of side information at the relay and destination.
    Since the rates are different, such encoding cannot be realized with regular encoding and requires an irregular coding scheme for the channel code.

    Had we applied regular encoding, it would have led to the merger of some of the constraints in \eqref{bnd:sepBased}, in order to force the binning rates to the relay and destination to be equal.
    For example, \eqref{bnd:rly_S1} and \eqref{bnd:dst_S1} would be merged into the constraint
    \begin{align}
        &\max \big\{ H(S_1|S_2,W_3), H(S_1|S_2,W) \big\} \nonumber \\
				& \quad < \kappa \min \big\{I(X_1;Y_3|V_1, X_2, X_3), I(X_1,X_3;Y|V_2, X_2)\big\}. \nonumber
    \end{align}

    Hence, regular encoding puts extra constraints on the rates.
    Accordingly, we conclude that irregular encoding allows higher achievable source-channel rates than regular encoding.
    When the relay and destination have the same side information ($W\!=\!W_3$) then the irregular regular encoding schemes achieve the same source-channel rate.
    This can be observesd by setting $W=W_3$ in the above equation, and in \eqref{bnd:rly_S1} and \eqref{bnd:dst_S1}.

    Finally, consider {\em regular encoding} in the case of a MARC. Here, the relay is not required to recover the source sequences.
    Therefore, regular encoding requires the merger of the corresponding  right-hand sides (RHSs) of the constraints \eqref{bnd:rly_S1}--\eqref{bnd:dst_S1S2}. For example, \eqref{bnd:rly_S1} and \eqref{bnd:dst_S1} are merged into the following single constraint
        \begin{align}
            & H(S_1|S_2,W) \nonumber \\
						& \quad <  \kappa \min \big\{I(X_1;Y_3|V_1, X_2, X_3), I(X_1,X_3;Y|V_2, X_2)\big\}. \nonumber
        \end{align}

    \noindent This shows that regular encoding is more restrictive than irregular encoding for MARCs as well.
\end{remark}

\section{Necessary Conditions on the Achievable Source-Channel Rate for Discrete Memoryless MARCs and MABRCs} \label{sec:outerBounds}

In this section we derive necessary conditions for the achievability of a source-channel rate $\kappa$ for MARCs and for MABRCs with correlated sources and side information at the relay and at the destination.
The conditions for the MARC are summarized in the following theorem:


\begin{theorem}
    \thmlabel{thm:OuterGeneral}
    Consider the transmission of arbitrarily correlated sources $S_1$ and $S_2$ over the DM MARC with relay side information $W_3$ and receiver side information $W$.
    Any achievable source-channel rate $\kappa$ must satisfy the following constraints:
\begin{subequations} \label{bnd:outr_general_dst}
\begin{eqnarray}
        H(S_1|S_2,W) &\le& \kappa I(X_1,X_3;Y|X_2) \label{bnd:outr_general_dst_S1} \\
        H(S_2|S_1,W) &\le& \kappa I(X_2,X_3;Y|X_1) \label{bnd:outr_general_dst_S2} \\
        H(S_1,S_2|W) &\le& \kappa I(X_1,X_2,X_3;Y), \label{bnd:outr_general_dst_S1S2}
\label{eq:outerGeneral}
\end{eqnarray}
\end{subequations}

    \noindent for some input distribution $p(x_1,x_2,x_3)$, and the constraints
    \begin{subequations} \label{bnd:outr_V_dst}
\begin{eqnarray}
        H(S_1|S_2,W,W_3) &\le& \kappa I(X_1;Y,Y_3|X_2,V) \label{bnd:outr_V_dst_S1} \\
        H(S_2|S_1,W,W_3) &\le& \kappa I(X_2;Y,Y_3|X_1,V) \label{bnd:outr_V_dst_S2} \\
        H(S_1,S_2|W,W_3) &\le& \kappa I(X_1,X_2;Y,Y_3|V), \label{bnd:outr_V_dst_S1S2}
\label{eq:outerV}
\end{eqnarray}
\end{subequations}
    \noindent for some input distribution $p(v)(x_1,x_2|v)p(x_3|v)$, with $\left| \mV \right| \leq 4$.
\end{theorem}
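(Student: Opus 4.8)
The plan is to establish both families of inequalities as cut-set-type converses. I would fix any source-channel rate $\kappa$ that is achievable (Definition~\ref{def:MABRCrate}), take a sequence of $(m,n)$ codes with $P_e^{(m,n)}\to 0$ and $n\le\kappa m$, invoke Fano's inequality at the destination, and single-letterize using the memoryless property \eqref{eq:MARCchanDist} together with the encoding constraints of Definition~\ref{def:MABRCcodeDef}. The two families correspond to the two sides of the relay: in the first the relay is grouped with the transmitters (so $Y_3,W_3$ are irrelevant), while in the second the relay is grouped with the destination (so a genie supplies $Y_3,W_3$ and an auxiliary $V$ appears).

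\emph{First family (the ``MAC cut'' at the destination).} For the sum constraint \eqref{bnd:outr_general_dst_S1S2}, Fano gives $H(S_1^m,S_2^m|Y^n,W^m)\le m\delta_m$ with $\delta_m\to0$; writing $mH(S_1,S_2|W)=I(S_1^m,S_2^m;Y^n|W^m)+H(S_1^m,S_2^m|Y^n,W^m)$ and upper bounding the first term by $I(S_1^m,S_2^m,W^m;Y^n)$ reduces the problem to a channel-input quantity. The key structural fact is the Markov chain $(S_1^m,S_2^m,W^m)-(X_1^n,X_2^n,X_3^n)-Y^n$, which follows from \eqref{eq:MARCchanDist} since the output depends on the sources and side information only through the inputs; the data-processing inequality then gives $I(S_1^m,S_2^m,W^m;Y^n)\le I(X_1^n,X_2^n,X_3^n;Y^n)$. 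I would single-letterize this with a time-sharing variable $Q$ uniform on $\{1,\dots,n\}$, use memorylessness to drop $Q$, and divide by $m$ (using $n/m\le\kappa$) to obtain \eqref{bnd:outr_general_dst_S1S2}. The individual constraints \eqref{bnd:outr_general_dst_S1}--\eqref{bnd:outr_general_dst_S2} follow the same template after supplying $S_2^m$ (resp.\ $S_1^m$) as a genie; since $X_2^n=f_2^{(m,n)}(S_2^m)$, the conditioning on $S_2^m$ forces $X_{2,k}$ into the conditioning and produces $I(X_1,X_3;Y|X_2)$.

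\emph{Second family (the ``broadcast cut'').} Here I would aid the destination with a genie supplying $(Y_3^n,W_3^m)$, so that Fano applied to this stronger decoder gives $H(S_1^m,S_2^m|Y^n,Y_3^n,W^m,W_3^m)\le m\delta_m$. The crucial choice is the auxiliary $V_k\triangleq(Y^{k-1},Y_3^{k-1},W^m,W_3^m)$: because the relay is causal, $X_{3,k}=f_{3,k}^{(m,n)}(Y_3^{k-1},W_3^m)$ is a deterministic function of $V_k$, which is precisely what forces the single-letter law to factor as $p(v)p(x_1,x_2|v)p(x_3|v)$, i.e.\ $X_3\perp(X_1,X_2)\,|\,V$. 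Expanding $I(S_1^m,S_2^m;Y^n,Y_3^n|W^m,W_3^m)=\sum_k I(S_1^m,S_2^m;Y_k,Y_{3,k}|V_k)$ and applying the per-letter Markov chain $(S_1^m,S_2^m)-(X_{1,k},X_{2,k},V_k)-(Y_k,Y_{3,k})$ yields $\le\sum_k I(X_{1,k},X_{2,k};Y_k,Y_{3,k}|V_k)$, which time-shares into \eqref{bnd:outr_V_dst_S1S2} with $V=(V_Q,Q)$; note $Q$ must be kept inside $V$ since $f_{3,k}^{(m,n)}$ depends on $k$. For \eqref{bnd:outr_V_dst_S1}--\eqref{bnd:outr_V_dst_S2} I would keep the \emph{same} $V_k$ and genie-aid with $S_2^m$ (resp.\ $S_1^m$); the one extra observation is that since $(Y_k,Y_{3,k})\perp S_2^m\,|\,(X_{1,k},X_{2,k},V_k)$, conditioning on the residual source information only decreases the relevant mutual information, so the bound lands on $I(X_1;Y,Y_3|X_2,V)$ with a common auxiliary. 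Finally, the cardinality bound $|\mV|\le4$ follows from the Fenchel--Eggleston--Carath\'eodory support lemma, retaining enough values of $V$ to preserve the input marginals and the three right-hand-side mutual-information functionals.

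\emph{Main obstacle.} The routine part is Fano plus single-letterization; the delicate part is the second family, namely choosing $V_k$ so that the \emph{causal} relay map $f_{3,k}^{(m,n)}$ makes $X_{3,k}$ measurable with respect to $V_k$ (yielding the factorization $p(v)p(x_1,x_2|v)p(x_3|v)$), while simultaneously verifying that each genie-aiding and extra-conditioning step preserves the direction of every inequality, so that a single auxiliary $V$ serves all three constraints.
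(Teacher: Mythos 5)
Your proposal is correct and follows essentially the same route as the paper: Fano at the destination, genie-aiding with the other source to force $X_{2,k}$ (a deterministic function of $S_2^m$) into the conditioning, the per-letter Markov chain from the memoryless channel, and a causal auxiliary that makes $X_{3,k}$ measurable so the law factors as $p(v)p(x_1,x_2|v)p(x_3|v)$. The only cosmetic difference is that the paper takes the leaner auxiliary $V_k=(Y_{3,1}^{k-1},W_3^m)$ (recovering the $Y^{k-1},W^m$ terms via ``conditioning reduces entropy'') and single-letterizes the first family via concavity of mutual information rather than an explicit time-sharing variable; both are equivalent, and your explicit treatment of the $|\mV|\le 4$ support lemma is a detail the paper leaves unproved.
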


\begin{IEEEproof}
    The proof is given below in Subsection \ref{subsec:proofOuter}.
\end{IEEEproof}

\begin{remark}
    The RHS of the constraints in \eqref{bnd:outr_V_dst} are similar to the broadcast bound\footnote{Here we use the common terminology for the classic relay channel in which the term $I(X, X_1;Y)$ is referred
    to as the MAC bound while the term $I(X;Y, Y_1|X_1)$ is called the broadcast bound \cite[Ch. 16]{KimElGamal:12}.}
 when assuming that all relay information is available at the destination.
\end{remark}

\begin{remark}
    Setting $\mX_2=\mS_2 = \phi$, constraints in \eqref{bnd:outr_general_dst} specialize to the converse of \cite[Thm. 3.1]{ErkipGunduz:07} for the relay~channel.
\end{remark}

\begin{theorem}
    \thmlabel{thm:OuterGeneralMABRC}

    Consider the transmission of arbitrarily correlated sources $S_1$ and $S_2$ over the DM MABRC with relay side information $W_3$ and receiver side information $W$.
    Any achievable source-channel rate $\kappa$ must satisfy the constraints \eqref{bnd:outr_general_dst} as well as the following constraints:
\begin{subequations} \label{bnd:outr_general_rly}
    \begin{eqnarray}
        H(S_1|S_2,W_3) &\le& \kappa I(X_1;Y_3|X_2, X_3) \label{bnd:outr_general_rly_S1} \\
        H(S_2|S_1,W_3) &\le& \kappa I(X_2;Y_3|X_1, X_3) \label{bnd:outr_general_rly_S2} \\
        H(S_1,S_2|W_3) &\le& \kappa I(X_1,X_2;Y_3| X_3) \label{bnd:outr_general_rly_S1S2},
\label{eq:outerRelayGeneral}
\end{eqnarray}
\end{subequations}

\noindent for some input distribution $p(x_1,x_2,x_3)$.

\end{theorem}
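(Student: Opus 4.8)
The plan is to establish the converse for the MABRC by showing that any sequence of codes with vanishing error probability must satisfy both the destination constraints \eqref{bnd:outr_general_dst} (already proven in \Thmref{thm:OuterGeneral}) and the new relay constraints \eqref{bnd:outr_general_rly}. Since the destination bounds carry over verbatim from the MARC analysis, the only work is to derive \eqref{bnd:outr_general_rly}, which capture the requirement that the relay itself must losslessly reconstruct both sources. The starting point is Fano's inequality: because $P_e^{(m,n)} \to 0$ forces $(\tilde{S}_1^m, \tilde{S}_2^m) = (S_1^m, S_2^m)$ with high probability, we have $H(S_1^m, S_2^m | Y_3^n, W_3^m) \le m \delta_m$ with $\delta_m \to 0$, and likewise for the individual and conditional versions needed for each of the three bounds.

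First I would bound the joint term \eqref{bnd:outr_general_rly_S1S2}. Writing $m H(S_1,S_2|W_3) = H(S_1^m,S_2^m|W_3^m)$ (using the i.i.d. assumption on the source/side-information tuples), I would apply Fano to get $H(S_1^m,S_2^m|W_3^m) \le I(S_1^m,S_2^m; Y_3^n | W_3^m) + m\delta_m$. The key is then to convert the source mutual information into a channel mutual information: since the relay output $Y_3^n$ depends on the sources only through the channel inputs $(X_1^n, X_2^n, X_3^n)$, and $W_3^m$ is part of the conditioning, I would upper bound $I(S_1^m,S_2^m; Y_3^n | W_3^m) \le I(X_1^n, X_2^n; Y_3^n | X_3^n)$, conditioning on $X_3^n$ which is a legitimate step because the relay input is a causal function of past relay outputs and $W_3^m$. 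Then the standard single-letterization chain rule expansion, together with the memoryless property \eqref{eq:MARCchanDist}, yields $I(X_1^n,X_2^n;Y_3^n|X_3^n) \le \sum_{k=1}^n I(X_{1,k},X_{2,k};Y_{3,k}|X_{3,k})$, and a time-sharing/convexity argument introducing a uniform time-sharing variable gives the single-letter bound $n I(X_1,X_2;Y_3|X_3)$. Dividing through by $m$ and using $n \le \kappa m$ produces \eqref{bnd:outr_general_rly_S1S2}.

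The individual-source bounds \eqref{bnd:outr_general_rly_S1} and \eqref{bnd:outr_general_rly_S2} follow the same template, with the refinement that I condition additionally on the other source and its channel input. For \eqref{bnd:outr_general_rly_S1}, I would start from $m H(S_1|S_2,W_3) = H(S_1^m | S_2^m, W_3^m)$, apply Fano conditioned on $S_2^m$, and bound the resulting source-information term by $I(X_1^n; Y_3^n | X_2^n, X_3^n)$; the extra conditioning on $X_2^n$ is valid because $X_2^n$ is a deterministic function of $S_2^m$. Single-letterization then gives the RHS of \eqref{bnd:outr_general_rly_S1}. The main obstacle, as in most multiterminal converses, is justifying the conditioning on $X_3^n$ in the mutual-information upper bounds: one must verify that introducing $X_3^n$ (and, for the individual bounds, $X_{j}^n$) into the conditioning does not decrease the relevant information and is compatible with the causal structure \eqref{eq:MABRC_relayEncFunc} of the relay encoder, so that the Markov relationships implied by \eqref{eq:MARCchanDist} hold after conditioning. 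Care is also needed to ensure the single-letter input distribution is a genuine product-form $p(x_1,x_2,x_3)$ with no auxiliary variable, which is why \eqref{bnd:outr_general_rly} needs no cardinality bound — unlike \eqref{bnd:outr_V_dst}, there is no auxiliary $V$ to introduce here because the relay input at each time can be absorbed into the conditioning without breaking the memoryless factorization.
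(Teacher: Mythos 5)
Your high-level template --- Fano's inequality applied to the relay's reconstruction, single-letterization via the memoryless property \eqref{eq:MARCchanDist}, and a final concavity/time-sharing argument over $p(x_1,x_2,x_3)$ --- is the same as the paper's, whose proof of \eqref{bnd:outr_general_rly} mirrors the chain \eqref{eq:generalDestSingleChain_11} with $Y_3$ in place of $Y$. The gap is in the specific intermediate you route through. You first claim $I(S_1^m,S_2^m;Y_3^n\mid W_3^m)\le I(X_1^n,X_2^n;Y_3^n\mid X_3^n)$ and then that ``the standard single-letterization chain rule expansion, together with the memoryless property'' gives $I(X_1^n,X_2^n;Y_3^n\mid X_3^n)\le\sum_{k=1}^n I(X_{1,k},X_{2,k};Y_{3,k}\mid X_{3,k})$. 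Neither step follows from the cited tools once the whole block $X_3^n$ sits in the conditioning. Expanding the second quantity by the chain rule, the term at time $k$ is $H(Y_{3,k}\mid Y_3^{k-1},X_3^n)-H(Y_{3,k}\mid Y_3^{k-1},X_3^n,X_1^n,X_2^n)$, and single-letterizing requires $H(Y_{3,k}\mid Y_3^{k-1},X_3^n,X_1^n,X_2^n)\ge H(Y_{3,k}\mid X_{1,k},X_{2,k},X_{3,k})$, i.e., the Markov chain $Y_{3,k}-(X_{1,k},X_{2,k},X_{3,k})-(Y_3^{k-1},X_3^n,X_1^n,X_2^n)$. This chain fails: by \eqref{eq:MABRC_relayEncFunc} each of $X_{3,k+1},\dots,X_{3,n}$ is a function of $(Y_{3,1}^{k},\dots,W_3^m)$ and hence of $Y_{3,k}$ itself, so conditioning on future relay inputs leaks information about $Y_{3,k}$ beyond what the current inputs provide, and ``conditioning reduces entropy'' pushes the inequality in the wrong direction. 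You explicitly name ``justifying the conditioning on $X_3^n$'' as the main obstacle, but you leave it unresolved, so the proof as written has a hole at exactly that point.

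The repair --- and what the paper actually does --- is to never condition on more than the \emph{current} relay input, obtained implicitly. One expands $I(S_1^m,W^m;Y_3^n\mid S_2^m,W_3^m)=\sum_{k=1}^n I(S_1^m,W^m;Y_{3,k}\mid Y_{3,1}^{k-1},S_2^m,W_3^m)$ directly. Since $X_{3,k}$ is a deterministic function of $(W_3^m,Y_{3,1}^{k-1})$ and $X_{2,k}$ of $S_2^m$, the conditioning at time $k$ already determines $(X_{2,k},X_{3,k})$ and says nothing about future relay inputs; thus $H(Y_{3,k}\mid Y_{3,1}^{k-1},S_2^m,W_3^m)\le H(Y_{3,k}\mid X_{2,k},X_{3,k})$ by conditioning-reduces-entropy, while \eqref{eq:MARCchanDist} gives $H(Y_{3,k}\mid Y_{3,1}^{k-1},S_1^m,S_2^m,W_3^m,W^m)=H(Y_{3,k}\mid X_{1,k},X_{2,k},X_{3,k})$. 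Summing over $k$ yields $mH(S_1\mid S_2,W_3)-m\delta(P_e^{(m,n)})\le\sum_{k=1}^n I(X_{1,k};Y_{3,k}\mid X_{2,k},X_{3,k})$ with no block conditioning ever appearing; the other two constraints and your concluding concavity and limit arguments then go through as you describe. Your endpoints are correct --- only the detour through $I(X_1^n,X_2^n;Y_3^n\mid X_3^n)$ needs to be removed.
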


\ifthenelse{\boolean{SquizFlag}}{}{}

\ifthenelse{\boolean{SquizFlag}}{}{}


\subsection{Proof of Theorem \thmref{thm:OuterGeneral}} \label{subsec:proofOuter}

Let $P_e^{(m,n)} \rightarrow 0$ as $n,m \rightarrow \infty$, for a sequence of encoders and decoders $f_1^{(m,n)}, f_2^{(m,n)}, f_3^{(m,n)}, g^{(m,n)}$, such that $\kappa = n/m$ is fixed.
    By Fano's inequality, \cite[Thm. 2.11.1]{cover-thomas:it-book}, we have
    \begin{eqnarray}
        H(S_1^m, S_2^m| \hat{S}_{1}^m, \hat{S}_{2}^m) &\leq& 1 + m P_e^{(m,n)} \log \left| \mS_1 \times \mS_2 \right| \nonumber \\
        &\triangleq& m\delta(P_e^{(m,n)}),
    \label{eq:FanoDest}
    \end{eqnarray}

    \noindent where $\delta(x)$ is a non-negative function that approaches $\frac{1}{m}$ as $x \rightarrow 0$.
    Observe that
    \begin{align}
        H(S_1^m, S_2^m| \hat{S}_{1}^m, \hat{S}_{2}^m) & \stackrel{(a)}{\geq} H(S_1^m, S_2^m| Y^n, W^m, \hat{S}_{1}^m, \hat{S}_{2}^m) \nonumber \\
				& \stackrel{(b)}{\geq} H(S_1^m | Y^n, W^m, S_2^m) \label{eq:generalDestEntropyBasic},
    \end{align}

    \noindent where (a) follows from the fact that conditioning reduces entropy \cite[Thm. 2.6.5]{cover-thomas:it-book}; and (b) follows from the fact that
$(\hat{S}_{1}^m, \hat{S}_{2}^m)$ is a function of $(Y^n,W^m)$.

\subsubsection{Proof of constraints \eqref{bnd:outr_general_dst}} \label{subsubsec:proofOuterGeneral}
    Constraint \eqref{bnd:outr_general_dst_S1} is a consequence of the following chain of inequalities:
    \begin{align}
        & \sum_{k=1}^{n}{I(X_{1,k}, X_{3,k};Y_{k}| X_{2,k})} \nonumber \\
        & \mspace{15mu} \stackrel{(a)}{=} \sum_{k=1}^{n}{\Big[H(Y_{k}|X_{2,k})} - H(Y_{k}|S_1^m, S_2^m, W_3^m, W^m,  \nonumber \\
				& \mspace{220mu} X_{1,1}^{k}, X_{2,1}^{k}, X_{3,1}^{k}, Y_{3,1}^{k-1}, Y^{k-1})\Big] \nonumber \\
%
        & \mspace{15mu} \stackrel{(b)}{\geq} \sum_{k=1}^{n}{\Big[H(Y_{k}|S_2^m, W^m, Y^{k-1},X_{2,k})} - \nonumber \\
				& \mspace{140mu} H(Y_{k}|S_1^m, S_2^m, W_3^m, W^m, Y^{k-1})\Big] \nonumber \\
%
%
        & \mspace{15mu} \stackrel{(c)}{=} I(S_1^m, W_3^m; Y^n|S_2^m, W^m) \nonumber \\
%
%
        & \mspace{15mu} \stackrel{(d)}{\ge} H(S_1^m | S_2^m, W^m) - H(S_1^m |Y^n, S_2^m, W^m) \nonumber \\
%
        & \mspace{15mu} \stackrel{(e)}{\geq} mH(S_1 | S_2, W) - m\delta(P_e^{(m,n)})         \label{eq:generalDestSingleChain_11},
    \end{align}

    \noindent where (a) follows from the memoryless channel assumption (see \eqref{eq:MARCchanDist}) and the Markov relation $(S_1^m, S_2^m, W_3^m, W^m)-(X_{1,1}^{k}, X_{2,1}^{k}, X_{3,1}^{k}, Y_{3,1}^{k-1}, Y^{k-1})-Y_{k}$
    (see \cite{Massey:90}); (b) follows from the fact that conditioning reduces entropy;
(c) follows from the fact that $X_{2,k}$ is a deterministic function of $S_2^m$;
 (d) follows from the non-negativity of the mutual information; and (e) follows from the
memoryless sources and side information assumption, and from \eqref{eq:FanoDest}--\eqref{eq:generalDestEntropyBasic}.

    Following arguments similar to those that led to \eqref{eq:generalDestSingleChain_11} we can also show
    \begin{subequations}
    \begin{align}
        & \sum_{k=1}^{n}{I(X_{2,k}, X_{3,k};Y_{k} | X_{1,k})} \nonumber \\
				& \qquad \qquad \geq mH(S_2|S_1, W) - m\delta(P_e^{(m,n)}) \label{eq:generalDestSingleChain2} \\
        & \sum_{k=1}^{n}{I(X_{1,k},X_{2,k}, X_{3,k};Y_{k})} \nonumber \\
				& \qquad \qquad \geq mH(S_1, S_2| W) - m\delta(P_e^{(m,n)}).
    \label{eq:generalDestJointChain}
    \end{align}
    \end{subequations}

\noindent We now recall that the mutual information is concave in the set
of joint distributions $p(x_1,x_2,x_3)$, \cite[Thm. 2.7.4]{cover-thomas:it-book}. Thus, taking the limit as $m,n \rightarrow \infty$ and letting
$P_e^{(m,n)} \rightarrow 0$, \eqref{eq:generalDestSingleChain_11}, \eqref{eq:generalDestSingleChain2} and \eqref{eq:generalDestJointChain} result in the constraints in \eqref{bnd:outr_general_dst}.

\subsubsection{Proof of constraints \eqref{bnd:outr_V_dst}} \label{subsubsec:proofOuterV}
We begin by defining the following auxiliary RV:
\begin{equation}
    V_k \triangleq (Y_{3,1}^{k-1}, W_3^m), \quad k=1,2,\dots,n.
\label{eq:VRelayAux}
\end{equation}

    Constraint \eqref{bnd:outr_V_dst_S1} is a consequence of the following chain of inequalities:
    \begin{align}
        & \sum_{k=1}^{n}{I(X_{1,k};Y_{k}, Y_{3,k}| X_{2,k}, V_{k})} \nonumber \\
%
        & \mspace{15mu} \stackrel{(a)}{=} \sum_{k=1}^{n}{\Big[H(Y_{k}, Y_{3,k}|X_{2,k}, Y_{3,1}^{k-1}, W_3^m)} \nonumber \\
				& \mspace{80mu}-  H(Y_{k}, Y_{3,k}|X_{1}^{k}, X_{2}^{k}, X_{3}^{k}, Y_{3,1}^{k-1}, Y^{k-1}, W_3^m)\Big] \nonumber \\
%
%
        & \mspace{15mu} \stackrel{(b)}{=} \sum_{k=1}^{n}{\Big[H(Y_{k}, Y_{3,k}|X_{2,k}, Y_{3,1}^{k-1}, Y^{k-1},W_3^m, W^m, S_2^m)} \nonumber \\
        & \mspace{80mu} - H(Y_{k}, Y_{3,k}|X_{1}^{k}, X_{2}^{k}, X_{3}^{k}, \nonumber \\
				& \mspace{150mu} Y_{3,1}^{k-1}, Y^{k-1}, W_3^m, W^m, S_1^m, S_2^m)\Big] \nonumber \\
%
%
        & \mspace{15mu} \stackrel{(c)}{\geq} \sum_{k=1}^{n}{\Big[H(Y_{k}, Y_{3,k}| Y_{3,1}^{k-1}, Y^{k-1},W_3^m, W^m, S_2^m)} \nonumber \\
				& \mspace{80mu} - H(Y_{k}, Y_{3,k}|Y_{3,1}^{k-1}, Y^{k-1}, W_3^m, W^m, S_1^m, S_2^m)\Big] \nonumber \\
%
%
        & \mspace{15mu} = I(S_1^m;Y^{n}, Y_{3}^n|W_3^m, W^m, S_2^m) \nonumber \\
%
%
        & \mspace{15mu} \ge H(S_1^m|W_3^m, W^m, S_2^m) - H(S_1^m|Y^n, W_3^m, W^m, S_2^m) \nonumber \\
%
%
        & \mspace{15mu} \stackrel{(d)}{\geq} mH(S_1 | S_2, W, W_3) - m\delta(P_e^{(m,n)}) \label{eq:VDestSingleChain},
    \end{align}

    \noindent where (a) follows from \eqref{eq:VRelayAux}, the fact that $X_{3,1}^k$ is a deterministic function of $(Y_{3,1}^{k-1}, W_3^m)$, and the memoryless channel assumption, (see \eqref{eq:MARCchanDist});
    (b) follows from the fact that conditioning reduces entropy and causality, \cite{Massey:90};
(c) follows from the fact that $X_{2,k}$ is a deterministic function of $S_2^m$, and conditioning reduces entropy; (d) follows again from the fact that conditioning reduces entropy, the memoryless sources
and side information assumption, and \eqref{eq:FanoDest}--\eqref{eq:generalDestEntropyBasic}.

Following arguments similar to those that led to \eqref{eq:VDestSingleChain} we can also show that
    \begin{subequations}
    \begin{align}
        & \sum_{k=1}^{n}{I(X_{2,k};Y_{k}, Y_{3,k}| X_{1,k}, V_{k})} \nonumber \\
				& \qquad \qquad \geq mH(S_2 | S_1, W, W_3) - m\delta(P_e^{(m,n)}) \label{eq:VDestSingleChain2} \\
				& \sum_{k=1}^{n}{I(X_{1,k}, X_{2,k};Y_{k}, Y_{3,k}| V_{k})} \nonumber \\
				& \qquad \qquad \geq mH(S_1,S_2| W, W_3) - m\delta(P_e^{(m,n)}) \label{eq:VDestJointChain}.
    \end{align}
    \end{subequations}

    Next we introduce the time-sharing RV $Q$, independent of all other RVs, and we have $Q=k$ with probability $1/n, k \in \{1,2,\dots,n \}$. We can write
    \begin{align}
        & \hspace{-0.5cm} \frac{1}{n}\sum_{k=1}^{n}{I(X_{1,k};Y_{k}, Y_{3,k}| X_{2,k}, V_{k})} \nonumber \\
				& \qquad \qquad = I(X_{1,Q};Y_{Q}, Y_{3,Q}| X_{2,Q}, V_{Q})  \nonumber \\
				& \qquad \qquad = I(X_{1};Y, Y_{3}| X_{2}, V),
    \label{eq:VouterConcavity}
    \end{align}

    \noindent where $X_1 \triangleq X_{1,Q}$, $X_2 \triangleq X_{2,Q}$, $Y \triangleq Y_{Q}$, $Y_3 \triangleq Y_{3,Q}$ and $V \triangleq (V_Q, Q)$.
    Since $(X_{1,k}, X_{2,k})$ and $X_{3,k}$ are independent given $V_k = (Y_3^{k-1}, W_3^m)$, for $\bar{v} = (v,k)$ we have
    \begin{align}
        & \mspace{-10mu} \Pr \{ X_1=x_1,X_2=x_2,X_3=x_3|V=\bar{v} \}  \nonumber \\
				& \mspace{2mu}  = \Pr \{\mspace{-2mu} X_1 \mspace{-1mu}= \mspace{-1mu}x_1, X_2 \mspace{-1mu}= \mspace{-1mu}x_2|V \mspace{-1mu} = \mspace{-1mu} \bar{v} \} \mspace{-1mu} \Pr \{X_3 \mspace{-1mu}= \mspace{-1mu}x_3|V \mspace{-1mu}= \mspace{-1mu}\bar{v} \}.
    \label{eq:VouterDist}
    \end{align}

\noindent Hence, the probability distribution is of the form given in \Thmref{thm:OuterGeneral} for the constraints in \eqref{bnd:outr_V_dst}.
%
%
\noindent Finally, repeating the steps leading to \eqref{eq:VouterConcavity} for \eqref{eq:VDestSingleChain2} and \eqref{eq:VDestJointChain}, and taking the limit $m,n \rightarrow \infty$, leads to the constraints in \eqref{bnd:outr_V_dst}.


\section{Optimality of Source-Channel Separation for Fading Gaussian MARCs and MABRCs}
\label{sec:sepOpt}

In this section we study source-channel coding for fading Gaussian MARCs and MABRCs. We derive conditions for the optimality of source-channel separation
for the phase and Rayleigh fading models. We begin by considering phase fading Gaussian MARCs, defined in  \eqref{eq:FadeGaussMARC}. The result is stated in the following theorem:

    \begin{theorem} \thmlabel{thm:PhaseGaussMARC}
        Consider the transmission of arbitrarily correlated sources $S_1$ and $S_2$ over a phase fading Gaussian MARC with receiver side information $W$ and relay side information $W_3$. Let the channel inputs be subject to
        per-symbol power constraints specified by
        \begin{equation}
            \E\{\left|X_i\right|^2 \} \leq P_i, \qquad i=1,2,3,
        \label{eq:phasePowerConstraint}
        \end{equation}

        \noindent and the channel coefficients and power constraints $\{P_i\}_{i=1}^{3}$ satisfy
        \begin{subequations} \label{eq:phaseRelayDecConstraints}
        \begin{eqnarray}
            a_{11}^2 P_1 + a_{31}^2 P_3 &\leq& a_{13}^2 P_1 \label{eq:phaseRelayDecConstraints_R1} \\
            a_{21}^2 P_2 + a_{31}^2 P_3 &\leq& a_{23}^2 P_2 \label{eq:phaseRelayDecConstraints_R2} \\
            a_{11}^2 P_1 + a_{21}^2 P_2 + a_{31}^2 P_3 &\leq& a_{13}^2 P_1 + a_{23}^2 P_2.
        \label{eq:phaseRelayDecConstraints_R1R2}
        \end{eqnarray}
        \end{subequations}

%
        \noindent A source-channel rate $\kappa$ is achievable if
        \begin{subequations} \label{eq:phaseAchievConstraints}
        \begin{align}
            H(S_1|S_2,W) &< \kappa \log_2(1 + a_{11}^2 P_1 + a_{31}^2 P_3) \label{eq:phaseAchievConstraints_S1}    \\
            H(S_2|S_1,W) &< \kappa \log_2(1 + a_{21}^2 P_2 + a_{31}^2 P_3) \label{eq:phaseAchievConstraints_S2}    \\
            H(S_1,S_2|W) &< \kappa \log_2(1 + a_{11}^2 P_1 + a_{21}^2 P_2 + a_{31}^2 P_3). \label{eq:phaseAchievConstraints_S1S2}
        \end{align}
        \end{subequations}

        \noindent Conversely, if source-channel rate $\kappa$ is achievable, then conditions \eqref{eq:phaseAchievConstraints} are satisfied with $<$ replaced by $\leq$.
    \end{theorem}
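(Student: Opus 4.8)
The plan is to establish both directions by specializing the general results of Sections~\ref{sec:sepBasedMARC} and~\ref{sec:outerBounds} to the phase fading model, the crucial ingredient in both cases being that the independent, uniformly distributed fading phases preclude any coherent-combining gain, so that independent, circularly symmetric complex Gaussian inputs are simultaneously optimal.

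For achievability I would apply \Thmref{thm:separationCond} with a phase-fading-matched input distribution: take $V_1,V_2$ to be the independent cooperative auxiliaries, let $X_1\sim\CN(0,P_1)$ and $X_2\sim\CN(0,P_2)$ be generated independently of the auxiliaries, and let $X_3$ be a power-$P_3$ Gaussian codeword indexed by $(V_1,V_2)$, so that the factorization \eqref{eq:SepJointDist} holds with $\E\{|X_i|^2\}=P_i$. The key computational step is to evaluate the six mutual informations in \eqref{bnd:sepBased} under this choice, conditioning throughout on the receiver CSI $\tH_1$ and the relay CSI $\tH_3$. Since the inputs are chosen independent, the received powers add incoherently, so the destination terms reduce to $\kappa\log_2(1+a_{11}^2P_1+a_{31}^2P_3)$, $\kappa\log_2(1+a_{21}^2P_2+a_{31}^2P_3)$ and $\kappa\log_2(1+a_{11}^2P_1+a_{21}^2P_2+a_{31}^2P_3)$, while the relay terms reduce to $\kappa\log_2(1+a_{13}^2P_1)$, $\kappa\log_2(1+a_{23}^2P_2)$ and $\kappa\log_2(1+a_{13}^2P_1+a_{23}^2P_2)$. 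I would then invoke the hypotheses \eqref{eq:phaseRelayDecConstraints}, which state precisely that each relay argument dominates the corresponding destination argument, to conclude that the relay constraints \eqref{bnd:rly_S1}--\eqref{bnd:rly_S1S2} are rendered inactive relative to \eqref{bnd:dst_S1}--\eqref{bnd:dst_S1S2}; achievability of \eqref{eq:phaseAchievConstraints} then follows directly.

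For the converse I would start from the necessary conditions \eqref{bnd:outr_general_dst} of \Thmref{thm:OuterGeneral}, which already have exactly the shapes $I(X_1,X_3;Y|X_2)$, $I(X_2,X_3;Y|X_1)$ and $I(X_1,X_2,X_3;Y)$, and show that for the phase fading channel each is maximized, over all input distributions $p(x_1,x_2,x_3)$ meeting the power constraints \eqref{eq:phasePowerConstraint}, by independent Gaussian inputs, yielding exactly the right-hand sides of \eqref{eq:phaseAchievConstraints}. Concretely, writing $I(X_1,X_3;Y|X_2)=I(X_1,X_3;Y|X_2,\tH_1)$ since the inputs are independent of the CSI, I would bound $h(Y|X_2,\tH_1)$ using that the Gaussian distribution maximizes differential entropy for a given second moment, and then average the resulting logarithm over the fading phases; by Jensen's inequality together with $\E\{H_{11}H_{31}^\ast\}=0$ the cross moment $\E\{X_1X_3^\ast\}$ contributes nothing, leaving $h(Y|X_2,\tH_1)\le\log_2\big(\pi e(1+a_{11}^2P_1+a_{31}^2P_3)\big)$, while $h(Y|X_1,X_2,X_3,\tH_1)=\log_2(\pi e)$. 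Applying the same argument to the other two terms and substituting back into \eqref{bnd:outr_general_dst} gives \eqref{eq:phaseAchievConstraints} with $<$ replaced by $\leq$.

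The main obstacle is this phase-fading maximization lemma, i.e.\ proving rigorously that the random phases eliminate any benefit from correlating the inputs, so that independent Gaussians are optimal for all the relevant mutual informations at once. The delicate points are the correct handling of the receiver- and relay-side CSI (conditioning on $\tH_1$ and $\tH_3$ before invoking the entropy-maximization and Jensen steps), verifying that a single input choice attains all three converse bounds, and, for achievability, checking that the auxiliary-variable structure in \eqref{eq:SepJointDist} indeed lets $X_3$ retain full power $P_3$ under the conditionings appearing in \eqref{bnd:dst_S1}--\eqref{bnd:dst_S1S2}.
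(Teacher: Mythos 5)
Your converse is essentially the paper's: the paper also starts from \eqref{bnd:outr_general_dst} of \Thmref{thm:OuterGeneral} and uses the fact that, for phase fading with Rx-CSI and no Tx-CSI, all three mutual informations are simultaneously maximized by mutually independent $\mCN(0,P_i)$ inputs; the only difference is that the paper cites \cite[Thm.~8]{Kramer:2005} for this maximization lemma, while you sketch the max-entropy/Jensen/zero-cross-moment argument directly. That part is sound, with the minor caveat that the cross term is killed only after averaging the logarithm over the fading: conditioned on $\tH_1$ the term $2\Re\{h_{11}h_{31}^*\E\{X_1X_3^*\}\}$ does not vanish, so you must apply Jensen over $\tH_1$ and then use $\E\{H_{11}H_{31}^*\}=0$.

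The achievability direction has a genuine gap, and it is the one the paper explicitly flags in the remark following its proof: \Thmref{thm:separationCond} cannot be specialized to yield \eqref{eq:phaseAchievConstraints}. Two separate problems arise with your instantiation. First, keeping $V_1,V_2$ nontrivial while decoupling $X_1,X_2$ from them does not make the destination bounds evaluate to $\log_2(1+a_{11}^2P_1+a_{31}^2P_3)$: constraint \eqref{bnd:dst_S1} is $I(X_1,X_3;Y|V_2,X_2)$, and since $X_3$ must depend on $V_2$ (it carries user 2's destination bin index), $\mathrm{Var}(X_3|V_2)<P_3$, so $h(Y|V_2,X_2,\tH_1)$ is strictly below $\E\{\log_2(\pi e(1+a_{11}^2P_1+a_{31}^2P_3))\}$ and the resulting rate is strictly smaller than you claim; symmetrically for \eqref{bnd:dst_S2}. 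Forcing $V_1=V_2=\phi$ to avoid this breaks the destination decoding rule \eqref{eq:DestChanDecType}, which is precisely why the paper constructs the channel code from \cite[Appendix A]{Ron:2012} rather than invoking \Thmref{thm:separationCond} or \cite[Thm.~9]{Kramer:2005}. Second, and independently, the scheme behind \Thmref{thm:separationCond} requires the relay to reconstruct the source sequences (it must re-bin them with $f_i^d$), so its relay constraints compare $H(S_1|S_2,W_3)$ — not $H(S_1|S_2,W)$ — against the relay mutual information. Conditions \eqref{eq:phaseRelayDecConstraints} order only the channel-side quantities; they do not imply $H(S_1|S_2,W_3)\le H(S_1|S_2,W)$ (that extra hypothesis is assumed only in \Thmref{thm:PhaseGaussMABRC}, for the MABRC). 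Hence your route proves a strictly weaker statement whenever the relay's side information is worse than the destination's. The paper's proof avoids both issues by using a single Slepian--Wolf binning at rates $R_i$ and a DF channel code in which the relay merely decodes and forwards the bin indices, so that \eqref{eq:phaseRelayDecConstraints} guarantees reliable relay channel decoding and only the destination's conditions \eqref{eq:phaseAchieveChanRates} and \eqref{eq:phaseAchieveDestSourceRates} survive.
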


    \begin{IEEEproof}
        The necessity part is proved in Subsection \ref{subsec:converseFadeProof} and sufficiency is shown in subsection \ref{subsec:achieveFadeProof}.
    \end{IEEEproof}

    \begin{remark} \label{rem:achiid}
        To achieve the source-channel rates $\kappa$ stated in \Thmref{thm:PhaseGaussMARC} we use channel inputs distributed according to $X_i \sim \mCN(0,P_i), i \in \{1,2,3 \}$, all mutually independent,
        and generate the codebooks in an i.i.d. manner. The relay employs the DF scheme.
    \end{remark}


    \begin{remark}
			Note that the phase fading MARC is not degraded in the sense of \cite{Sankar:09}, see also \cite[Remark 33]{Kramer:2005}.
    \end{remark}

    \begin{remark}
        The result of \Thmref{thm:PhaseGaussMARC}  relies on the assumptions of additive Gaussian noise, Rx-CSI, and i.i.d. fading coefficients such that the phases
        of the  fading coefficients are mutually independent, uniformly distributed, and independent of their magnitudes. These assumptions are essential for the result.
    \end{remark}

    \begin{remark}
        Observe that from the achievability result of \cite[Appendix A]{Ron:2012}, it follows that the optimal source code and channel code used in the proof of \Thmref{thm:PhaseGaussMARC} are separate and stand-alone.
        Thus, informational separation is optimal.
        We now provide an intuitive explanation for the optimality of separation for the current scenario:
        Note that when separate and stand-alone source and channel codes are used, the channel
        inputs of the two transmitters, $X_1$ and $X_2$, are be mutually independent, i.e.,
        $p(x_1,x_2) = p(x_1)p(x_2)$. This puts a restriction on the feasible joint distributions for generating the channel codebooks.
        Using a joint source-channel code allows generating channel inputs that are statically dependent
        on the source symbols. Since $S_1$ and $S_2$ are correlated this induces statistical dependence between the channel inputs $X_1$ and $X_2$. This, in turn, enlarges the set of
        feasible joint input distributions which can be realized for generating the channel codebooks; and therefore, the set of achievable transmission rates over the channel may increase.
        However, for fading Gaussian MARCs, due to the uniformly distributed phases of the channel coefficients, in the absence of Tx-CSI, the received signal components (from the sources and from the relay)  at the destination are uncorrelated.
        Therefore, there is no advantage, from the perspective of channel coding, in generating correlated channel inputs.  Coupled with the entropy maximization property of the
        Gaussian RVs, we conclude that the optimal channel inputs are mutually independent. From this discussion it follows that there is no benefit from joint source-channel coding, and source-channel separation is optimal.

    \end{remark}

    \begin{remark}
        There exist examples of channels which are not fading Gaussian channels, but satisfy the  rest of the assumptions detailed in Section \ref{sec:Fading Gaussian MARCs}, for which
        the DF-based sufficient conditions of \Thmref{thm:separationCond} are not optimal. One such example is the Gaussian relay channel with fixed channel coefficients, see also discussion in
        \cite[Section VII-B]{Kramer:2005}.
\end{remark}


    Next, we consider source transmission over Rayleigh fading MARCs.

        \begin{theorem} \thmlabel{thm:RayleighGaussMARC}
        Consider transmission of arbitrarily correlated sources $S_1$ and $S_2$ over a Rayleigh fading Gaussian MARC with receiver side information $W$ and relay side information $W_3$.
        Let the channel inputs be subject to per-symbol power constraints as in \eqref{eq:phasePowerConstraint}, and let the channel coefficients and the power constraints $\{P_i\}_{i=1}^3$ satisfy
        \begin{subequations} \label{eq:RayleighRelayDecConstraints}
%
				\begin{align}
			1 + a_{11}^2 P_{1} + a_{31}^2P_{3} & \leq \frac{a_{13}^2 P_1} {e^{ \frac{ 1}{a_{13}^2 P_1}} E_1\left( \frac{1 }{ a_{13}^2  P_{1} }\right)} \\
			1 + a_{21}^2 P_{2} + a_{31}^2P_{3} & \leq \frac{a_{23}^2 P_2} {e^{ \frac{ 1}{a_{23}^2 P_2}} E_1\left( \frac{1 }{ a_{23}^2  P_{2} }\right)} \\
			1 + a_{11}^2 P_{1} + a_{21}^2 P_{2} + a_{31}^2P_{3} & \leq \nonumber
		\end{align}
		
		\vspace{-0.63cm}
		
		\begin{align}
			\mspace{50mu} \frac{a_{23}^2P_2 - a_{13}^2P_1}{\left( e^{ \frac{1}{a_{23}^2P_2}} E_1\left(\frac{1}{a_{23}^2P_2}\right) - e^{\frac{1}{a_{13}^2P_1}}E_1\left(\frac{1}{a_{13}^2P_1}\right) \right)},
		\end{align}
        \end{subequations}

        \noindent where $E_1(x) \triangleq \int_{q=x}^{\infty}{\frac{1}{q}e^{-q}dq}$, see \cite[Eqn. (5.1.1)]{Abramowitz}. A source-channel rate $\kappa$ is achievable if
        \begin{subequations} \label{eq:RayleighAchievConstraints}
        \begin{align}
			\mspace{-10mu} H(S_1|S_2,W) &< \kappa \E_{\tU}\big\{\log_2(1 + a_{11}^2|U_{11}|^2P_1 \nonumber \\ 		
				& \mspace{171mu}	+ a_{31}^2|U_{31}|^2P_3) \big\} \label{eq:RayleighAchievConstraints_S1}	\\
			\mspace{-10mu} H(S_2|S_1,W) &< \kappa \E_{\tU}\big\{\log_2(1 + a_{21}^2|U_{21}|^2P_2 \nonumber \\
			 & \mspace{171mu}	 + a_{31}^2|U_{31}|^2P_3) \big\} \label{eq:RayleighAchievConstraints_S2}	\\
			\mspace{-10mu} H(S_1,S_2|W) &< \kappa \E_{\tU}\big\{\log_2(1 + a_{11}^2|U_{11}|^2P_1 + \nonumber \\
			& \mspace{80mu} a_{21}^2|U_{21}|^2P_2  + a_{31}^2|U_{31}|^2P_3) \big\}, \label{eq:RayleighAchievConstraints_S1S2}
		\end{align}
        \end{subequations}

        \noindent Conversely, if source-channel rate $\kappa$ is achievable, then conditions \eqref{eq:RayleighAchievConstraints} are satisfied with $<$ replaced by $\leq$.
    \end{theorem}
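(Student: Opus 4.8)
The plan is to prove the two directions by reduction, mirroring the phase-fading argument of \Thmref{thm:PhaseGaussMARC}: sufficiency from \Thmref{thm:separationCond} and necessity from \Thmref{thm:OuterGeneral}. The only new analytic ingredient is that the fluctuating fading magnitude turns every rate expression into an expectation over $\tU$; throughout I would use the closed form $\E\{\log_2(1+\gamma|U|^2)\}=\frac{1}{\ln 2}\,e^{1/\gamma}E_1(1/\gamma)$ for $U\sim\CN(0,1)$ (obtained from $|U|^2\sim\mathrm{Exp}(1)$ and an integration by parts), which replaces the deterministic logarithms of the phase-fading case.

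For sufficiency I would instantiate \Thmref{thm:separationCond} with the choice of Remark \ref{rem:achiid}: $X_i\sim\CN(0,P_i)$ mutually independent and $V_1,V_2$ degenerate, so that \eqref{eq:SepJointDist} is met. Since both receivers have Rx-CSI I augment the outputs as $(Y,\tH_1)$ and $(Y_3,\tH_3)$ and evaluate each mutual information by conditioning on the input-independent CSI and averaging. Conditioned on $\tH_1$ and on $X_2$, the useful signal $H_{11}X_1+H_{31}X_3+Z$ is circularly symmetric complex Gaussian, so the destination terms \eqref{bnd:dst_S1}--\eqref{bnd:dst_S1S2} evaluate exactly to the right-hand sides of \eqref{eq:RayleighAchievConstraints}, while the relay terms \eqref{bnd:rly_S1}--\eqref{bnd:rly_S1S2} become $\E_{\tU}\{\log_2(1+a_{13}^2|U_{13}|^2P_1)\}$, $\E_{\tU}\{\log_2(1+a_{23}^2|U_{23}|^2P_2)\}$, and the sum term $\E_{\tU}\{\log_2(1+a_{13}^2|U_{13}|^2P_1+a_{23}^2|U_{23}|^2P_2)\}$. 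It then remains to show that \eqref{eq:RayleighRelayDecConstraints} makes these relay rates dominate the corresponding destination rates, so that among the six constraints of \Thmref{thm:separationCond} only the destination set \eqref{eq:RayleighAchievConstraints} remains active. For the two single-source constraints I would apply Jensen's inequality, $\E_{\tU}\{\log_2(1+a_{11}^2|U_{11}|^2P_1+a_{31}^2|U_{31}|^2P_3)\}\le\log_2(1+a_{11}^2P_1+a_{31}^2P_3)$, and verify that the first two lines of \eqref{eq:RayleighRelayDecConstraints} push this below the relay rate; the sum constraint additionally uses the two-variable evaluation $\E\{\ln(1+\gamma_1|U_{13}|^2+\gamma_2|U_{23}|^2)\}=\frac{\gamma_1 e^{1/\gamma_1}E_1(1/\gamma_1)-\gamma_2 e^{1/\gamma_2}E_1(1/\gamma_2)}{\gamma_1-\gamma_2}$, with $\gamma_1=a_{13}^2P_1$ and $\gamma_2=a_{23}^2P_2$, from which the difference-of-$E_1$ form on the third line of \eqref{eq:RayleighRelayDecConstraints} originates.

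For necessity I would start from \eqref{bnd:outr_general_dst} of \Thmref{thm:OuterGeneral}, again appending the Rx-CSI so that, by its independence of the inputs, $I(X_1,X_3;Y,\tH_1|X_2)=\E_{\tH_1}\{I(X_1,X_3;Y|X_2,\tH_1)\}$. For a fixed CSI realization Gaussian maximizes the conditional entropy, giving $I(X_1,X_3;Y|X_2=x_2,\tH_1)\le\log_2\big(1+|H_{11}|^2\mathrm{Var}(X_1|x_2)+|H_{31}|^2\mathrm{Var}(X_3|x_2)+2\,\mathrm{Re}(H_{11}\bar{H}_{31}\,\mathrm{Cov}(X_1,X_3|x_2))\big)$. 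The decisive step is to average over the fading phase: because the phase of $H_{11}\bar{H}_{31}$ is uniform and independent of the magnitudes, the cross-covariance enters as a zero-mean cosine inside a concave logarithm, so Jensen's inequality eliminates it and shows that input correlation can only lower the bound. Replacing the residual conditional variances by the per-symbol powers ($\E_{X_2}\{\mathrm{Var}(X_1|X_2)\}\le P_1$, etc.) and using concavity over $x_2$ then reproduces the right-hand side of \eqref{eq:RayleighAchievConstraints_S1}; the remaining two constraints are identical. This makes precise the intuition that uniform independent phases decorrelate the received components, so mutually independent Gaussian inputs are optimal and separation is lossless.

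The main obstacle I anticipate is the achievability bookkeeping around \eqref{eq:RayleighRelayDecConstraints}: confirming that all three relay constraints are subsumed by \eqref{eq:RayleighAchievConstraints} requires the single- and two-variable fading-capacity integrals together with monotonicity of $\gamma\mapsto e^{1/\gamma}E_1(1/\gamma)$, and the sum-rate comparison (the difference-of-$E_1$ expression) is the most delicate piece. The converse, by contrast, is conceptually the heart of the theorem but technically short once the phase-averaging Jensen step is in place.
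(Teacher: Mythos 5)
Your converse is essentially the paper's: starting from \eqref{bnd:outr_general_dst} with the Rx-CSI appended to the output and arguing that the uniform, input-independent phases decorrelate the received components so that mutually independent $\CN(0,P_i)$ inputs simultaneously maximize all three mutual informations. The paper outsources that maximization step to \cite[Thm.~8]{Kramer:2005} and \cite[Corollary 1]{Ron:2012} rather than re-deriving the phase-averaging/Jensen argument, but your sketch is the content of that result and is sound.

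The achievability half, however, has a genuine gap. You propose to instantiate \Thmref{thm:separationCond} with $V_1=V_2=\phi$ so that the inputs are mutually independent. The paper explicitly explains (in the remark following the sufficiency proof of \Thmref{thm:PhaseGaussMARC}) why this does not work: in the DF scheme underlying \Thmref{thm:separationCond}, the destination's backward-decoding rule \eqref{eq:DestChanDecType} recovers the destination bin indices through the auxiliary codewords $\vvec_1(\cdot),\vvec_2(\cdot)$, and the relay codeword $\xvec_3$ is superimposed on them. With degenerate $V_1,V_2$ the relay has a single codeword and cannot forward the bin indices at all, so the scheme does not deliver the terms $\kappa\,\E_{\tU}\{\log_2(1+a_{11}^2|U_{11}|^2P_1+a_{31}^2|U_{31}|^2P_3)\}$, etc. The paper's route (which you should adopt) is Slepian--Wolf binning concatenated with the \emph{specialized} fading-MARC channel code of \cite[Appendix A]{Ron:2012}, whose codebook generation and decoding rule are built for independent Gaussian inputs with Rx-CSI and which, under \eqref{eq:RayleighRelayDecConstraints}, achieves exactly the ergodic region on the right-hand sides of \eqref{eq:RayleighAchievConstraints}. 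Relatedly, your plan to re-derive \eqref{eq:RayleighRelayDecConstraints} by a Jensen comparison of destination versus relay ergodic rates is unlikely to land on the stated inequalities: their precise algebraic form (e.g.\ $1+a_{11}^2P_1+a_{31}^2P_3\le a_{13}^2P_1/\big(e^{1/(a_{13}^2P_1)}E_1(1/(a_{13}^2P_1))\big)$, and the difference-of-$E_1$ expression) is inherited verbatim from \cite[Corollary 1]{Ron:2012} as the condition under which the relay's decoding constraints in that channel code are inactive, not from an upper bound on the destination rate. Your closed-form expectations for $\E\{\ln(1+\gamma|U|^2)\}$ and the two-variable version are correct and are indeed where the $E_1$ functions come from, but the bookkeeping should be delegated to the cited channel-coding result rather than reconstructed.
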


    \begin{IEEEproof}
        The proof uses \cite[Corollary 1]{Ron:2012} and follows similar arguments to those in the proof of \Thmref{thm:PhaseGaussMARC}.
    \end{IEEEproof}

    \begin{remark}
        The source-channel rate $\kappa$ in \Thmref{thm:RayleighGaussMARC} is achieved by using $X_i \sim \mCN(0,P_i), i \in \{1,2,3 \}$, all i.i.d. and independent of each other, and applying DF at the relay.
    \end{remark}

\subsection{Proof of Theorem \thmref{thm:PhaseGaussMARC}}   \label{subsec:FadeProof}

\subsubsection{Necessity Proof of Theorem \thmref{thm:PhaseGaussMARC}} \label{subsec:converseFadeProof}

Consider the necessary conditions of \Thmref{thm:OuterGeneral}.
We first note that the phase fading MARC model specified in Section \ref{sec:Fading Gaussian MARCs} exactly belongs to the class of fading relay channels
with Rx-CSI\footnote{Rx-CSI is incorporated into \Thmref{thm:OuterGeneral} by replacing $Y$ with
$(Y,\tH_1)$ in Eqns. \eqref{bnd:outr_general_dst}, and $(Y,Y_3)$ with $(Y,Y_3,\tH)$ in Eqns. \eqref{bnd:outr_V_dst}, and then by using the
fact that due to the absence of Tx-CSI, $(\tH_1,\tH) \independent (X_1,X_2,X_3)$, see \cite[Eq. (50)]{Kramer:2005}.}
stated in \cite[Thm. 8]{Kramer:2005}. Thus, from \cite[Thm. 8]{Kramer:2005} it follows that for phase fading MARCs with Rx-CSI, the mutual information
expressions on the RHS of \eqref{bnd:outr_general_dst} are simultaneously maximized by $X_1,X_2,X_3$ mutually independent, zero-mean
complex Gaussian RVs, $X_i \sim \CN(0,P_i), i = 1,2,3$.
Applying this input p.d.f. to \eqref{bnd:outr_general_dst} yields the expressions  in \eqref{eq:phaseAchievConstraints}.
Therefore, for phase fading MARCs, when conditions \eqref{eq:phaseRelayDecConstraints} hold, the conditions in \eqref{eq:phaseAchievConstraints}
coincide with the necessary conditions of Thm. \thmref{thm:OuterGeneral}, after replacing
 $``<"$ with  $``\leq"$.

\subsubsection{Sufficiency Proof of Theorem \thmref{thm:PhaseGaussMARC}} \label{subsec:achieveFadeProof} $ $

    \textbullet \hspace{0.04cm} \textsl{Codebook construction:} 
For $i=1,2$, assign every $\svec_i \in \mS_i^m$~to one of $2^{mR_i}$ bins independently according to a uniform
distribution over $\msgCal_{i} \triangleq \{1,2,\dots,2^{mR_i}\}$. Denote these assignments by $f_i$.
Set $\chR_i = \frac{1}{\kappa} R_i$, $i=1,2$, and let $X_k \sim \CN(0,P_k)$, $k=1,2,3$, all mutually independent. Construct
a channel code based on DF with rates $\chR_1$ and $\chR_2$, and with blocklength $n$,  as detailed in \cite[Appendix A]{Ron:2012}.

\textbullet \hspace{0.04cm} \textsl{Encoding:} Consider sequences of length $Bm$, $s_i^{Bm} \in \mS^{Bm}_i, i=1,2$, $w^{Bm} \in \mW^{Bm}$.
Partition each sequence into $B$ length-$m$ subsequences, $\svec_{i,b}$, $i=1,2$, and $\wvec_b$, $b=1,2,\dots,B$.
A total of $Bm$ source samples are transmitted over $B+1$ blocks of $n$ channel symbols each. Setting $n=\kappa m$,
and increasing $B$ we obtain a source-channel rate $(B+1)n/Bm \rightarrow n/m = \kappa$ as $B \rightarrow \infty$.

\noindent At block $b, b=1,2,\dots,B$, source terminal $i, i=1,2$, observes $\svec_{i,b}$ and finds its corresponding
bin index $\msg_{i,b} \in \msgCal_{i}$. Each transmitter sends its corresponding bin index using the
channel code described in \cite[Appendix A]{Ron:2012}.
Assume that at time $b$ the relay knows 
$(\msg_{1,b-1},\msg_{2,b-1})$. The relay sends these bin indices using the encoding scheme described in \cite[Appendix A]{Ron:2012}.

\textbullet \hspace{0.04cm} \textsl{Decoding and error probability analysis:}
We apply the decoding rule of \cite[Eqn. (A2)]{Ron:2012}. From
the error probability analysis in \cite[Appendix A]{Ron:2012}, it follows that, when the channel coefficients and the channel input power constraints satisfy the
conditions in \eqref{eq:phaseRelayDecConstraints}, the RHSs of the constraints in \eqref{eq:phaseAchievConstraints} characterize the ergodic capacity
region (in the sense of \cite[Eq. (51)]{Kramer:2005}) of the phase fading Gaussian MARC (see \cite[Thm. 9]{Kramer:2005}, \cite[Appendix A]{Ron:2012}).
Hence, when consitions \eqref{eq:phaseRelayDecConstraints} are satisfied, the transmitted bin indices $\{u_{1,b},u_{2,b} \}_{b=1}^B$ can be
reliably decoded at the destination as long as
\begin{subequations} \label{eq:phaseAchieveChanRates}
\begin{align}
    R_1 & <  \kappa \log_2(1 + a_{11}^2 P_1 + a_{31}^2 P_3) \\
    R_2 & <  \kappa \log_2(1 + a_{21}^2 P_2 + a_{31}^2 P_3) \\
    R_1 + R_2 & < \kappa \log_2(1 + a_{11}^2 P_1 + a_{21}^2 P_2 + a_{31}^2 P_3).
\end{align}
\end{subequations}

\noindent \textit{Decoding the sources at the destination:} The decoded bin indices, denoted
$(\tilde{\msg}_{1,b},\tilde{\msg}_{2,b}), b=1,2,\dots,B$, are given to
the source decoder at the destination. Using the bin indices $(\tilde{\msg}_{1,b},\tilde{\msg}_{2,b})$
and the side information $\wvec_{b}$, the source decoder at the destination estimates $(\svec_{1,b}, \svec_{2,b})$
by looking for a unique pair of sequences $(\tilde{\svec}_{1},\tilde{\svec}_{2}) \in S_1^m \times S_2^m$
that satisfies $f_1(\tsvec_{1})= \tilde{\msg}_{1,b}, f_2(\tsvec_{2})= \tilde{\msg}_{2,b}$ and
$(\tsvec_{1},\tsvec_{2},\wvec_{b}) \in \stypm(S_1,S_2,W)$.
From the Slepian-Wolf theorem \cite[Thm 14.4.1]{cover-thomas:it-book}, $(\svec_{1,b},\svec_{2,b})$
can be reliably decoded at the destination if
\begin{subequations} \label{eq:phaseAchieveDestSourceRates}
\begin{eqnarray}
    H(S_1|S_2,W) &\le& R_1 \\
    H(S_2|S_1,W) &\le& R_2 \\
    H(S_1,S_2|W) &\le& R_1 + R_2.
\end{eqnarray}
\end{subequations}

 \noindent   Combining conditions \eqref{eq:phaseAchieveChanRates} and \eqref{eq:phaseAchieveDestSourceRates}
yields \eqref{eq:phaseAchievConstraints}, and completes the achievability proof.


\begin{remark}
    Note that in the sufficiency proof in Section \ref{subsec:achieveFadeProof} we used the code construction and the decoding procedure of \cite[Appendix A]{Ron:2012}, which are designed specifically for fading MARCs.
    The reason we did not use the result of \Thmref{thm:separationCond} is that for the channel inputs to be mutually independent, we must set $V_1 = V_2 = \phi$ in \Thmref{thm:separationCond}.
    But, with such an assignment, the decoding rule of the channel code at the destination given by Eqn. \eqref{eq:DestChanDecType} does not apply, as this rule decodes the information carried by the
    {\em auxiliary RVs}.
    For the same reason we did not simply cite \cite[Thm. 9]{Kramer:2005} for the channel coding part  of the sufficiency proof of Thm. \thmref{thm:PhaseGaussMARC}.
    We conclude that a specialized channel code must be constructed for fading channels.
    The issue of channel coding for fading MARCs has already been addressed in \cite{Ron:2012}, and we refer to \cite{Ron:2012} for a detailed discussion.
\end{remark}

\subsection{Fading MABRCs}

Optimality of informational separation can also be established for MABRCs by
using the results for MARCs  with three additional constraints. The result is stated in the following theorem:
\begin{theorem} \thmlabel{thm:PhaseGaussMABRC}
    For phase fading MABRCs for which the conditions in \eqref{eq:phaseRelayDecConstraints} hold together with
    \begin{subequations} \label{eq:phaseEntropyConstraints}
    \begin{eqnarray}
        H(S_1|S_2,W_3) &\leq&  H(S_1|S_2,W) \\
        H(S_2|S_1,W_3) &\leq&  H(S_2|S_1,W) \\
        H(S_1,S_2|W_3) &\leq&  H(S_1,S_2|W),
    \end{eqnarray}
    \end{subequations}

  \noindent a source-channel rate $\kappa$ is achievable if conditions \eqref{eq:phaseAchievConstraints} are satisfied.
  Conversely, if a source-channel rate $\kappa$ is achievable, then conditions \eqref{eq:phaseAchievConstraints} are satisfied with $<$ replaced by $\leq$.
    The same statement holds for Rayleigh fading MABRCs with \eqref{eq:RayleighRelayDecConstraints} replacing \eqref{eq:phaseRelayDecConstraints} and \eqref{eq:RayleighAchievConstraints} replacing \eqref{eq:phaseAchievConstraints}.
\end{theorem}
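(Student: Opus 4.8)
The plan is to reduce both directions to the already-established MARC result, \Thmref{thm:PhaseGaussMARC}, exploiting the fact that a decode-and-forward relay decodes the transmitted messages regardless of whether it is also required to reconstruct the sources. The only genuinely new ingredient relative to the MARC proof is a Slepian-Wolf decoding step performed at the relay, and the entropy constraints \eqref{eq:phaseEntropyConstraints} are precisely what guarantees that this step succeeds at the binning rates already dictated by the destination.

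For the necessity part, I would observe that every $(m,n)$ code for the MABRC is in particular an $(m,n)$ code for the MARC: discarding the relay reconstruction function $g_3^{(m,n)}$ and the relay error event in the union \eqref{eq:MABRC_pErr} yields a valid MARC code whose probability of error is no larger. Hence any source-channel rate achievable over the phase fading MABRC is also achievable over the corresponding phase fading MARC, and must therefore satisfy the necessary conditions of \Thmref{thm:PhaseGaussMARC}, namely conditions \eqref{eq:phaseAchievConstraints} with ``$<$'' replaced by ``$\leq$''. This settles the converse with no additional work, and the identical reduction applies to the Rayleigh case via \Thmref{thm:RayleighGaussMARC}.

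For the sufficiency part, I would reuse the codebook construction, encoding, and destination decoding of Subsection \ref{subsec:achieveFadeProof} verbatim: assign each source block to one of $2^{mR_i}$ bins, transmit the bin indices with the fading-MARC channel code of \cite[Appendix A]{Ron:2012}, and have the relay employ DF. Since the scheme is DF-based, the relay already decodes the bin indices $(\msg_{1,b},\msg_{2,b})$ reliably whenever the channel constraints \eqref{eq:phaseRelayDecConstraints} hold and the channel rates obey \eqref{eq:phaseAchieveChanRates}. The one extra step demanded by the MABRC is that the relay must reconstruct $(\svec_{1,b},\svec_{2,b})$ from these decoded bin indices together with its own side information $\wvec_{3,b}$. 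By the Slepian-Wolf theorem this succeeds provided
\begin{subequations}
\begin{eqnarray}
H(S_1|S_2,W_3) &\le& R_1 \\
H(S_2|S_1,W_3) &\le& R_2 \\
H(S_1,S_2|W_3) &\le& R_1 + R_2.
\end{eqnarray}
\end{subequations}
Comparing these relay reconstruction constraints with the destination reconstruction constraints \eqref{eq:phaseAchieveDestSourceRates}, the entropy constraints \eqref{eq:phaseEntropyConstraints} show that the former are implied by the latter. Consequently any choice of binning rates $(R_1,R_2)$ that permits reconstruction at the destination automatically permits reconstruction at the relay, and combining \eqref{eq:phaseAchieveChanRates} with \eqref{eq:phaseAchieveDestSourceRates} exactly as in Subsection \ref{subsec:achieveFadeProof} yields \eqref{eq:phaseAchievConstraints}.

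I expect no serious obstacle, as the argument is essentially a reduction together with one clean monotonicity observation. The Rayleigh case is handled identically after replacing the phase-fading channel code and the relay-decoding region \eqref{eq:phaseRelayDecConstraints} by their Rayleigh counterparts \eqref{eq:RayleighRelayDecConstraints} and invoking \cite[Corollary 1]{Ron:2012}; note that the entropy constraints \eqref{eq:phaseEntropyConstraints} are channel-independent and hence remain unchanged. The only point demanding care is to confirm that the relay's Slepian-Wolf decoder and the DF channel decoder act on the \emph{same} decoded bin indices, so that a single pair $(R_1,R_2)$ simultaneously lies in the channel-coding region, the destination source-coding region, and---through \eqref{eq:phaseEntropyConstraints}---the relay source-coding region; this is immediate from the structure of the DF scheme.
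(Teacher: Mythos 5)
Your proposal is correct and follows essentially the same route as the paper: the sufficiency part reuses the MARC scheme of Subsection \ref{subsec:achieveFadeProof} and adds a Slepian--Wolf decoding step at the relay, with \eqref{eq:phaseEntropyConstraints} guaranteeing that the relay source-coding region contains the destination's, exactly as the paper argues. For the converse the paper invokes \Thmref{thm:OuterGeneralMABRC} (whose destination constraints coincide with those of the MARC outer bound), while you use the equivalent and equally valid observation that any MABRC code is a MARC code with no larger error probability, so the converse of \Thmref{thm:PhaseGaussMARC} applies directly.
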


\begin{IEEEproof}
    The sufficiency proof of \Thmref{thm:PhaseGaussMABRC} differs from the sufficiency proof of \Thmref{thm:PhaseGaussMARC} only due to decoding requirement of the source sequences at the relay.
    Conditions \eqref{eq:phaseRelayDecConstraints} imply that  reliable decoding of the channel code at the destination implies reliable decoding of the channel code at the relay.
    Conditions \eqref{eq:phaseEntropyConstraints} imply that  the relay achievable source rate region contains the destination achievable source rate region, and therefore,
    reliable decoding of the source code at the destination implies reliable decoding of the source code at the relay.
    Hence, if conditions \eqref{eq:phaseRelayDecConstraints}, \eqref{eq:phaseAchievConstraints}, and \eqref{eq:phaseEntropyConstraints} hold, $(\svec_{1,b},\svec_{2,b})$, $b=1,2,...,B$,
    can be reliably decoded at both the relay and the destination.
    Necessity of \eqref{eq:phaseAchievConstraints} follows from the necessary conditions of \Thmref{thm:OuterGeneralMABRC}, and by following similar arguments to the necessity proof of \Thmref{thm:PhaseGaussMARC}.

    The extension to Rayleigh fading is similar to the one done for MARCs (from \Thmref{thm:PhaseGaussMARC} to \Thmref{thm:RayleighGaussMARC}).
\end{IEEEproof}

\begin{remark}
    Conditions \eqref{eq:phaseEntropyConstraints} imply that for the scenario described in \Thmref{thm:PhaseGaussMARC}
regular and irregular encoding yield the same source-channel achievable rates (see Remark \ref{rem:regIrregDiff}); hence, the channel code construction of \cite[Appendix A]{Ron:2012} can be used
without any rate loss.
\end{remark}

\section{Joint Source-Channel Achievable Rates for Discrete Memoryless MARCs and MABRCs} \label{sec:JointAchiev}

In this section we derive two sets of sufficient conditions for the achievability of source-channel rate $\kappa=1$ for DM MARCs and MABRCs with correlated sources and side information.
Both achievability schemes are established by using a combination of SW source coding, the CPM technique, and a DF scheme with successive decoding at the relay and backward decoding at the destination.
The techniques differ in the way the source codes are combined.
In the first scheme (\Thmref{thm:jointCond_CP}), SW source coding is used for encoding information to the destination and CPM is used for
encoding information to the relay. In the second scheme (\Thmref{thm:jointCondFlip_CP}), CPM is used for encoding information to the destination while SW
source coding is used for encoding information to the relay.

\ifthenelse{\boolean{SquizFlag}}{}{}

\ifthenelse{\boolean{SquizFlag}}{}{}

\subsection{Joint Source-Channel Coding for MARCs and MABRCs}
\Thmref{thm:jointCond_CP} and \Thmref{thm:jointCondFlip_CP} below present two new sets of sufficient conditions for the achievability of source-channel rate $\kappa=1$, obtained by combining SW source coding and CPM.
For the sources $S_1$ and $S_2$ we define common information in the sense of G\'acs, K\"orner \cite{GacsKorner:73} and Witsenhausen \cite{Witsenhausen:75}, as $T \triangleq h_1(S_1) = h_2(S_2)$, where $h_1$ and $h_2$ are
deterministic functions. We now state the theorems:

\begin{theorem}
    \thmlabel{thm:jointCond_CP}
    For DM MARCs and MABRCs with relay and receiver side information as defined in Section \ref{subsec:MARC}, and source pair $(S_1,S_2)$ with common part $T \triangleq h_1(S_1) = h_2(S_2)$,
    a source-channel rate $\kappa=1$ is achievable if,
    \begin{subequations} \label{bnd:Joint_CP}
    \begin{align}
        \mspace{-13mu} H(S_1|S_2,W_3) &< I(X_1;Y_3|S_2, V_1, X_2, X_3, W_3,Q) \label{bnd:Joint_rly_S1_CP} \\
        \mspace{-13mu} H(S_2|S_1,W_3) &< I(X_2;Y_3|S_1, V_2, X_1, X_3, W_3, Q) \label{bnd:Joint_rly_S2_CP} \\
        \mspace{-13mu} H(S_1,S_2|W_3,T) &< I(X_1, \mspace{-2mu} X_2;Y_3|V_1, V_2, X_3, W_3, T, Q) \label{bnd:Joint_rly_S1S2_CP} \\
        \mspace{-13mu} H(S_1,S_2|W_3) &< I(X_1,\mspace{-2mu} X_2;Y_3|V_1, V_2, X_3, W_3) \label{bnd:Joint_rly_S1S2_2_CP} \\
        \mspace{-13mu} H(S_1|S_2,W) &< I(X_1,\mspace{-2mu} X_3;Y|S_1, V_2, X_2, Q) \label{bnd:Joint_dst_S1_CP} \\
        \mspace{-13mu} H(S_2|S_1,W) &< I(X_2,\mspace{-2mu} X_3;Y|S_2, V_1, X_1, Q) \label{bnd:Joint_dst_S2_CP} \\
        \mspace{-13mu} H(S_1,S_2|W) &< I(X_1,\mspace{-2mu} X_2,\mspace{-2mu} X_3;Y|S_1,S_2, Q), \label{bnd:Joint_dst_S1S2_CP}
    \end{align}
    \end{subequations}

    \noindent for some joint distribution that factorizes as
    \begin{align}
        & p(s_1,s_2,w_3,w)p(q)p(v_1)p(x_1|s_1,v_1,q) \times \nonumber \\
				& \quad p(v_2)p(x_2|s_2,v_2,q)p(x_3|v_1,v_2)p(y_3,y|x_1,x_2,x_3).
    \label{eq:JntJointDist_CP}
    \end{align}

\end{theorem}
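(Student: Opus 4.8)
The plan is to establish achievability through a block-Markov joint source-channel scheme over $B$ blocks that marries the correlation-preserving mapping (CPM) technique on the source-to-relay links with Slepian-Wolf (SW) binning on the source-to-destination links, all built on a decode-and-forward (DF) relay. Because $\kappa=1$, in each block the source vectors $\svec_{i,b}\in\mS_i^m$ and the channel inputs $\xvec_{i,b}\in\mX_i^n$ have equal length $n=m$, so no rate conversion is needed. First I would generate the codebooks dictated by the factorization \eqref{eq:JntJointDist_CP}: for $i=1,2$ assign every source vector $\svec_i$ to one of $2^{mR_i}$ SW bins, draw one cooperation codeword $\vvec_i$ per bin index i.i.d.\ from $p(v_i)$, draw the relay codeword $\xvec_3$ from $p(x_3|v_1,v_2)$ for each pair of bin indices, and fix a time-sharing sequence $\qvec\sim\prod_k p(q_k)$. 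The CPM step is the crucial generation rule: given the current source $\svec_{i,b}$, the cooperation codeword $\vvec_i$ carrying the \emph{previous} block's bin index, and $\qvec$, the channel input is drawn as $\xvec_{i,b}\sim\prod_k p(x_{i,k}|s_{i,k},v_{i,k},q_k)$, so that $\xvec_{i,b}$ is statistically \emph{correlated} with $\svec_{i,b}$ (for the relay) while still conveying, through $\vvec_i$, the bin index needed for DF cooperation. In block $b$ each source transmits $\xvec_{i,b}$ and the relay, having recovered $(\svec_{1,b-1},\svec_{2,b-1})$ in the previous block, forwards their bin indices through $\xvec_{3,b}$; since $V_1,V_2$ enter both the source inputs and $\xvec_3$, the previous bin indices are conveyed coherently to the destination.

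Next I would analyze \emph{relay decoding}, which is successive and CPM-based. Having resolved the cooperation variables $(\vvec_1,\vvec_2)$, hence $\xvec_3$, from the preceding blocks, the relay seeks the unique source pair jointly typical with $\yvec_{3,b}$ under \eqref{eq:JntJointDist_CP} given its side information $\wvec_{3,b}$. Splitting the error event exactly as in the Cover--El~Gamal--Salehi MAC analysis yields the four relay constraints: the events ``$S_1$ wrong, $S_2$ correct'' and ``$S_2$ wrong, $S_1$ correct'' produce \eqref{bnd:Joint_rly_S1_CP} and \eqref{bnd:Joint_rly_S2_CP}, while the event ``both wrong'' must be further partitioned according to whether the competing pair shares the true common part $T=h_1(S_1)=h_2(S_2)$. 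When the common part is preserved only the private descriptions are confusable, giving the $T$-conditioned bound \eqref{bnd:Joint_rly_S1S2_CP}; when the common part is also in error the entire codewords are fresh, giving the unconditioned bound \eqref{bnd:Joint_rly_S1S2_2_CP}. The counting of competing source sequences, separated by their common part, is what forces the appearance of $T$ and $Q$ in these bounds.

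Then I would treat \emph{destination decoding}, which is backward and SW-based. Proceeding from the last block, the destination jointly decodes, from $\yvec$, the relay's forwarded information and the source bin indices carried by the direct transmissions; the associated joint-typicality events produce the right-hand sides of \eqref{bnd:Joint_dst_S1_CP}--\eqref{bnd:Joint_dst_S1S2_CP}. It then reconstructs $(\svec_{1,b},\svec_{2,b})$ by SW decoding from the recovered bin indices and its side information $\wvec_b$, contributing the conditional source entropies on the left-hand sides. Eliminating the binning rates $R_1,R_2$ between the channel-decoding requirement and the SW requirement --- exactly as in the separation-based proof of \Thmref{thm:separationCond} --- collapses the two into the three destination constraints. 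Collecting all error events across the $B$ blocks by the union bound, the overall error probability vanishes as $m\to\infty$ whenever every inequality in \eqref{bnd:Joint_CP} holds strictly, and letting $B\to\infty$ drives the rate penalty of the extra block to zero, establishing $\kappa=1$.

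The main obstacle is the relay error analysis under CPM. Because $\xvec_1,\xvec_2$ are correlated with the sources rather than drawn independently of them, the standard packing-lemma bounds do not apply; instead one must carefully count confusable source sequences and bound the probability that a CPM-generated channel codeword attached to a \emph{wrong} source sequence is jointly typical with $\yvec_3$, keeping separate tallies for competitors that do and do not preserve the common part $T$. Interfacing this CPM argument with the backward-decoding/SW structure at the destination, and verifying that the single factorization \eqref{eq:JntJointDist_CP} simultaneously supports both the relay and the destination decoders, is where the bulk of the technical effort lies.
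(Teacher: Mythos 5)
Your proposal follows essentially the same route as the paper's proof: the identical codebook structure (SW bins, cooperation codewords $\vvec_i$ on the bin indices, CPM-generated $\xvec_i$ superimposed on $\vvec_i$ and $\qvec$, relay codewords drawn from $p(x_3|v_1,v_2)$), the same block-Markov encoding with successive joint-typicality decoding at the relay and backward channel-then-SW decoding at the destination with elimination of the binning rates $R_1,R_2$, and the same partition of the relay's ``both sources wrong'' event according to whether the common part $T$ is preserved, which is exactly how the paper obtains the two sum constraints \eqref{bnd:Joint_rly_S1S2_CP} and \eqref{bnd:Joint_rly_S1S2_2_CP} via the Cover--El~Gamal--Salehi lemma. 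The one imprecision is your description of $\qvec$ as a single fixed time-sharing sequence: the paper draws one codeword $\qvec(\tvec)$ per common-part sequence $\tvec\in\mT^n$ (the cloud center carrying the common information), which is what makes your own later statement --- that a fresh $Q$-codeword appears when the common part is in error --- coherent, and which necessitates the paper's further sub-split of that event according to whether $\Qvec(\ttvec)$ accidentally coincides with $\Qvec(\tvec_b)$ (the resulting extra condition is then shown to be dominated by \eqref{bnd:Joint_rly_S1S2_2_CP}).
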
	

\begin{IEEEproof}
    The proof is given in Appendix \ref{annex:jointProof_CP}.
\end{IEEEproof}

\begin{theorem}		\thmlabel{thm:jointCondFlip_CP}
    For DM MARCs and MABRCs with relay and receiver side information as defined in Section \ref{subsec:MARC}, and source pair $(S_1,S_2)$ with
    common part $T \triangleq h_1(S_1) = h_2(S_2)$, a source-channel rate $\kappa=1$ is achievable if,
    \begin{subequations} \label{bnd:JointFlip_CP}
    \begin{align}
        H(S_1|S_2,W_3) &< I(X_1;Y_3|S_1, X_2, X_3,Q) \label{bnd:JointFlip_rly_S1_CP} \\
        H(S_2|S_1,W_3) &< I(X_2;Y_3|S_2, X_1, X_3,Q) \label{bnd:JointFlip_rly_S2_CP} \\
        H(S_1,S_2|W_3) &< I(X_1,X_2;Y_3|S_1, S_2, X_3,Q) \label{bnd:JointFlip_rly_S1S2_CP} \\
        H(S_1|S_2,W) &< I(X_1,X_3;Y|S_2, X_2, W,Q) \label{bnd:JointFlip_dst_S1_CP} \\
        H(S_2|S_1,W) &< I(X_2,X_3;Y|S_1, X_1, W,Q) \label{bnd:JointFlip_dst_S2_CP} \\
        H(S_1,S_2|W,T) &< I(X_1,X_2,X_3;Y| W,T,Q) \label{bnd:JointFlip_dst_S1S2_CP} \\
        H(S_1,S_2|W) &< I(X_1,X_2,X_3;Y|W), \label{bnd:JointFlip_dst_S1S2_2_CP}
    \end{align}
    \end{subequations}

    \noindent for some joint distribution that factorizes as
    \begin{align}
        & p(s_1,s_2,w_3,w)p(q)p(x_1|s_1,q) \times \nonumber \\
				& \qquad \quad p(x_2|s_2,q)p(x_3|s_1,s_2,q)p(y_3,y|x_1,x_2,x_3).
    \label{eq:JntFlipJointDist_CP}
    \end{align}

\end{theorem}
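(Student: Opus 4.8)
The plan is to establish \Thmref{thm:jointCondFlip_CP} by a random-coding achievability argument that mirrors the proof of \Thmref{thm:jointCond_CP}, but with the roles of the relay and the destination interchanged: the correlation preserving mapping (CPM) is applied on the source-to-destination link, while Slepian--Wolf (SW) binning together with the relay side information is applied on the source-to-relay link. The common part $T=h_1(S_1)=h_2(S_2)$ is used to let the two encoders coordinate the generation of the correlated channel inputs and of the relay codeword. Transmission is organized in $B+1$ blocks of $n=m$ channel uses each, with block-Markov encoding, successive decoding at the relay, and backward decoding at the destination; letting $B\to\infty$ yields $\kappa=1$.

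First I would fix a distribution of the form \eqref{eq:JntFlipJointDist_CP} and construct the codebooks. I would draw a time-sharing sequence $\qvec$ i.i.d.\ according to $p(q)$. For the CPM component, for every source block $\svec_{1}\in\mS_1^m$ I would generate a channel codeword $\xvec_1(\svec_1)$ conditionally i.i.d.\ according to $\prod_k p(x_1|s_{1,k},q_k)$, and likewise $\xvec_2(\svec_2)$ from $p(x_2|s_2,q)$; for the relay, for every decoded pair $(\svec_1,\svec_2)$ I would generate $\xvec_3(\svec_1,\svec_2)$ from $p(x_3|s_1,s_2,q)$. Note that, in contrast to \Thmref{thm:jointCond_CP}, no auxiliary codewords $V_1,V_2$ are needed here, and the relay codeword is produced directly from the decoded sources. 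The SW component is the relay's use of its side information $W_3$, so that the effective rate it must resolve is a conditional entropy $H(\cdot|\cdot,W_3)$ rather than an unconditional one.

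Next I would specify the block operation. In block $b$ each source $i$ transmits $\xvec_i(\svec_{i,b})$, while the relay, which by the end of block $b-1$ has recovered $(\svec_{1,b-1},\svec_{2,b-1})$, transmits $\xvec_3(\svec_{1,b-1},\svec_{2,b-1})$. The relay decoder looks, at the end of block $b$, for the unique pair jointly typical with $(\yvec_{3,b},\wvec_{3,b},\qvec)$ given the already-known $\xvec_{3,b}$; counting the $\approx 2^{mH(\cdot|\cdot,W_3)}$ candidate source sequences that are jointly typical with the side information, and bounding the probability that a fresh CPM codeword is jointly typical with $\yvec_{3,b}$, yields the three relay conditions \eqref{bnd:JointFlip_rly_S1_CP}--\eqref{bnd:JointFlip_rly_S1S2_CP}. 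Here I would invoke the factorization of \eqref{eq:JntFlipJointDist_CP}, in particular that $X_1$ is conditionally independent of $(S_2,W_3)$ given $(S_1,Q)$, to collapse the conditioning to the stated form $I(X_1;Y_3|S_1,X_2,X_3,Q)$. The destination decoder decodes block $b$ after block $b+1$, combining the direct observation $\yvec_b$ (carrying $\xvec_1(\svec_{1,b}),\xvec_2(\svec_{2,b})$) with the relay's forwarding of block $b$ contained in $\yvec_{b+1}$ (carrying $\xvec_3(\svec_{1,b},\svec_{2,b})$); because $Y\perp S_i\,|\,(X_1,X_2,X_3)$, the decoded source drops out of the mutual-information terms, producing the destination conditions \eqref{bnd:JointFlip_dst_S1_CP}--\eqref{bnd:JointFlip_dst_S1S2_2_CP}.

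The hardest part will be the destination error analysis for the ``both sources wrong'' event, which must be split according to whether the competing pair shares the true common part $T$. When the common part is shared, the competing codewords $\xvec_1(\svec_1')$ and $\xvec_2(\svec_2')$ remain correlated through $T$ and the number of such pairs is governed by $H(S_1,S_2|W,T)$, giving \eqref{bnd:JointFlip_dst_S1S2_CP}; when the common parts differ, the codewords are effectively independent and the count is governed by $H(S_1,S_2|W)$, giving \eqref{bnd:JointFlip_dst_S1S2_2_CP}. This is the CPM-with-common-information subtlety of Cover--El Gamal--Salehi \cite{Cover:80}, and carrying it through the block-Markov and backward-decoding bookkeeping while keeping the relay's SW binning consistent is the main technical obstacle. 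The remaining error events (errors propagated from the relay, and boundary effects in the first and last blocks) are handled by the standard DF arguments and vanish as $m\to\infty$ and then $B\to\infty$; collecting all constraints gives exactly \eqref{bnd:JointFlip_CP}, and since $n=m$ we obtain $\kappa=1$. Finally, since the relay recovers $(\svec_{1,b},\svec_{2,b})$ in the course of the scheme, the same argument establishes the claim for the MABRC as well.
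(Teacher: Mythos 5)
Your high-level slogan (CPM toward the destination, SW binning toward the relay, and a split of the destination's double-error event according to whether the competing pair shares the true common part) matches the paper, and you correctly flag the $T$-dependent counting as the delicate step. However, the concrete scheme you describe is not the one that yields the stated conditions, and the discrepancy is not cosmetic. In the paper each source transmits a \emph{superposition} codeword $\xvec_i(\msg_{i,b},\svec_{i,b-1},\qvec(h_i(\svec_{i,b-1})))$: the SW bin index $\msg_{i,b}=f_i(\svec_{i,b})$ of the \emph{current} source block rides on top of the \emph{previous} block's source sequence. You instead have each source transmit a pure CPM codeword $\xvec_i(\svec_{i,b})$ of the current block with no bin index anywhere. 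This breaks the relay analysis: your relay must then decode $(\svec_{1,b},\svec_{2,b})$ directly by joint typicality with $\wvec_{3,b}$, which produces conditions of the form $H(S_1|S_2,W_3)<I(X_1;Y_3|S_2,X_2,X_3,\ldots)$ (conditioning on the \emph{other} source), not the stated $I(X_1;Y_3|S_1,X_2,X_3,Q)$ of \eqref{bnd:JointFlip_rly_S1_CP}, which conditions on $S_1$ itself. These are genuinely different quantities under \eqref{eq:JntFlipJointDist_CP}, since $X_1$ is generated from $p(x_1|s_1,q)$; no factorization argument "collapses" one into the other. The conditioning on $S_1$ in the theorem is precisely the signature of the superposition: the relay already knows $\svec_{1,b-1}$ (the cloud center) and channel-decodes only the bin index, after which a separate SW source decoder uses $W_3$.

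The destination side has a parallel problem. You propose combining $\yvec_b$ and $\yvec_{b+1}$ sliding-window style, but in the paper's backward decoding the destination recovers $(\svec_{1,b},\svec_{2,b})$ from the \emph{single} block $\yvec_{b+1}$, in which all three inputs $\xvec_1(\hmsg_{1,b+1},\svec_{1,b},\cdot)$, $\xvec_2(\hmsg_{2,b+1},\svec_{2,b},\cdot)$, $\xvec_3(\svec_{1,b},\svec_{2,b},\cdot)$ carry the block-$b$ sources; that is what produces the single-letter terms $I(X_1,X_3;Y|S_2,X_2,W,Q)$ with $X_1$ and $X_3$ inside one mutual information against one $Y$. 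Your two-block combination cannot produce these expressions, and it also runs into a concrete ordering obstruction: when decoding block $b$ backwards you do not yet know $(\svec_{1,b-1},\svec_{2,b-1})$, hence you cannot strip off or condition on $\xvec_3^{(b)}$ in $\yvec_b$; moreover, as the paper itself notes in its discussion of CPM-only schemes, joint typicality cannot be tested across sequences belonging to different blocks. To repair the proof you need to adopt the block-Markov superposition structure above, run the relay as channel-decoding of bin indices followed by SW decoding with $\wvec_{3,b}$, and run the destination as single-block CPM decoding of the source pair with the bin indices supplied by backward decoding; the $E_{24}/E_{25}$-type split over $\Qvec(\htvec)\ne\Qvec(\tvec_b)$ versus $\Qvec(\htvec)=\Qvec(\tvec_b)$ then yields \eqref{bnd:JointFlip_dst_S1S2_2_CP} together with a redundant companion condition, as in the paper.
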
	

\begin{IEEEproof}
    The proof is given in Appendix \ref{annex:jointProofFlip_CP}.
\end{IEEEproof}

\ifthenelse{\boolean{IncludeJntOldProofs}}{}{}

\subsection{Discussion} \label{subsec:jointDiscussion}
Figure \ref{fig:Dependency} illustrates the Markov chains for the joint distributions considered in \Thmref{thm:jointCond_CP} and \Thmref{thm:jointCondFlip_CP}.
\begin{figure}[t]
\begin{center}
\subfloat[]{\scalebox{0.31}{\includegraphics{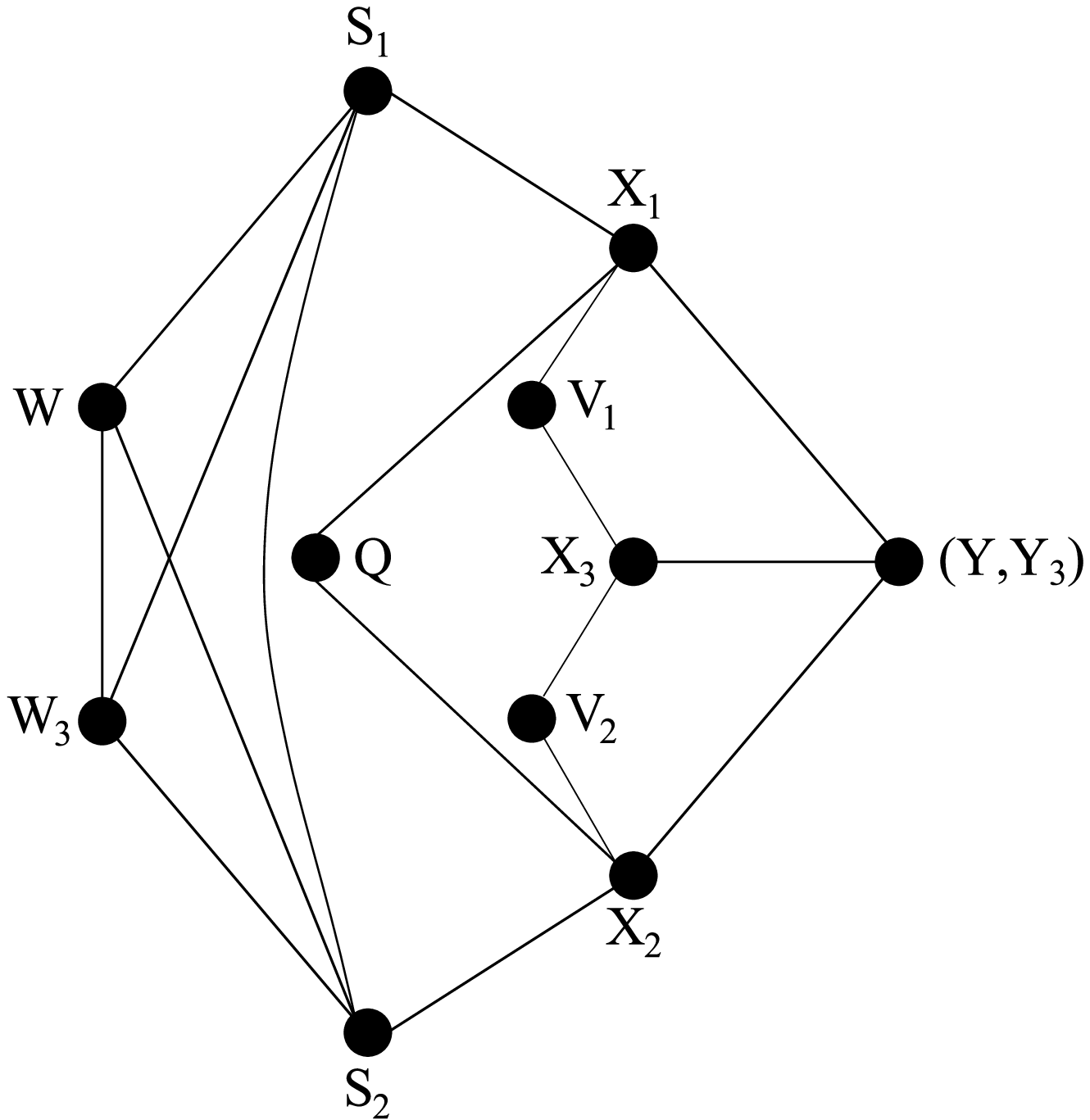}} \label{fig:JointDep}}
\subfloat[]{\scalebox{0.31}{\includegraphics{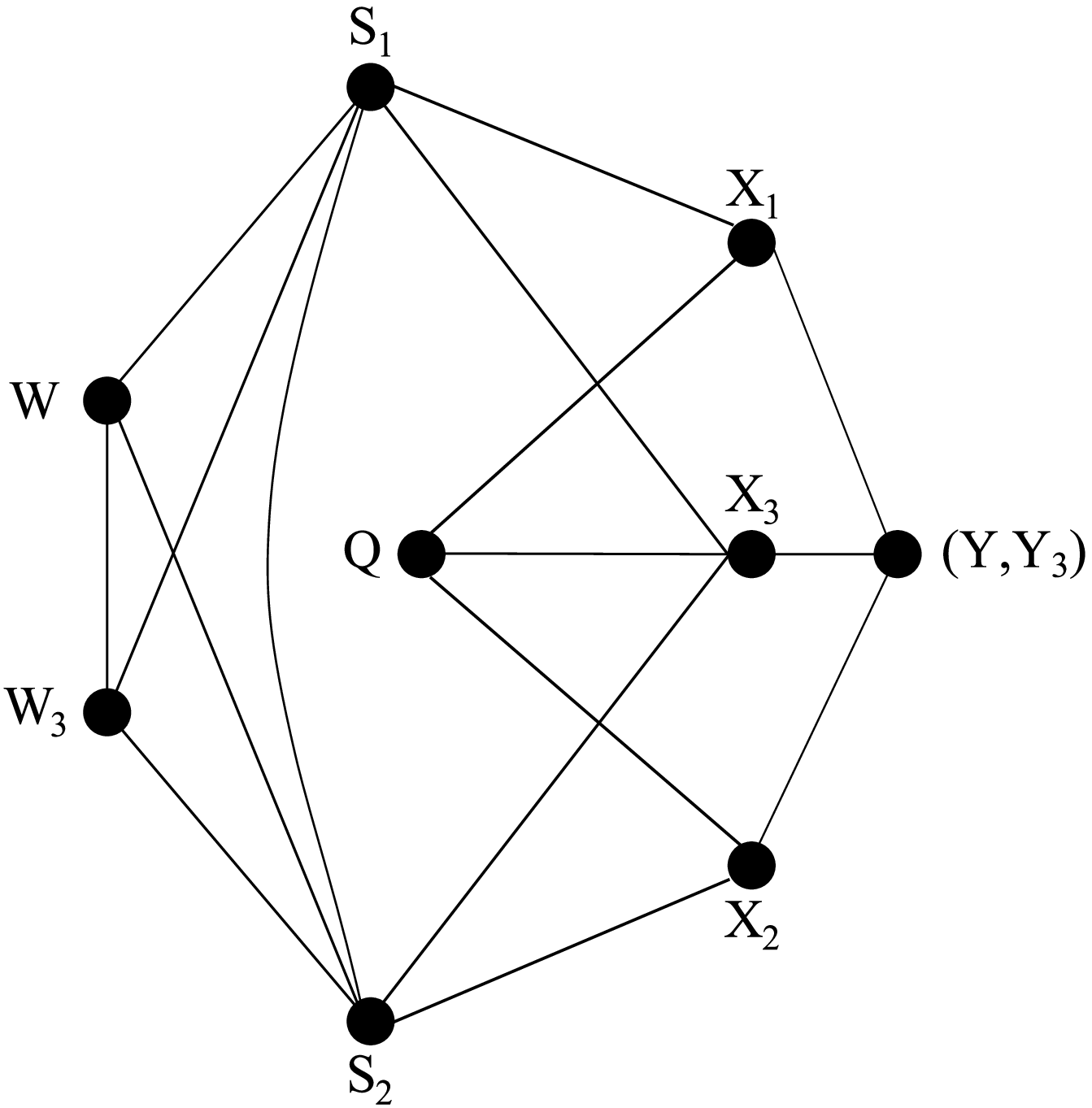}} \label{fig:JointFlipDep}}
\caption{(a) Diagram of the Markov chain for the joint distribution considered in \eqref{eq:JntJointDist_CP}; (b) Diagram of the Markov chain for the joint distribution considered in \eqref{eq:JntFlipJointDist_CP}.}
\label{fig:Dependency}
\end{center}
\end{figure}

\begin{remark}
    \label{rem:conds}
    Conditions \eqref{bnd:Joint_rly_S1_CP}--\eqref{bnd:Joint_rly_S1S2_2_CP} in \Thmref{thm:jointCond_CP} and  conditions \eqref{bnd:JointFlip_rly_S1_CP}--\eqref{bnd:JointFlip_rly_S1S2_CP}
    in \Thmref{thm:jointCondFlip_CP} are constraints for decoding at the relay, while conditions \eqref{bnd:Joint_dst_S1_CP}--\eqref{bnd:Joint_dst_S1S2_CP} and \eqref{bnd:JointFlip_dst_S1_CP}--\eqref{bnd:JointFlip_dst_S1S2_2_CP}
    are decoding constraints at the destination.
\end{remark}

\begin{remark}
    Each mutual information expression on the RHS of the constraints in \Thmref{thm:jointCond_CP} and \Thmref{thm:jointCondFlip_CP} represents the rate of one of two encoding types: either source-channel encoding via CPM
    or SW encoding.
	Consider \Thmref{thm:jointCond_CP}: Here, $V_1$ and $V_2$ represent the binning information for $S_1$ and $S_2$, respectively. Observe that the left-hand side (LHS) of condition
    \eqref{bnd:Joint_rly_S1_CP} is the entropy of $S_1$ when $(S_2,W_3)$ are known.
    On the RHS of \eqref{bnd:Joint_rly_S1_CP}, as $S_2$, $V_1$, $X_2$,  $X_3$, $W_3$  and  $Q$ are given, the mutual information expression
    $I(X_1;Y_3|S_2, V_1, X_2, X_3, W_3, Q)$ represents the available rate that can be used for encoding information on the {\em source} $S_1$, in {\em excess of the bin index} represented by $V_1$.
The LHS of condition \eqref{bnd:Joint_dst_S1_CP} is the entropy of $S_1$ when $(S_2,W)$ are known. The RHS of condition \eqref{bnd:Joint_dst_S1_CP} expresses the amount of {\em binning information} that can be reliably
transmitted cooperatively by transmitter 1 and the relay to the destination.
 This can be seen by rewriting the mutual information expression in \eqref{bnd:Joint_dst_S1_CP} as $I(X_1,X_3;Y|S_1, V_2, X_2, Q) = I(X_1,X_3;Y|S_1, S_2, X_2, V_2, W, Q)$. As $S_1$ is given, this expression represents the rate at which the {\em bin index} of source $S_1$ can be transmitted to the destination in excess of the source-channel rate for encoding $S_1$ (see Appendix \ref{annex:jointProof_CP}).
Therefore, each mutual information expression in \eqref{bnd:Joint_rly_S1_CP} and \eqref{bnd:Joint_dst_S1_CP} represents \emph{different} types of information sent by the source:
either source-channel codeword to the relay as in \eqref{bnd:Joint_rly_S1_CP}; or bin index to the destination as in \eqref{bnd:Joint_dst_S1_CP}.
This difference is because SW source coding is used for encoding information to the destination while CPM is used for encoding information to the relay.

  Similarly, consider the RHS of \eqref{bnd:JointFlip_rly_S1_CP} in \Thmref{thm:jointCondFlip_CP}. The mutual information expression $I(X_1;Y_3|S_1, X_2, X_3,Q) = I(X_1;Y_3|S_1, S_2, X_2, X_3, W_3, Q)$
  represents the rate that can be used for encoding the {\em bin index} of source $S_1$ to the relay (see Appendix \ref{annex:jointProofFlip_CP}), since $S_1$ is given.
In contrast, the mutual information expression $I(X_1,X_3;Y|S_2, X_2,W, Q)$ on the RHS of \eqref{bnd:JointFlip_dst_S1_CP} represents the available rate that can be used for cooperative source-channel
encoding of the {\em source} $S_1$ to the destination. This follows as $S_2$, $X_2$, $W$  and $Q$ are given.

\end{remark}

\begin{remark} \label{rem:tradeoff}
    \Thmref{thm:jointCond_CP} and \Thmref{thm:jointCondFlip_CP} establish \emph{different} sufficient conditions.
    In \cite{Cover:80} it was shown that separate source and channel coding is generally suboptimal for transmitting correlated sources over MACs. It then directly follows that separate coding is also
    suboptimal for DM MARCs and MABRCs.
    In \Thmref{thm:jointCond_CP} the CPM technique is used for encoding information to the relay, while in \Thmref{thm:jointCondFlip_CP} SW coding concatenated with independent channel coding is used for encoding information to the relay.
    Coupled with the above observation, this implies that the relay decoding constraints of \Thmref{thm:jointCond_CP} are generally looser compared to the relay decoding constraints of \Thmref{thm:jointCondFlip_CP}.
    Using similar reasoning we conclude that the destination decoding constraints of \Thmref{thm:jointCondFlip_CP} are looser compared to the destination decoding constraints of \Thmref{thm:jointCond_CP}
    (as long as coordination is possible, see Remark \ref{rem:CRBCrem}).
    Considering the distribution chains in \eqref{eq:JntJointDist_CP} and \eqref{eq:JntFlipJointDist_CP} we conclude that these two theorems represent different sets of sufficient conditions, and neither are
    special cases of each other nor include one another.
\end{remark}

\begin{remark}
    \Thmref{thm:jointCond_CP} coincides with \Thmref{thm:separationCond} for $\kappa =1$ and no common information:
   Consider the case in which the source pair $(S_1,S_2)$ has no common part, that is $\mT = \phi$, and let $\mQ = \phi$ as well. For an input distribution
    \begin{align*}
        &p(s_1,s_2,w_3,w,v_1,v_2,x_1,x_2,x_3) \nonumber \\
				& \mspace{10mu} = p(s_1,s_2,w_3,w)p(v_1)p(x_1|v_1)p(v_2)p(x_2|v_2)p(x_3|v_1,v_2),
    \end{align*}

\noindent conditions \eqref{bnd:Joint_CP} specialize to conditions \eqref{bnd:sepBased}, and the transmission scheme of \Thmref{thm:jointCond_CP} (see Appendix \ref{annex:jointProof_CP})
specializes to a separation-based achievability scheme of \Thmref{thm:separationCond} for $\kappa = 1$, under these assumptions.
\end{remark}

\begin{remark} \label{rem:JointReduce}
    In both \Thmref{thm:jointCond_CP} and \Thmref{thm:jointCondFlip_CP} the conditions stemming from the CPM technique can be specialized to the sufficient conditions of \cite[Thm. 1]{Cover:80} derived for a MAC.
    In \Thmref{thm:jointCond_CP}, letting $\mV_1= \mV_2= \mX_3= \mW_3=\phi$, specializes the relay conditions in \eqref{bnd:Joint_rly_S1_CP}--\eqref{bnd:Joint_rly_S1S2_2_CP} to the ones in \cite[Thm. 1]{Cover:80}
    with $Y_3$ as the destination.
    In \Thmref{thm:jointCondFlip_CP}, letting $\mX_3=\mW=\phi$, specializes the destination conditions in \eqref{bnd:JointFlip_dst_S1_CP}--\eqref{bnd:JointFlip_dst_S1S2_2_CP} to the ones in \cite[Thm. 1]{Cover:80}
    with $Y$ as the destination.
\end{remark}

\begin{remark} \label{rem:CRBCrem}
\Thmref{thm:jointCond_CP} is optimal in some scenarios: consider the cooperative relay-broadcast channel (CRBC) depicted in
Figure \ref{fig:CBRCsideInfo}. This model is a special case of a MARC obtained when there is a single source terminal. For the CRBC with correlated relay and destination side information, we can identify exactly the optimal
source-channel rate using \Thmref{thm:separationCond} and \Thmref{thm:OuterGeneralMABRC}. This result was previously obtained in \cite{ErkipGunduz:07}:

\begin{figure}[ht]
    \centering
    \scalebox{0.43}{\includegraphics{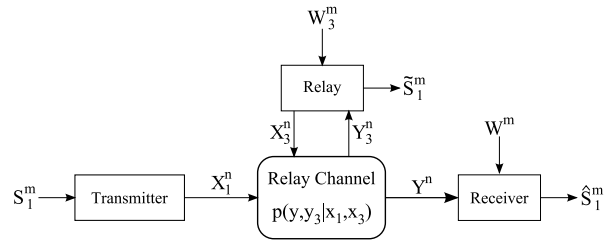}}
    \caption{The cooperative relay-broadcast channel with correlated side information. $\tilde{S}_1^m$ and $\hat{S}_1^m$ are the estimates of $S_1^m$ at the relay and destination, respectively.}
    \label{fig:CBRCsideInfo}
\end{figure}

\begin{corollaryA}  (\cite[Thm. 3.1]{ErkipGunduz:07}) \label{cor:relayJoint}
   For the CRBC with relay and receiver side information, source-channel rate $\kappa$ is achievable if
    \begin{subequations} \label{eq:relayJoint}
        \begin{eqnarray}
            H(S_1|W_3) &<& \kappa I(X_1;Y_3|X_3) \label{eq:RelayJnt_cond} \\
            H(S_1|W) &<& \kappa I(X_1,X_3;Y), \label{eq:DestJnt_cond}
        \end{eqnarray}
    \end{subequations}

\noindent for some input distribution $p(s_1, w_3, w)p(x_1, x_3)$. Conversely, if rate $\kappa$ is achievable then the conditions in \eqref{eq:relayJoint} are satisfied with $<$ replaced by $\leq$
for some input distribution $p(s_1, w_3, w)p(x_1, x_3)$.
\end{corollaryA}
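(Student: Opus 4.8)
The plan is to obtain the corollary as a specialization of \Thmref{thm:separationCond} (for sufficiency) and \Thmref{thm:OuterGeneralMABRC} (for necessity), exploiting the fact that the CRBC is exactly the MABRC of Section~\ref{subsec:MARC} with a single source terminal. Concretely, I would set $\mS_2 = \mX_2 = \phi$, so that the second source $S_2$, its channel input $X_2$, and the associated auxiliary $V_2$ all vanish, and then track which of the constraints of the two theorems survive.

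For sufficiency, I would start from the achievability conditions \eqref{bnd:sepBased}. In the single-source setting the constraints \eqref{bnd:rly_S2} and \eqref{bnd:dst_S2}, which govern the transmission of $S_2$ alone, are absent; constraints \eqref{bnd:rly_S1} and \eqref{bnd:rly_S1S2} both reduce to $H(S_1|W_3) < \kappa I(X_1;Y_3|V_1,X_3)$, and \eqref{bnd:dst_S1} and \eqref{bnd:dst_S1S2} both reduce to $H(S_1|W) < \kappa I(X_1,X_3;Y)$. The factorization \eqref{eq:SepJointDist} collapses to $p(s_1,w_3,w)p(v_1)p(x_1|v_1)p(x_3|v_1)$. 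The crux is then to recover the general input law $p(s_1,w_3,w)p(x_1,x_3)$ demanded by the corollary: I would make the identification $V_1 = X_3$, under which $p(x_3|v_1)$ is degenerate, the marginal of $V_1$ equals the target marginal of $X_3$, and $p(x_1|v_1)=p(x_1|x_3)$, so that the joint law of $(X_1,X_3)$ can be an arbitrary $p(x_1,x_3)$. Since $X_3$ is then a deterministic function of $V_1$, we have $I(X_1;Y_3|V_1,X_3) = I(X_1;Y_3|X_3)$, and the two surviving conditions become exactly \eqref{eq:RelayJnt_cond}--\eqref{eq:DestJnt_cond}.

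For necessity, I would specialize the outer bound of \Thmref{thm:OuterGeneralMABRC}, again with $S_2=X_2=\phi$. In \eqref{bnd:outr_general_dst} the constraint \eqref{bnd:outr_general_dst_S2} becomes the vacuous $0 \le \kappa I(X_3;Y|X_1)$, \eqref{bnd:outr_general_dst_S1S2} coincides with \eqref{bnd:outr_general_dst_S1}, and the latter yields $H(S_1|W)\le \kappa I(X_1,X_3;Y)$. Likewise, in \eqref{bnd:outr_general_rly} the specialization leaves only $H(S_1|W_3)\le \kappa I(X_1;Y_3|X_3)$. Both hold for some $p(x_1,x_3)$, which is precisely \eqref{eq:relayJoint} with $<$ replaced by $\le$.

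The main obstacle—indeed the only non-mechanical point—lies in the sufficiency direction: \Thmref{thm:separationCond} is stated with conditionally independent channel inputs, i.e.\ the Markov chain $X_1 - V_1 - X_3$, whereas the corollary permits an arbitrary joint $p(x_1,x_3)$. The resolution is the auxiliary identification $V_1 = X_3$, and I would verify that it is consistent with the DF block-Markov structure underlying \Thmref{thm:separationCond}, namely that the relay's cooperative codeword is a deterministic function of the (old) bin index the relay has decoded, so that no part of the achievability argument is lost. Everything else is bookkeeping of which constraints persist under the single-source reduction.
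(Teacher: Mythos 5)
Your proposal is correct and is exactly the route the paper takes: the paper's proof of the corollary consists of the single line ``achievability follows from Theorem~\thmref{thm:separationCond} by assigning $X_3=V_1$ and $\mathcal{S}_2=\mathcal{X}_2=\mathcal{V}_2=\phi$; the converse follows from Theorem~\thmref{thm:OuterGeneralMABRC}.'' Your write-up simply makes explicit the bookkeeping (which constraints collapse, and why $V_1=X_3$ recovers the arbitrary joint law $p(x_1,x_3)$) that the paper leaves implicit.
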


\begin{proof}
    The achievability follows from \Thmref{thm:separationCond} by assigning $X_3=V_1$ and $\mathcal{S}_2 =\mathcal{X}_2 =\mathcal{V}_2=\phi$. The converse follows from \Thmref{thm:OuterGeneralMABRC}.
\end{proof}


For source-channel rate $\kappa=1$, the conditions in \eqref{eq:relayJoint} can also be obtained from \Thmref{thm:jointCond_CP} by letting $V_1=X_3$, $\mathcal{S}_2 =\mathcal{X}_2 =\mathcal{V}_2= \mT = \mQ = \phi$
and considering an input distribution independent of the sources. Observe that \Thmref{thm:jointCondFlip_CP} is not optimal for the CRBC:
consider the conditions in \Thmref{thm:jointCondFlip_CP} with $\mathcal{S}_2=\mathcal{X}_2=\mT = \mQ =\phi$. For this assignment
we obtain the following sufficient conditions:
  \begin{subequations} \label{eq:relayFlipJoint}
    \begin{eqnarray}
        H(S_1|W_3) &<& I(X_1;Y_3|X_3,S_1) \label{eq:RelayFlipJnt_cond} \\
        H(S_1|W) &<& I(X_1,X_3;Y|W), \label{eq:DestFlipJnt_cond}
    \end{eqnarray}

\noindent for some input distribution that factorizes as
\begin{equation}
\label{eqn:input_fac}
    p(s_1, w_3, w)p(x_1|s_1)p(x_3|s_1).
\end{equation}
 \end{subequations}

Note that the RHSs of  \eqref{eq:RelayFlipJnt_cond} and \eqref{eq:DestFlipJnt_cond} are smaller than or equal to the RHSs in \eqref{eq:RelayJnt_cond} and \eqref{eq:DestJnt_cond}, respectively. Moreover,
not all joint input distributions that are feasible for \cite[Thm. 3.1]{ErkipGunduz:07} are also feasible with \eqref{eqn:input_fac}. Hence, the conditions obtained from \Thmref{thm:jointCondFlip_CP}
for the CRBC setup with $\kappa = 1$ are stricter than those obtained from \Thmref{thm:jointCond_CP}, further
illustrating the fact that the two sets of sufficient conditions are not equivalent.
We conclude that the downside of using CPM to the destination as applied in this work is that it places constraints on the distribution chain, thereby constraining the achievable coordination between the sources and the relay.
Due to this restriction, when there is only a single source, the joint distributions of the source and the relay ($X_1$ and $X_3$) permitted for the scheme of \Thmref{thm:jointCondFlip_CP}
do not exhaust the entire space of joint distributions, and as a result, the source-channel sufficient conditions obtained from  \Thmref{thm:jointCondFlip_CP} are generally more restrictive than those obtained
from \Thmref{thm:jointCond_CP} for the single source case.
However, in the case of a MARC it is not possible to determine whether either of the schemes is universally better than the other.
\end{remark}

\begin{remark}
    Note that in \Thmref{thm:jointCond_CP}, it is not useful to generate the relay channel input statistically dependent on the common information, that is, on the auxiliary RV $Q$. To understand why, recall that in \Thmref{thm:jointCond_CP} SW source coding is used for encoding information to the destination, while CPM is used for encoding
information to the relay. The optimality of SW encoding \cite{SW:73} implies that letting the decoder know the common information  will not improve
 the constraints for the source decoder at the destination, as these are based on the SW decoder (see \eqref{eq:DestJntSourceCondE678_CP}).
 Moreover, note that even though CPM is used for encoding information to the relay, sending common
 information via the relay channel input will not improve the decoding constraints at the relay. This follows from the fact that in the DF
 scheme cooperation information is used with a delay of one block. Therefore, common information at the relay channel input corresponds to the source
 sequences of the {\em previous} block, which cannot improve the decoding of the source sequences of the {\em current} block at the relay, in contrast
 to \Thmref{thm:jointCondFlip_CP}. We conclude that in \Thmref{thm:jointCond_CP} we cannot benefit from generating the relay channel input statistically dependent on the common information.
%
%
%
\end{remark}

\begin{remark}
    In both \Thmref{thm:jointCond_CP} and \Thmref{thm:jointCondFlip_CP} we used a combination of SW coding and CPM. Since CPM can generally support higher source-channel rates, a natural question that arises is whether it is possible
    to design a scheme based only on CPM, namely to encode both the cooperation information forwarded by the relay (together with the sources), and the new information transmitted from the sources, using a superposition CPM scheme. This approach cannot be implemented in the framework of the
    current paper. This follows as the current work uses decoding  based on joint typicality, but joint typicality does not apply to different blocks of the same RV. For example, we cannot test the
    joint typicality of $\svec_b$ and $\svec_{b+1}$, as they belong to different time blocks. Using a CPM-only scheme would require us to carry out such tests. For example, consider the case in which the
    source pair $(S_1,S_2)$ has no common part, that is $\mT = \phi$, and also let $\mQ = \phi$. Using the CPM technique for sending information to both the relay and the destination would lead to
    the following relay decoding rule: assume that the relay knows $(\svec_{1,b-1},\svec_{2,b-1})$ at the end of block $b-1$. The relay decodes $(\svec_{1,b}, \svec_{2,b})$,
    by looking for a unique pair $(\tilde{\svec}_{1}, \tilde{\svec}_{2})\in\mS_1^n \times \mS_2^n$ such that:

\begin{align}
    & (\tilde{\svec}_{1}, \tilde{\svec}_{2}, \xvec_1(\tilde{\svec}_{1}, \svec_{1,b-1}), \xvec_2(\tilde{\svec}_{2}, \svec_{2,b-1}), \svec_{1,b-1}, \nonumber \\
		& \qquad \svec_{2,b-1}, \xvec_3(\svec_{1,b-1}, \svec_{2,b-1}), \wvec_{3,b}, \yvec_{3,b})
    \in \styp.
\label{eq:RelayJntDecTypeCPMBoth}
\end{align}

\noindent Note that $(\tilde{\svec}_{1}, \tilde{\svec}_{2})$ and $(\svec_{1,b-1}, \svec_{2,b-1})$ can not be jointly typical since they correspond to different
block indices: $(\tilde{\svec}_{1}, \tilde{\svec}_{2})$ corresponds to block $b$, while $(\svec_{1,b-1}, \svec_{2,b-1})$ corresponds to block $b-1$, and hence, they are independent of each other.
Similarly, the destination would require to check typicality across different blocks.
%

 We conclude that a CPM-only scheme cannot be used together with a joint typicality decoder.
 It may be possible to construct schemes based on a different decoder, or to implement CPM through intermediate RVs to overcome this difficulty, but these topics are left for future research.
\end{remark}

\begin{remark}
    A comparison of the decoding rules of \Thmref{thm:jointCond_CP} (see Appendix \ref{annex:jointProof_CP_Decoding}) and \Thmref{thm:jointCondFlip_CP}
    (see Appendix \ref{annex:jointProofFlip_CP_Decoding}) reveals a difference in the side information block indices used to assist in decoding at the relay and the destination. The decoding rules of \Thmref{thm:jointCond_CP}
    use side information block with the same index as that of the received vector, while the decoding rules of \Thmref{thm:jointCondFlip_CP} use side information block with an index earlier than that of
    the received vector. The difference stems from the fact that in \Thmref{thm:jointCond_CP}  cooperation between the relay and the sources is achieved via auxiliary RVs which represent bin indices, while
    in \Thmref{thm:jointCondFlip_CP} the cooperation is based on the source sequences.
In the DF scheme cooperation information is used with a delay of one block. Therefore, when cooperation is based on the source sequences (\Thmref{thm:jointCondFlip_CP}), then
the side information from the previous block is used for decoding since this is the side information that is correlated with the source sequences.
\end{remark}



\section{Conclusions} \label{sec:conclusions}

In this paper we considered transmission of arbitrarily correlated sources over MARCs and MABRCs with correlated side information at both the relay and the destination.
We first derived an  achievable source-channel rate for MARCs based on operational separation, which applies directly to MABRCs as well.
This result is established by using an irregular encoding scheme for the channel code. We  also showed that for both MABRCs and MARCs regular encoding is more restrictive than irregular encoding.
additionally, we obtained necessary conditions for the achievability of source-channel rates.

Then, we considered phase fading and Rayleigh fading MARCs with side information and identified conditions under which informational separation is optimal for these channels.
Conditions for the optimality of informational separation for fading MABRCs were also obtained.
The importance of this result lies in the fact that it supports
a modular system design (separate design of the source and channel codes) while achieving the optimal end-to-end performance. We note here that this is the
first time that optimality of separation is shown for a MARC or a MABRC configuration.

Lastly, we considered joint source-channel coding for DM MARCs and MABRCs for source-channel rate $\kappa=1$.
We presented two new joint source-channel coding schemes for which use a combination of SW source coding and joint source-channel coding based on CPM.
While in the first scheme CPM is used for encoding information to the relay and SW coding is used for encoding information to the destination; in the second scheme SW coding is used for
encoding information to the relay and CPM is used for encoding information to the destination. The different combinations of SW coding and CPM enable flexibility in the system design by
choosing one of the two schemes according to the qualities of the side information sequences and received signals at the relay and the destination.
In particular, the first scheme generally has looser decoding constraints at the relay, and therefore it is better when the source-relay link is the bottleneck,
while the second scheme generally has looser decoding constraints at the destination, and is more suitable to scenarios where the source-destination link is more limiting.

\appendices
\numberwithin{equation}{section}

\vspace{-0.1cm}
\section{Proof of Theorem \thmref{thm:separationCond}} \label{annex:proofSeparation}
Fix a distribution $p(v_1)p(x_1|v_1)p(v_2)p(x_2|v_2)p(x_3|v_1,v_2)$.
\subsection{Codebook construction}
For $i=1,2$, assign every $\svec_i \in \mS_i^m$ to one of $2^{mR_i^r}$ bins
independently according to a uniform distribution on $\msgCal_i^r \triangleq \{1,2,\dots,2^{mR_i^r}\}$. We refer to these two sets as the {\em relay bins}. Denote these assignments by $f_i^r$.
    Independent from the relay bin assignments, for $i=1,2$, assign every $\svec_i \in \mS_i^m$ to one of $2^{mR_i^d}$ bins independently according to a uniform distribution on $\msgCal_i^d \triangleq \{1,2,\dots,2^{mR_i^d}\}$.
We refer to these two sets as the {\em destination bins}. Denote these assignments by $f_i^d$.

    Next, generate a superposition channel codebook  with blocklength $n$, rates $\chR_i^d = \frac{1}{\kappa}R_i^d$, $i=1,2$, auxiliary vectors
    $\vvec_i(\msg_i^d), \msg_i^d \in \msgCal_i^d $, $i=1,2$, and channel codewords $\xvec_i(\msg_i^r, \msg_i^d)$, $(\msg_i^r,\msg_i^d) \in \msgCal_i^r \times \msgCal_i^d$, $i=1,2$, as detailed in
    \cite[Appendix A]{Kramer:2005}.


\subsection{Encoding}
\ifthenelse{\boolean{SquizFlag}}{}{}

Consider the  sequences and side information $s^{Bm}_{i,1} \in \mS^{Bm}_{i}, i=1,2$, $w_{3,1}^{Bm}\in\mW_3^{Bm}$, and $w^{Bm}\in\mW^{Bm}$, all of length $Bm$.
Partition each sequence into $B$ length $m$ subsequences, $\svec_{i,b}$, $i=1,2$, $\wvec_{3,b}$, and $\wvec_b$,  $b=1,2,\dots,B$.
A total of $Bm$ source samples are transmitted in $B+1$ blocks of $n$ channel symbols each.
For any fixed $(m,n)$ with $n \leq \kappa m  $, we can achieve a rate arbitrarily close to $\kappa = n/m$ by increasing $B$, i.e,
$(B+1)n/Bm \rightarrow \kappa$ as $B \rightarrow \infty$.

At block $1$, transmitter $i, i=1,2$, observes source subsequence $\svec_{i,1}$ and finds its corresponding relay bin index
$\msg_{i,1}^r  = f_i^r(\svec_{i,1})\in \msgCal_u^r$. It transmits the channel codeword $\xvec_i(\msg_{i,1}^r,1)$.
In block $b, b=2,\dots,B$, source terminal $i$ transmits the channel codeword $ \xvec_i(\msg_{i,b}^r,\msg_{i,b-1}^d)$,
where $\msg_{i,b}^r = f_i^r(\svec_{i,b})\in \msgCal_i^r$, and $\msg_{i,b-1}^d = f_i^d(\svec_{i,b-1}) \in \msgCal_i^d$.
In block $B+1$, the source terminal transmits $\xvec_i(1,\msg_{i,B}^d)$.

At block $b=1$, the relay simply transmits $\xvec_3(1,1)$.
Assume that at block $b, b=2,\dots,B,B+1$, the relay estimates $(\svec_{1,b-1},\svec_{2,b-1})$. Let $(\tsvec_{1,b-1},\tsvec_{2,b-1})$ denote the estimates.
The relay then finds the corresponding destination bin indices $\tilde{\msg}_{i,b-1}^d \in \msgCal_i^d, i=1,2$, and transmits the channel codeword
$\xvec_3(\tilde{\msg}_{1,b-1}^d,\tilde{\msg}_{2,b-1}^d)$.

\ifthenelse{\boolean{SquizFlag}}{}{}

\subsection{Decoding}

The relay decodes the source sequences sequentially trying to
reconstruct source blocks $\svec_{i,b}, i=1,2$, at the end of channel block $b$ as follows:
Let $(\tsvec_{1,b-1},\tsvec_{2,b-1})$ be the estimates of $(\svec_{1,b-1},\svec_{2,b-1})$ obtained
at the end of block $b-1$. Applying $f_1^d$ and $f_2^d$, the relay finds the corresponding destination bin indices $(\tmsg_{1,b-1}^d, \tmsg_{2,b-1}^d)$.
At time $b$ the relay channel decoder decodes $(\msg_{1,b}^r, \msg_{2,b}^r)$
by looking for a unique pair $(\tmsg_{1}^r, \tmsg_{2}^r) \in \mU_1^r\times\mU_2^r$ such that:
\begin{align}
    & \mspace{-12mu} \big(\vvec_1 \mspace{-1mu} (\tmsg_{1,b-1}^d), \vvec_2 \mspace{-1mu}(\tmsg_{2,b-1}^d), \xvec_1\mspace{-1mu}(\tmsg_{1}^r, \tmsg_{1,b-1}^d), \xvec_2\mspace{-1mu}(\tmsg_{2}^r, \tmsg_{2,b-1}^d), \nonumber \\
		& \mspace{-8mu} \xvec_3 \mspace{-2mu}(\tmsg_{1,b-1}^d, \mspace{-1mu}\tmsg_{2,b-1}^d \mspace{-1mu}), \mspace{-1mu} \yvec_{3,b} \mspace{-1mu} \big) \mspace{-4mu} \in \mspace{-4mu} \styp \mspace{-2mu}(\mspace{-1mu}V_1, \mspace{-2mu}V_2, \mspace{-1mu}X_1, \mspace{-1mu}X_2, \mspace{-1mu}X_3, \mspace{-2mu}Y_3 \mspace{-1mu}).
    \label{eq:RelayChanDecType}
\end{align}

The decoded relay bin indices, denoted $(\tmsg_{1,b}^r, \tmsg_{2,b}^r)$, are then given to the relay source decoder,
which estimates
$(\svec_{1,b}, \svec_{2,b})$.
The relay source decoder declares $(\tsvec_{1},\tsvec_{2}) \in \mS_1^m \times \mS_2^m$ as the decoded sequences if it is the
unique pair of sequences that satisfies $f_1^r(\tsvec_{1})= \tmsg_{1,b}^r, f_2^r(\tsvec_{2})= \tmsg^r_{2,b}$ and $(\tsvec_{1},\tsvec_{2},\wvec_{3,b}) \in \stypm(S_1,S_2,W_3)$. The decoded sequences are denoted by
$(\tsvec_{1,b},\tsvec_{2,b})$.

Decoding at the destination is done using backward decoding. The destination node waits until the end of channel
block $B+1$. It first tries to decode $(\svec_{1,B},\svec_{2,B})$ using the received signal at channel block $B+1$ and its
side information $\wvec_{B}$. Going backwards from the last channel block to the first, we assume that the destination has estimates
$(\hsvec_{1,b+1},\hsvec_{2,b+1})$ of $(\svec_{1,b+1},\svec_{2,b+1})$, and consider decoding of $(\svec_{1,b},\svec_{2,b})$. From $\hsvec_{i,b+1}, i=1,2$, the destination finds
the relay bin indices $\hmsg_{i,b+1}^r = f_i^r(\hsvec_{i,b+1})$. At block $b+1$ the destination channel decoder first estimates the destination
bin indices $(\msg_{1,b}^d, \msg_{2,b}^d)$
by looking for
a unique pair $(\hmsg_{1}^d, \hmsg_{2}^d)\in\mU_1^d\times\mU_2^d$ such that:
\begin{align}
    & \mspace{-10mu} (\vvec_1(\hmsg_{1}^d), \vvec_2(\hmsg_{2}^d), \xvec_1(\hmsg_{1,b+1}^r, \hmsg_{1}^d), \xvec_2(\hmsg_{2,b+1}^r, \hmsg_{2}^d), \nonumber \\
		& \quad \xvec_3(\hmsg_{1}^d, \hmsg_{2}^d), \yvec_{b+1}) \in \styp(V_1,V_2,X_1,X_2,X_3,Y_3).
\label{eq:DestChanDecType}
\end{align}

The decoded destination bin indices, denoted $(\hmsg_{1,b}^d, \hmsg_{2,b}^d)$, are then given to the destination source decoder,
which estimates the source sequences $(\svec_{1,b}, \svec_{2,b})$. 
The destination source decoder declares $(\hsvec_{1},\hsvec_{2}) \in \mS_1^m \times \mS_2^m$ as the decoded sequences if it is the unique pair of sequences that satisfies
$f_1^d(\hsvec_{1})= \hmsg_{1,b}^d, f_2^d(\hsvec_{2})= \hmsg_{2,b}^d$ and $(\hsvec_{1},\hsvec_{2},\wvec_{b}) \in \stypm(S_1,S_2,W)$. The decoded sequences are denoted by $(\hsvec_{1,b},\hsvec_{2,b})$.

\ifthenelse{\boolean{SquizFlag}}{}{}

\ifthenelse{\boolean{SquizFlag}}{}{}

\ifthenelse{\boolean{SquizFlag}}{}{}


\section{Proof of Theorem \thmref{thm:jointCond_CP}} \label{annex:jointProof_CP}

\begin{figure*}[!b]
\normalsize
\setcounter{MYtempeqncnt}{\value{equation}}
\setcounter{equation}{2}

\vspace*{4pt}
\hrulefill
\vspace{-0.3cm}

    \begin{align}
    \bar{P}_{r}^{(n)}
    & \triangleq \mspace{-15mu} \sum_{(\msg_{1,b-1}, \msg_{2,b-1}) \in \msgCal_1 \times \msgCal_2}{\mspace{-50mu} p(\msg_{1,b-1}, \msg_{2,b-1})}
        \sum_{(\svec_{1,b},\svec_{2,b}) \in \mS_1^n \times \mS_2^n}{\mspace{-40mu} p(\svec_{1,b},\svec_{2,b})}
            \Pr \big\{ E_b^{r}(\svec_{1,b},\svec_{2,b};\msg_{1,b-1}, \msg_{2,b-1}) \big\}. \label{eq:RelayDecErrProbDef_Basic_CP}
\end{align}

\vspace*{4pt}
\hrulefill
\vspace{-0.3cm}

\begin{subequations} \label{eq:RelayDecErrProbDef_CP}
\begin{align}
    & \sum_{(\svec_{1,b},\svec_{2,b}) \in \mS_1^n \times \mS_2^n}{\mspace{-24mu} p(\svec_{1,b},\svec_{2,b})} \Pr \big\{ E_b^{r}(\svec_{1,b},\svec_{2,b};\msg_{1,b-1}, \msg_{2,b-1}) \big\} \nonumber \\
    & \mspace{50mu} \leq \mspace{-24mu} \sum_{(\svec_{1,b},\svec_{2,b}, \wvec_{3,b}) \notin \styp(S_1,S_2,W_3)}{\mspace{-54mu} p(\svec_{1,b},\svec_{2,b}, \wvec_{3,b})} \nonumber \\
    & \qquad\qquad   + \!\!\!\! \!\!\!\! \!\!\!\! \!\!\!\! \!\!\!\! \sum_{(\svec_{1,b},\svec_{2,b}, \wvec_{3,b}) \in \styp(S_1,S_2,W_3)}{\mspace{-54mu} \!\!\!\!\!\!\!\!\!\!p(\svec_{1,b},\svec_{2,b},\wvec_{3,b})}
        \Pr \big\{ E_b^{r}(\svec_{1,b},\svec_{2,b};\msg_{1,b-1}, \msg_{2,b-1}) | (\svec_{1,b},\svec_{2,b}, \wvec_{3,b}) \in \styp(S_1,S_2,W_3) \big\} \label{eq:RelayDecErrProbDef_ub_2_CP} \\
%
    & \mspace{50mu} \leq \epsilon + \mspace{-24mu} \sum_{(\svec_{1,b},\svec_{2,b}, \wvec_{3,b}, \tvec_b) \in \styp(S_1,S_2,W_3,T)}{\mspace{-70mu} p(\svec_{1,b},\svec_{2,b}, \wvec_{3,b})}
        \Pr \big\{ E_b^{r}(\svec_{1,b},\svec_{2,b};\msg_{1,b-1}, \msg_{2,b-1}) | \msD_b \big\} \label{eq:RelayDecErrProbDef_AddCommonPart_2_CP},
\end{align}
\end{subequations}

\setcounter{equation}{\value{MYtempeqncnt}}

\end{figure*}

Fix a distribution $p(s_1,s_2,w_3,w)p(q)p(v_1)p(x_1|s_1,v_1,q)$ $p(v_2)p(x_2|s_2,v_2,q)p(x_3|v_1,v_2)p(y_3,y|x_1,x_2,x_3)$.
\subsection{Codebook construction}
For $i=1,2$, assign every $\svec_i \in \mS_i^n$ to one of $2^{nR_i}$ bins independently according to a uniform distribution on $\msgCal_i \triangleq \{1,2,\dots,2^{nR_i}\}$. Denote this assignment by $f_i, i=1,2$.

For the channel codebook, for each $i=1,2$, generate $2^{nR_i}$ codewords $\vvec_i(\msg_i), \msg_i \in \msgCal_i $, by choosing the letters $v_{i,k}(\msg_i), k = 1,2,\dots,n$,
independently according to the distribution $p_{V_i}(v_{i,k}(\msg_i))$.
For each $\tvec \in \mT^n$ generate one length $n$ codeword $\qvec(\tvec)$ by choosing the letters $q_k$ independently with distribution $p_{Q}(q_{k})$,  $k = 1,2,\dots, n$.
For each pair $(\svec_i, \msg_i) \in \mS_i^n \times \msgCal_i, i=1,2$, find the corresponding $\tvec = h_i(\svec_i)$, and generate one length $n$ codeword
$\xvec_i(\svec_i, \msg_i, \qvec(\tvec))$, $\qvec \in \mQ^n$, by choosing the letters $x_{i,k}(\svec_i, \msg_i, \qvec(\tvec))$ independently with distribution
$p_{X_i|S_i,V_i, Q}(x_{i,k}|s_{i,k},v_{i,k}(\msg_i), q_k(\tvec))$, $k = 1,2,\dots, n$.
Finally, generate one length-$n$ relay codeword $\xvec_3(\msg_1,\msg_2)$ for ~each ~pair ~$(\msg_1, \msg_2) \in \msgCal_1 \times \msgCal_2$ ~by ~choosing ~$x_{3,k}(\msg_1,\msg_2)$ ~independently ~with ~distribution
$p_{X_3|V_1,V_2}(x_{3,k}|v_{1,k}(\msg_1),v_{2,k}(\msg_2))$, $k = 1,2,\dots, n$.

\subsection{Encoding} \label{annex:jointProof_CP_Encoding}
    Consider the sequences $s^{Bn}_{i,1} \in \mS^{Bn}_i, i=1,2$, $w_{3,1}^{Bn}\in\mW_3^{Bn}$, and $w^{Bn}\in\mW^{Bn}$, all of length $Bn$.
    Partition each sequence into $B$ length-$n$ subsequences, $\svec_{i,b}$, $i=1,2$, $\wvec_{3,b}$, and $\wvec_b$,  $b=1,2,\dots,B$.
A total of $Bn$ source samples are transmitted in $B+1$ blocks of $n$ channel symbols each.
At block $1$, source terminal $i$, $i=1,2$, finds $\tvec_i = h_i(\svec_{i,1})$, and transmits the channel codeword $\xvec_i(\svec_{i,1}, 1, \qvec(h_i(\svec_{i,1})))$.
At block $b, b=2,\dots,B$, source terminal $i$, $i=1,2$, transmits the channel codeword $\xvec_i(\svec_{i,b}, \msg_{i,b-1}, \qvec(h_i(\svec_{i,b})))$, where $\msg_{i,b-1} = f_i(\svec_{i,b-1}) \in \msgCal_i$
is the bin index of source vector $\svec_{i,b-1}$.
Let $(\avec_1, \avec_2) \in \mS_1^n \times \mS_2^n$ be two sequences generated i.i.d according to $p(\avec_1,\avec_2) = \prod_{k=1}^{n}{p_{S_1,S_2}(a_{1,k}, a_{2,k})}$. These sequences are known to all nodes.
At block $B+1$, source terminal $i$ transmits $\xvec_i(\avec_i,\msg_{i,B}, \qvec(h_i(\avec_i)))$.

At block $b=1$, the relay transmits $\xvec_3(1,1)$.
Assume that at block $b, b=2,\dots,B,B+1$, the relay has estimates $(\tilde{\svec}_{1,b-1}, \tilde{\svec}_{2,b-1})$ of $(\svec_{1,b-1}, \svec_{2,b-1})$. It then finds the corresponding bin indices
$\tilde{\msg}_{i,b-1} = f_i(\tilde{\svec}_{1,b-1}) \in \msgCal_i, i=1,2$, and transmits the channel codeword $\xvec_3(\tilde{\msg}_{1,b-1},\tilde{\msg}_{2,b-1})$.

\subsection{Decoding} \label{annex:jointProof_CP_Decoding}
The relay decodes the source sequences sequentially, trying to reconstruct $(\svec_{1,b}, \svec_{2,b})$ at the end of channel block $b$ as follows:
Let $(\tsvec_{1,b-1},\tsvec_{2,b-1})$ be the estimates of $(\svec_{1,b-1},\svec_{2,b-1})$ obtained at the end of block $b-1$. The relay thus knows the corresponding bin indices $(\tmsg_{1,b-1},\tmsg_{2,b-1})$.
Using this information, its received signal $\yvec_{3,b}$, and the side information $\wvec_{3,b}$, the relay decodes $(\svec_{1,b}, \svec_{2,b})$, by looking for a unique pair $(\tsvec_{1}, \tsvec_{2})\in \mS_1^n\times\mS_2^n$
such that:
\begin{align}
    & \mspace{-8mu} \big(\tsvec_{1}, \tsvec_{2}, \ttvec, \qvec(\ttvec), \vvec_1(\tmsg_{1,b-1}), \vvec_2(\tmsg_{2,b-1}), \xvec_1(\tsvec_{1}, \tmsg_{1,b-1}, \qvec(\ttvec)), \nonumber \\
		& \mspace{10mu}  \xvec_2(\tsvec_{2}, \tmsg_{2,b-1}, \qvec(\ttvec)), \xvec_3(\tmsg_{1,b-1}, \tmsg_{2,b-1}), \wvec_{3,b}, \yvec_{3,b}\big) \nonumber\\
    & \mspace{16mu}  \in \styp (S_1,S_2,T,Q,V_1,V_2,X_1,X_2,X_3,W_3,Y_3),
\label{eq:RelayJntDecType_CP}
\end{align}

\noindent where $\ttvec = h_1(\tsvec_{1}) = h_2(\tsvec_{2})$. Denote the decoded sequence $(\tsvec_{1,b}, \tsvec_{2,b})$.

Decoding at the destination is done using backward decoding. Let $\boldsymbol{\alpha} \in \mW^n$ be a sequence generated i.i.d according to $p_{W|S_1,S_2}(\alpha_k | a_{1,k}, a_{2,k}), k=1,2,\dots,n$.
The destination node waits until the end of channel block $B+1$. It first tries to decode $(\svec_{1,B},\svec_{2,B})$ using the received signal at channel block $B+1$ and $\boldsymbol{\alpha}$.
Going backwards from the last channel block to the first, we assume that the destination has estimates $(\hsvec_{1,b+1},\hsvec_{2,b+1})$ of $(\svec_{1,b+1},\svec_{2,b+1})$, and therefore has the
estimates $\htvec_{b+1} = h_1(\hsvec_{1,b+1}) = h_2(\hsvec_{2,b+1})$ and $\qvec(\htvec_{b+1})$.
At block $b+1$ the destination channel decoder first estimates the destination bin indices $\hmsg_{i,b}, i=1,2$, corresponding to $\svec_{i,b}$, based on its received signal $\yvec_{b+1}$ and the side information $\wvec_{b+1}$,
by looking for a unique pair $(\hmsg_{1}, \hmsg_{2})\in\mU_1\times\mU_2$ such that:
\begin{align}
    & \mspace{-10mu} \big(\hsvec_{1,b+1}, \hsvec_{2,b+1}, \htvec_{b+1}, \qvec(\htvec_{b+1}), \vvec_1(\hmsg_{1}), \vvec_2(\hmsg_{2}), \nonumber \\
		& \xvec_1(\hsvec_{1,b+1}, \hmsg_{1}, \qvec(\htvec_{b+1})), \xvec_2(\hsvec_{2,b+1}, \hmsg_{2}, \qvec(\htvec_{b+1})), \nonumber\\
    & \mspace{15mu} \xvec_3(\hmsg_{1}, \hmsg_{2}), \wvec_{b+1}, \yvec_{b+1}\big) \nonumber \\
		& \mspace{30mu} \in \styp (S_1,S_2,T,Q,V_1,V_2,X_1,X_2,X_3,W,Y).
\label{eq:DestJntChanDecType_CP}
\end{align}

    The decoded destination bin indices, denoted by $(\hmsg_{1,b},\hmsg_{2,b})$, are then given to the destination source decoder,
   which estimates $(\svec_{1,b}, \svec_{2,b})$ by looking for a
    unique pair of sequences $(\hsvec_{1},\hsvec_{2})\in\mS_1^n\times\mS_2^n$ that satisfies $f_1(\hsvec_{1})= \hmsg_{1,b}, f_2(\hsvec_{2})= \hmsg_{2,b}$ and $(\hsvec_{1},\hsvec_{2},\wvec_{b}) \in \styp(S_1,S_2,W)$.
    The decoded sequences are denoted by $(\hsvec_{1,b},\hsvec_{1,b})$.


\subsection{Error Probability Analysis} \label{annex:jointErrAnalysis_CP}


We start with the relay error probability analysis.
Let $E_b^{r}(\svec_{1,b},\svec_{2,b};\msg_{1,b-1}, \msg_{2,b-1})$ denote the relay decoding error event in block $b$, assuming $(\msg_{1,b-1}, \msg_{2,b-1})$ are available at the relay,
and $(\svec_{1,b},\svec_{2,b})$ are the source sequences at block $b$.
Thus, this error event is the event that $(\tsvec_{1,b},\tsvec_{2,b}) \ne (\svec_{1,b},\svec_{2,b})$. Let $\msD_b$ denote the event that $(\svec_{1,b},\svec_{2,b}, \wvec_{3,b}, \tvec_b) \in \styp(S_1,S_2, W_3,T)$.
The average decoding error probability at the relay in block $b$, $\bar{P}_{r}^{(n)}$, is defined in \eqref{eq:RelayDecErrProbDef_Basic_CP} at the bottom of the page.
\noindent In the following we show that the inner sum in \eqref{eq:RelayDecErrProbDef_Basic_CP} can be upper bounded independently of $(\msg_{1,b-1}, \msg_{2,b-1})$.
Therefore, for any {\em fixed} value of $(\msg_{1,b-1}, \msg_{2,b-1})$ we have \eqref{eq:RelayDecErrProbDef_CP} at the bottom of the page,
\noindent where \eqref{eq:RelayDecErrProbDef_ub_2_CP} follows from the union bound and
\eqref{eq:RelayDecErrProbDef_AddCommonPart_2_CP}
follows from the AEP \cite[Ch. 5.1]{YeungBook}, for sufficiently large $n$,  and
as $\tvec_b$ is a deterministic function of $(\svec_{1,b},\svec_{2,b})$. This
deterministic relationship implies that  $(\svec_{1,b},\svec_{2,b},\wvec_{3,b})\in\styp(S_1,S_2,W_3)$ if and only if $(\svec_{1,b},\svec_{2,b},\wvec_{3,b},\tvec_b)\in\styp(S_1,S_2,W_3,T)$.
Note also that  \eqref{eq:RelayDecErrProbDef_AddCommonPart_2_CP} follows similarly to \cite[Eq. (16)]{Cover:80}.
Next, we show that for $(\svec_{1,b},\svec_{2,b}, \wvec_{3,b}, \tvec_b) \in \styp(S_1,S_2, W_3,T)$, the summands in \eqref{eq:RelayDecErrProbDef_AddCommonPart_2_CP} can be upper bounded independently
of $(\svec_{1,b},\svec_{2,b}, \wvec_{3,b})$, for any fixed value of $(\msg_{1,b-1}, \msg_{2,b-1})$.

\setcounter{equation}{4}

Let $\eps_0$, $\eps_1$ and $\eps_2$ be positive numbers such that $\eps_0 \ge \eps_2 \ge \eps_1 > \eps$ and $\eps_0 \rightarrow 0$ as $\eps \rightarrow 0$.
Assuming correct decoding at block $b-1$ (hence $(\msg_{1,b-1}, \msg_{2,b-1})$ are available at the relay), we define the following events:
\begin{align*}
    E_1^{r} \triangleq & \big\{(\svec_{1,b}, \svec_{2,b}, \tvec_b, \Qvec(\tvec_b), \Vvec_1(\msg_{1,b-1}), \Vvec_2(\msg_{2,b-1}), \\
		& \quad \Xvec_1(\svec_{1,b}, \msg_{1,b-1}, \Qvec(\tvec_b)), \Xvec_2(\svec_{2,b}, \msg_{2,b-1}, \Qvec(\tvec_b)),  \\
    & \qquad \Xvec_3(\msg_{1,b-1}, \msg_{2,b-1}), \wvec_{3,b}, \Yvec_{3,b}) \notin \styp \big\}, \\
    E_2^r  \triangleq & \big\{\exists (\tilde{\svec}_{1}, \tilde{\svec}_{2}) \in \mS_1^n\times\mS_2^n: \\
			& \quad (\tilde{\svec}_{1}, \tilde{\svec}_{2}) \neq (\svec_{1,b}, \svec_{2,b}), \ttvec = h_1(\tsvec_1) = h_2(\tsvec_2), \\
			& \qquad \big(\tilde{\svec}_{1}, \tilde{\svec}_{2}, \ttvec, \Qvec(\ttvec), \Vvec_1(\msg_{1,b-1}), \Vvec_2(\msg_{2,b-1}), \\
    & \qquad\quad \Xvec_1(\tilde{\svec}_{1}, \msg_{1,b-1},\Qvec(\ttvec)), \Xvec_2(\tilde{\svec}_{2}, \msg_{2,b-1},\Qvec(\ttvec)), \\
		& \qquad \qquad \Xvec_3(\msg_{1,b-1}, \msg_{2,b-1}), \wvec_{3,b}, \Yvec_{3,b} \big) \in \styp \big\}.
\end{align*}

\noindent From the AEP \cite[Ch. 5.1]{YeungBook}, for sufficiently large $n$, $\Pr \LL{\{} E^{r}_1 |\msD_b \RR{\}} \le \eps$. Therefore we can bound
\begin{align}
    & \Pr \big\{ E_b^{r}(\svec_{1,b},\svec_{2,b};\msg_{1,b-1}, \msg_{2,b-1}) | \msD_b \big\} \nonumber \\
		& \qquad \qquad \qquad \qquad \quad \leq \epsilon +  \Pr \big\{ E_2^{r} | (E_1^{r})^{c} \big\}.
\label{eq:RelayDecErrProb1_CP}
\end{align}

The event $E_2^r$ is the union of the following events:
\begin{align*}
    E_{21}^{r} \triangleq & \big\{\exists \tilde{\svec}_{1}\in\mS_1^n: \tilde{\svec}_{1} \neq \svec_{1,b}, h_1(\tsvec_1) = h_2(\svec_{2,b}) = \tvec_b, \\
		& \quad \big( \tilde{\svec}_{1}, \svec_{2,b}, \tvec_b, \Qvec(\tvec_b), \Vvec_1(\msg_{1,b-1}), \Vvec_2(\msg_{2,b-1}),\\
    & \qquad \Xvec_1(\tilde{\svec}_{1}, \msg_{1,b-1}, \Qvec(\tvec_b)), \Xvec_2(\svec_{2,b}, \msg_{2,b-1}, \Qvec(\tvec_b)), \\
		& \qquad \quad \Xvec_3(\msg_{1,b-1}, \msg_{2,b-1}), \wvec_{3,b}, \Yvec_{3,b} \big) \in \styp \big\}   \\
    E_{22}^r \triangleq & \big\{\exists \tilde{\svec}_{2} \in\mS_2^n: \tilde{\svec}_{2} \neq \svec_{2,b}, h_1(\svec_{1,b}) = h_2(\tsvec_2) = \tvec_b, \\
		& \quad \big( \svec_{1,b}, \tilde{\svec}_{2}, , \tvec_b, \Qvec(\tvec_b),
    \Vvec_1(\msg_{1,b-1}), \Vvec_2(\msg_{2,b-1}),\\
    & \qquad \Xvec_1(\svec_{1,b}, \msg_{1,b-1}, \Qvec(\tvec_b)), \Xvec_2(\tilde{\svec}_{2}, \msg_{2,b-1}, \Qvec(\tvec_b)), \\
		& \qquad \quad \Xvec_3(\msg_{1,b-1}, \msg_{2,b-1}), \wvec_{3,b}, \Yvec_{3,b} \big) \in \styp \big\} \\
    E_{23}^r \triangleq & \big\{\exists (\tilde{\svec}_{1},\tsvec_2)\in\mS_1^n\times\mS_2^n: \\
		& \quad \tilde{\svec}_{1} \neq \svec_{1,b},  \tilde{\svec}_{2} \neq \svec_{2,b}, h_1(\tsvec_1) = h_2(\tsvec_2) = \tvec_b, \\
		& \qquad \big( \tilde{\svec}_{1}, \tilde{\svec}_{2}, \tvec_b, \Qvec(\tvec_b), \Vvec_1(\msg_{1,b-1}), \Vvec_2(\msg_{2,b-1}), \\
		& \qquad \quad \Xvec_1(\tilde{\svec}_{1}, \msg_{1,b-1}, \Qvec(\tvec_b)),  \Xvec_2(\tilde{\svec}_{2}, \msg_{2,b-1}, \Qvec(\tvec_b)), \\
		& \qquad \qquad \Xvec_3(\msg_{1,b-1}, \msg_{2,b-1}), \wvec_{3,b}, \Yvec_{3,b} \big)
    \in \styp \big\} \\
    E_{24}^r \triangleq & \big\{\exists (\tilde{\svec}_{1},\tsvec_2)\in\mS_1^n\times\mS_2^n: \tilde{\svec}_{1} \neq \svec_{1,b},  \tilde{\svec}_{2} \neq \svec_{2,b}, \\
		& \quad h_1(\tsvec_1) = h_2(\tsvec_2) = \ttvec\ne \tvec_b, \Qvec(\ttvec) \ne \Qvec(\tvec_b), \\
		& \qquad  \big( \tilde{\svec}_{1}, \tilde{\svec}_{2}, \ttvec, \Qvec(\ttvec), \Vvec_1(\msg_{1,b-1}),\Vvec_2(\msg_{2,b-1}), \\
		& \qquad \quad \Xvec_1(\tilde{\svec}_{1}, \msg_{1,b-1}, \Qvec(\ttvec)), \Xvec_2(\tilde{\svec}_{2}, \msg_{2,b-1}, \Qvec(\ttvec)), \\
		& \qquad \qquad \Xvec_3(\msg_{1,b-1}, \msg_{2,b-1}), \wvec_{3,b}, \Yvec_{3,b} \big) \in \styp \big\} \\
    E_{25}^r \triangleq & \big\{\exists (\tilde{\svec}_{1},\tsvec_2)\in\mS_1^n\times\mS_2^n: \tilde{\svec}_{1} \neq \svec_{1,b}, \tilde{\svec}_{2} \neq \svec_{2,b}, \\
		& \quad h_1(\tsvec_1) = h_2(\tsvec_2)= \ttvec \ne \tvec_b, \Qvec(\ttvec) = \Qvec(\tvec_b),\\
    & \qquad \big( \tilde{\svec}_{1}, \tilde{\svec}_{2}, \ttvec, \Qvec(\ttvec),\Vvec_1(\msg_{1,b-1}),\Vvec_2(\msg_{2,b-1}), \\
		& \qquad \quad \Xvec_1(\tilde{\svec}_{1}, \msg_{1,b-1}, \Qvec(\ttvec)), \Xvec_2(\tilde{\svec}_{2}, \msg_{2,b-1}, \Qvec(\ttvec)), \\
		& \qquad  \qquad \Xvec_3(\msg_{1,b-1}, \msg_{2,b-1}), \wvec_{3,b}, \Yvec_{3,b} \big) \in \styp \big\}.
\end{align*}

\noindent Hence, by the union bound it follows that
   $ \Pr \big\{ E_2^{r} | (E_1^{r})^{c} \big\} = \sum_{j=1}^{5}{\Pr \big\{ E_{2j}^{r} | (E_1^{r})^{c} \big\}}$.
%
To bound $\Pr \big\{ E_{21}^{r} | (E_1^{r})^{c} \big\}$ we first define the event $E_{21}^{r}(\tilde{\svec}_1)$ as follows
\begin{align}
     & \mspace{-15mu} E_{21}^{r}(\tilde{\svec}_1) \mspace{-2mu} \triangleq \mspace{-2mu} \big\{ h_1(\tsvec_1) = h_2(\svec_{2,b})=\tvec_b, \nonumber \\
		& \mspace{60mu} \big(\tilde{\svec}_{1}, \svec_{2,b}, \tvec_b, \Qvec(\tvec_b), \Vvec_1(\msg_{1,b-1}), \Vvec_2(\msg_{2,b-1}), \nonumber \\
    & \mspace{68mu} \Xvec_1(\tilde{\svec}_{1}, \msg_{1,b-1}, \Qvec(\tvec_b)), \Xvec_2(\svec_{2,b}, \msg_{2,b-1}, \Qvec(\tvec_b)), \nonumber \\
		& \mspace{76mu} \Xvec_3(\msg_{1,b-1}, \msg_{2,b-1}), \wvec_{3,b}, \Yvec_{3,b}\big) \in \styp \big\}.
\label{eq:Relay_E21SpecificDef_CP}
\end{align}

\noindent Recalling that for $E_{21}^r$ then $\tsvec_1 \ne \svec_{1,b}$, we have
\begin{equation}
	\Pr \big\{ E_{21}^{r} | (E_1^{r})^{c} \big\} =
    \mspace{-60mu} \sum_{\mspace{20mu} \substack{\tilde{\svec}_1 \neq \svec_{1,b}, \\ \tilde{\svec}_1 \in \styp(S_1|\svec_{2,b}, \wvec_{3,b}, \tvec_b)}}{\mspace{-45mu} \Pr \big\{ E_{21}^{r}(\tilde{\svec}_1) | (E_1^{r})^{c} \big\}}.
\label{eq:RelayE21_Bound}
\end{equation}

\noindent Note that in \eqref{eq:RelayE21_Bound} we consider $\tilde{\svec}_1 \in \styp(S_1|\svec_{2,b}, \wvec_{3,b}, \tvec_b) \big\}$,
as otherwise $\Pr\big\{E_{21}^{r}(\tilde{\svec}_1)| (E_1^{r})^{c}\big\}=0$. Therefore, in the following we upper bound $\Pr \Big\{ E_{21}^{r}(\tilde{\svec}_1) \Big| (E_1^{r})^{c}, \tilde{\svec}_1 \in \styp(S_1|\svec_{2,b}, \wvec_{3,b}, \tvec_b) \Big\}$ via an expression that~is~independent of $\tsvec_1$.
To reduce clutter, let us denote $\svec_{2,b}$, $\tvec_{b}$, $\qvec(\tvec_b)$, $\vvec_1(\msg_{1,b-1})$, $\vvec_2(\msg_{2,b-1})$, $\xvec_1(\tilde{\svec}_{1}, \msg_{1,b-1}, \qvec(\tvec_{b}))$, $\xvec_2(\svec_{2,b}, \msg_{2,b-1}, \qvec(\tvec_{b}))$, $\xvec_3(\msg_{1,b-1},\msg_{2,b-1})$, $\wvec_{3,b}$, $\yvec_{3,b}$ by $\svec_{2}$,$\tvec$, $\qvec$, $\vvec_1$, $\vvec_2$, $\tilde{\xvec}_1$, $\xvec_2$, $\xvec_3$, $\wvec_{3}$, $\yvec_{3}$, respectively.
%
%
Note that the joint distribution 
obeys
\begin{align}
    & p(\tilde{\svec}_1, \svec_2, \tvec, \qvec, \vvec_1, \vvec_2, \tilde{\xvec}_1, \xvec_2, \xvec_3, \wvec_3, \yvec_3) \nonumber \\
		& \quad = \prod_{j=1}^{n}{ p(s_{2,j}, w_{3,j}, t_j) p(\tilde{s}_{1,j})} \times \nonumber \\
		& \mspace{80mu} p(v_{1,j}, v_{2,j}, x_{3,j}, q_j|\tilde{s}_{1,j},s_{2,j},w_{3,j}, t_j) \times \nonumber \\
    & \mspace{95mu} p(x_{2,j}, y_{3,j}|s_{2,j}, w_{3,j}, t_j, v_{1,j}, v_{2,j}, x_{3,j}, q_j) \times \nonumber \\
		& \mspace{110mu} p(\tilde{x}_{1,j}| v_{1,j}, v_{2,j},x_{3,j}, q_j,\tilde{s}_{1,j}).
\label{eq:RelayE21_Dist_CP}
\end{align}

Next, we use the assignments
\begin{align}
    & \zvec_1 = (\svec_{2}, \wvec_3, \tvec), \quad \zvec_2=\tilde{\svec}_{1}, \quad \Zvec_3=(\Vvec_1,\Vvec_2, \Xvec_3, \Qvec), \nonumber \\
		& \Zvec_4=\tilde{\Xvec}_1, \quad \Zvec_5 = (\Xvec_2,\Yvec_{3}).
\label{eq:Relay_E21_Zvals_CP}
\end{align}
\noindent Equation \eqref{eq:RelayE21_Dist_CP} shows that the assignments \eqref{eq:Relay_E21_Zvals_CP} satisfy the assumptions of \cite[Lemma, Appendix A]{Cover:80}.
Using this lemma we bound $\Pr \big\{ E_{21}^{r}(\tilde{\svec}_1) | (E_1^{r})^{c}, \tilde{\svec}_1 \in \styp(S_1|\svec_{2,b}, \wvec_{3,b}, \tvec_b) \big\}$
 as follows
\begin{align}
    & \Pr \big\{ E_{21}^{r}(\tilde{\svec}_1) | (E_1^{r})^{c}, \tilde{\svec}_1 \in \styp(S_1|\svec_{2,b}, \wvec_{3,b}, \tvec_b) \big\} \nonumber \\
    & \qquad \leq 2^{-n[I(X_1;Y_3|S_2,V_1,X_2,X_3,W_3,Q) -\epsilon_0]},
\label{eq:Relay_E21spec_errProb_CP}
\end{align}

\begin{figure*}[!b]
\normalsize
\setcounter{MYtempeqncnt}{\value{equation}}
\setcounter{equation}{16}

\vspace*{4pt}
\hrulefill
\vspace{-0.3cm}

\begin{align}
	& \mspace{-15mu} \Pr \big\{ E_{24}^{r}(\tsvec_1, \tsvec_2, \ttvec) | \msA_b, (E_1^{r})^{c} \big\} \nonumber \\
	& \quad \le \sum_{\tqvec \in \mQ^n}{\Pr \Big\{ \Qvec(\ttvec) = \tqvec \big| \msA_b, (E_1^{r})^{c} \Big\} \Pr \Big\{ \Qvec(\tvec_b) \ne \tqvec \big| \msA_b, (E_1^{r})^{c} \Big\} \times} \nonumber \\	
	    & \mspace{100mu} \Pr \Big\{ \Big( \tsvec_{1}, \tsvec_{2}, \ttvec, \tqvec, \Vvec_1(\msg_{1,b-1}), \Vvec_2(\msg_{2,b-1}), \Xvec_1(\tsvec_{1}, \msg_{1,b-1}, \tqvec), \Xvec_2(\tsvec_{2}, \msg_{2,b-1}, \tqvec),  \nonumber \\
    & \mspace{170mu} \Xvec_3(\msg_{1,b-1}, \msg_{2,b-1}), \wvec_{3,b}, \Yvec_{3,b} \Big) \in \styp \big| \msA_b, (E_1^{r})^{c}, \Qvec(\tvec_b) \ne \tqvec \Big\} \nonumber \\
		& \quad \le \sum_{\tqvec \in \styp(Q)}{\Pr \Big\{ \Qvec(\ttvec) = \tqvec \big| \msA_b, (E_1^{r})^{c} \Big\} }
\Pr \Big\{ \Big( \tsvec_{1}, \tsvec_{2}, \ttvec, \tqvec, \Vvec_1(\msg_{1,b-1}), \Vvec_2(\msg_{2,b-1}), \Xvec_1(\tsvec_{1}, \msg_{1,b-1}, \tqvec),   \nonumber \\
    & \mspace{170mu} \Xvec_2(\tsvec_{2}, \msg_{2,b-1}, \tqvec),\Xvec_3(\msg_{1,b-1}, \msg_{2,b-1}), \wvec_{3,b}, \Yvec_{3,b} \Big) \in \styp \big| \msA_b, (E_1^{r})^{c}, \Qvec(\tvec_b) \ne \tqvec \Big\}
        \label{eq:Relay_E24_expand_BasicTerm_CP},
\end{align}

\setcounter{equation}{\value{MYtempeqncnt}}

\end{figure*}

\noindent where $\eps_0 = 8\eps$, and we used
the fact that $T$ is a function of $S_2$, 
and the
Markov chains $V_2 - (S_2,V_1,X_2,$ $X_3,W_3,Q) - Y_3$, and 
$S_1 - (S_2,V_1,X_1,X_2,X_3,W_3,Q) - Y_3$.
Plugging \eqref{eq:Relay_E21spec_errProb_CP} into \eqref{eq:RelayE21_Bound} we have
\begin{align}
    & \mspace{-13mu} \Pr \big\{ E_{21}^{r} | (E_1^{r})^{c} \big\} \nonumber \\
    & \quad \leq \mspace{-12mu} \sum_{\substack{\tilde{\svec}_1 \neq \svec_{1,b}, \\ \tilde{\svec}_1 \in \styp(S_1|\svec_{2,b}, \wvec_{3,b}, \tvec_b)}}{\mspace{-50mu} 2^{-n[I(X_1;Y_3|S_2,V_1,X_2,X_3,W_3,Q) -\epsilon_0]}} \nonumber \\
%
    & \quad \leq 2^{n[H(S_1|S_2,W_3,T) - I(X_1;Y_3|S_2,V_1,X_2,X_3,W_3,Q) + 2\eps_0]} \nonumber \\
		& \quad  = 2^{n[H(S_1|S_2,W_3) - I(X_1;Y_3|S_2,V_1,X_2,X_3,W_3,Q) + 2\eps_0]},
\label{eq:Relay_E21_errProb_CP}
\end{align}

\noindent which can be bounded by $\epsilon$, for large enough $n$, as long as
\begin{align}
    &H(S_1|S_2,W_3) \nonumber \\
		& \qquad < I(X_1;Y_3|S_2,V_1,X_2,X_3,W_3,Q) - 2\eps_0.
\label{eq:Relay_E21_Cond_CP}
\end{align}

\noindent Following similar arguments as in \eqref{eq:Relay_E21SpecificDef_CP}--\eqref{eq:Relay_E21_errProb_CP},
we can also show that $\Pr \big\{ E_{22}^{r} |(E_1^{r})^{c} \big\}$ can be bounded by $\epsilon$, for large enough $n$, as long as
\begin{align}
    & H(S_2|S_1,W_3) \nonumber \\
		& \qquad < I(X_2;Y_3|S_1,V_2,X_1,X_3,W_3,Q) - 2\eps_0,
\label{eq:Relay_E22_Cond_CP}
\end{align}

	\noindent and $\Pr \big\{ E_{23}^{r} | (E_1^{r})^{c} \big\}$ can be bounded by $\epsilon$, for large enough $n$, as long as
\begin{align}
    &H(S_1,S_2|W_3,T) \nonumber \\
		& \qquad < I(X_1,X_2;Y_3|V_1,V_2,X_3,W_3,T,Q) - 2\eps_0.
\label{eq:Relay_E23_Cond_CP}
\end{align}

	To bound $\Pr \big\{ E_{24}^{r} | (E_1^{r})^{c} \big\}$ we first define the event $E_{24}^{r}(\tsvec_1, \tsvec_2, \ttvec)$ as follows
\begin{align}
    & E_{24}^{r}(\tsvec_1, \tsvec_2, \ttvec) \triangleq \nonumber \\
		& \quad \big\{ \Qvec(\ttvec) \ne \Qvec(\tvec_b), \big( \tsvec_{1}, \tsvec_{2}, \ttvec, \Qvec(\ttvec), \Vvec_1(\msg_{1,b-1}), \Vvec_2(\msg_{2,b-1}),   \nonumber \\
    & \qquad  \Xvec_1(\tsvec_{1}, \msg_{1,b-1}, \Qvec(\ttvec)), \Xvec_2(\tsvec_{2}, \msg_{2,b-1}, \Qvec(\ttvec)), \nonumber \\
		& \qquad \quad \Xvec_3(\msg_{1,b-1}, \msg_{2,b-1}), \wvec_{3,b}, \Yvec_{3,b} \big) \in \styp \big\}.
\label{eq:Relay_E24SpecificDef_CP}
\end{align}

\noindent Recalling that $\tsvec_{1} \ne \svec_{1,b}$, $\tsvec_2 \ne \svec_{2,b}$, $\ttvec \ne \tvec_b$, we have
\begin{align}
    & \Pr \big\{ E_{24}^{r} | (E_1^{r})^{c} \big\} \nonumber \\
		& \quad \leq \sum_{\substack{\tilde{\svec}_1 \neq \svec_{1,b}, \tsvec_2 \neq \svec_{2,b}, \ttvec \ne \tvec_b, \\ (\tsvec_1, \tsvec_2, \ttvec) \in \styp(S_1,S_2,T|\wvec_{3,b})}}{\mspace{-60mu}
    \Pr \big\{ E_{24}^{r}(\tsvec_1, \tsvec_2, \ttvec) | (E_1^{r})^{c} \big\}} .\label{eq:RelayE24_Bound}
\end{align}

%
%

\setcounter{equation}{17}

\noindent Let $\msA_b$ denote the event that $(\tsvec_1, \tsvec_2, \ttvec) \in \styp(S_1,S_2,T|\wvec_{3,b})$.
Note that if $\msA_b^c$ holds then $\Pr \Big\{ E_{24}^{r}(\tsvec_1, \tsvec_2, \ttvec)\Big| (E_1^r)^c , \msA_b^c \Big\} = 0$.
Hence, we can~write \eqref{eq:Relay_E24_expand_BasicTerm_CP} at the bottom of the page, 
\noindent where \eqref{eq:Relay_E24_expand_BasicTerm_CP} follows from \cite[Thm. 6.7]{YeungBook}.
%
In the following we upper bound the summands in \eqref{eq:Relay_E24_expand_BasicTerm_CP}
%
independently of $\tsvec_{1}, \tsvec_{2}, \ttvec$, and $\tqvec$.
To reduce clutter let us denote $\vvec_1(\msg_{1,b-1})$, $\vvec_2(\msg_{2,b-1})$, $\xvec_1(\tilde{\svec}_{1}, \msg_{1,b-1}, \tqvec)$, $\xvec_2(\tsvec_{2}, \msg_{2,b-1}, \tqvec)$, $\xvec_3(\msg_{1,b-1},\msg_{2,b-1})$, $\wvec_{3,b}$, $\yvec_{3,b}$ by $\vvec_1$, $\vvec_2$, $\tilde{\xvec}_1$, $\txvec_2$, $\xvec_3$, $\wvec_{3}$,$\yvec_{3}$, respectively.
The joint distribution
obeys
\begin{align}
    & p(\tilde{\svec}_1, \tsvec_2, \ttvec, \tqvec, \vvec_1, \vvec_2, \txvec_1, \txvec_2,  \xvec_3, \wvec_3, \yvec_3) \nonumber \\
   & \mspace{10mu} = \prod_{j=1}^{n}{p(w_{3,j}) p(\tilde{s}_{1,j}, \tilde{s}_{2,j}, \tilde{t}_j, \tilde{q}_j)} \times \nonumber \\
	& \mspace{70mu} p(v_{1,j}, v_{2,j}, x_{3,j} |\tilde{s}_{1,j},\tilde{s}_{2,j},w_{3,j}, \tilde{t}_j, \tilde{q}_j) \times \nonumber \\
    & \mspace{85mu} p(\tilde{x}_{1,j}, \tilde{x}_{2,j}| \tilde{s}_{1,j}, \tilde{s}_{2,j},\tilde{t}_j, \tilde{q}_j,v_{1,j}, v_{2,j},x_{3,j}) \times \nonumber \\
		& \mspace{100mu} p(y_{3,j}|w_{3,j}, v_{1,j}, v_{2,j}, x_{3,j}).
\label{eq:RelayE24_Dist_CP}
\end{align}	

\noindent
Moreover, note that the independence of $Q$ from $(S_1,S_2,T)$, and conditioning on $\msA_b$, implies that
for $n$ large enough $(\tsvec_1, \tsvec_2, \ttvec, \tqvec) \in \stypt(S_1,S_2,T,Q|\wvec_{3,b})$. Hence, we can use \cite[Lemma, Appendix A]{Cover:80} with the following assignments:
\begin{align}
    & \zvec_1= \wvec_3, \quad \zvec_2=(\tsvec_{1}, \tsvec_2, \ttvec, \tqvec), \quad \Zvec_3=(\Xvec_3,\Vvec_1,\Vvec_2), \nonumber \\
		& \Zvec_4=(\tXvec_1, \tXvec_2), \quad \Zvec_5 = \Yvec_{3},
\label{eq:Relay_E24_Zvals_CP}
\end{align}

\noindent to bound
%
%
\begin{align}
 \Pr \Big\{ \Big( & \tsvec_{1}, \tsvec_{2}, \ttvec, \tqvec, \Vvec_1(\msg_{1,b-1}), \Vvec_2(\msg_{2,b-1}), \Xvec_1(\tsvec_{1}, \msg_{1,b-1}, \tqvec), \nonumber \\
	&\Xvec_2(\tsvec_{2}, \msg_{2,b-1}, \tqvec),
        \Xvec_3(\msg_{1,b-1}, \msg_{2,b-1}), \wvec_{3,b}, \Yvec_{3,b} \Big) \nonumber \\
				& \in \stypt \Big| (E_1^{r})^{c}, \Qvec(\tvec_b) \ne \tqvec, \nonumber \\
				& \mspace{12mu} (\tsvec_1, \tsvec_2, \ttvec, \tqvec) \in \stypt(S_1,S_2,T,Q|\wvec_{3,b})\Big\} \nonumber \\
    & \mspace{20mu} \leq 2^{-n[I(S_1,S_2,T,Q,X_1,X_2;Y_3|V_1,V_2,X_3,W_3) -\epsilon_0]} \nonumber \\
    & \mspace{20mu} \stackrel{(a)}{=} 2^{-n[I(X_1,X_2;Y_3|V_1,V_2,X_3,W_3) -\epsilon_0]},
\label{eq:Relay_E24spec_errProb_Partial_CP}
\end{align}

\noindent where $(a)$ follows from the Markov chain $(S_1,S_2,T,Q) - (X_1,X_2,V_1,V_2,X_3,W_3) - Y_3$. From  \eqref{eq:Relay_E24spec_errProb_Partial_CP} and \eqref{eq:Relay_E24_expand_BasicTerm_CP} we obtain
\begin{align}
	& \mspace{-8mu} \Pr \big\{ E_{24}^{r}(\tsvec_1, \tsvec_2, \ttvec) | \msA_b, (E_1^{r})^{c}  \big\} \nonumber \\
	& \mspace{1mu} \le \mspace{-40mu} \sum_{ \mspace{35mu} \tqvec \in \styp(Q)}{\mspace{-38mu} 2^{-n[I(X_1,X_2;Y_3|V_1,V_2,X_3,W_3) -\epsilon_0]} 2^{-n[H(Q) - \epsilon_1]}},
\end{align}

\noindent and by using the bound ${\dst \left| \styp(Q) \right| \le 2^{n[H(Q) + \eps_1]}}$, \cite[Thm. 6.2]{YeungBook}, we have that
\begin{align}
	&\Pr \big\{ E_{24}^{r}(\tsvec_1, \tsvec_2, \ttvec) | \msA_b, (E_1^{r})^{c} \big\} \nonumber \\
	& \quad \le 2^{-n[I(X_1,X_2;Y_3|V_1,V_2,X_3,W_3) -\epsilon_0 - 2\eps_1]}. \label{eq:Relay_E24spec_errProb_CP}
\end{align}

\noindent Plugging \eqref{eq:Relay_E24spec_errProb_CP} into \eqref{eq:RelayE24_Bound} we have
\begin{align}
    & \Pr \big\{ E_{24}^{r} | (E_1^{r})^{c} \big\} \nonumber \\
%
    & \qquad  \leq \left| \styp(S_1,S_2,T|\wvec_{3,b}) \right|  \times \nonumber \\
		& \qquad \qquad \quad 2^{-n[I(X_1,X_2;Y_3|V_1,V_2,X_3,W_3) -\epsilon_0 - 2\eps_1]} \nonumber \\
%
%
		& \qquad \le  2^{n[H(S_1,S_2|W_3) - I(X_1,X_2;Y_3|V_1,V_2,X_3,W_3) + 4\eps_0]},
    \label{eq:Relay_E24_errProb_CP}
\end{align}

\noindent which can be bounded by $\epsilon$, for large enough $n$, if
\begin{align}
    \mspace{-7mu} H(S_1,S_2|W_3) \mspace{-2mu} < \mspace{-2mu} I(X_1,X_2;Y_3|V_1,V_2,X_3,W_3) \mspace{-1mu} - \mspace{-1mu} 4\eps_0.
\label{eq:Relay_E24_Cond_CP}
\end{align}

Lastly, to bound $\Pr \big\{ E_{25}^{r} | (E_1^{r})^{c} \big\}$ we first define the event $E_{25}^{r}(\tsvec_1, \tsvec_2, \ttvec)$ as follows
\begin{align}
    & E_{25}^{r}(\tsvec_1, \tsvec_2, \ttvec) \triangleq \nonumber \\
		& \quad \big\{ \Qvec(\ttvec) = \Qvec(\tvec_b),  \big( \tsvec_{1}, \tsvec_{2}, \ttvec, \Qvec(\ttvec), \Vvec_1(\msg_{1,b-1}), \Vvec_2(\msg_{2,b-1}), \nonumber \\
		& \qquad \Xvec_1(\tsvec_{1}, \msg_{1,b-1}, \Qvec(\ttvec)), \Xvec_2(\tsvec_{2}, \msg_{2,b-1},\Qvec(\ttvec)), \nonumber \\
		& \qquad \quad \Xvec_3(\msg_{1,b-1}, \msg_{2,b-1}), \wvec_{3,b}, \Yvec_{3,b} \big) \in \styp \big\}.
\label{eq:Relay_E25SpecificDef_CP}
\end{align}

\noindent Recalling that $\tsvec_1 \ne \svec_{1,b}, \tsvec_2 \ne \svec_{2,b}, \ttvec \ne \tvec_b$, we have
\begin{align}
    & \Pr \big\{ E_{25}^{r} | (E_1^{r})^{c} \big\} \nonumber \\
		& \qquad \leq \mspace{-20mu} \sum_{\substack{\tilde{\svec}_1 \neq \svec_{1,b}, \tsvec_2 \neq \svec_{2,b}, \ttvec \ne \tvec_b, \\ (\tsvec_1, \tsvec_2, \ttvec) \in \styp(S_1,S_2,T|\wvec_{3,b})}}{\mspace{-60mu}
    \Pr \big\{ E_{25}^{r}(\tsvec_1, \tsvec_2, \ttvec) | (E_1^{r})^{c} \big\}}. \label{eq:RelayE25_Bound}
\end{align}

\noindent Then we have
\begin{align}
	& \mspace{-10mu} \Pr \big\{ E_{25}^{r}(\tsvec_1, \tsvec_2, \ttvec) | \msA_b, (E_1^{r})^{c} \big\} \nonumber \\
%
%
		& \mspace{10mu} = \mspace{-25mu} \sum_{ \mspace{25mu} \bqvec \in \styp(Q)}{ \mspace{-25mu} \Pr \Big\{ \Qvec(\ttvec) = \bqvec \big| \msA_b, (E_1^{r})^{c} \Big\}} \times \nonumber \\
		& \mspace{80mu} \Pr \Big\{ \Qvec(\tvec_b) = \bqvec \big| \msA_b, (E_1^{r})^{c} \Big\} \times \nonumber \\	
	    & \mspace{80mu} \Pr \Big\{ \Big( \tsvec_{1}, \tsvec_{2}, \ttvec, \bqvec, \Vvec_1(\msg_{1,b-1}), \Vvec_2(\msg_{2,b-1}), \nonumber \\
			& \mspace{125mu} \Xvec_1(\tsvec_{1}, \msg_{1,b-1}, \bqvec), \Xvec_2(\tsvec_{2}, \msg_{2,b-1}, \bqvec),  \nonumber \\
    & \mspace{125mu} \Xvec_3(\msg_{1,b-1}, \msg_{2,b-1}), \wvec_{3,b}, \Yvec_{3,b} \Big) \nonumber \\
		& \mspace{125mu} \in \styp \big| \msA_b, (E_1^{r})^{c}, \Qvec(\tvec_b) = \bqvec \Big\} \label{eq:Relay_E25_expand_BasicTerm_CP},
\end{align}

\noindent where \eqref{eq:Relay_E25_expand_BasicTerm_CP} follows from the same argument leading to \eqref{eq:Relay_E24_expand_BasicTerm_CP}. In the following we upper bound the summands in \eqref{eq:Relay_E25_expand_BasicTerm_CP}
%
independently of $\tsvec_{1}, \tsvec_{2}, \ttvec$, and $\bqvec$.
Let us denote $\vvec_1(\msg_{1,b-1})$, $\vvec_2(\msg_{2,b-1})$, $\xvec_1(\tilde{\svec}_{1}, \msg_{1,b-1}, \bqvec)$, $\xvec_2(\tsvec_{2}, \msg_{2,b-1}, \bqvec)$, $\xvec_3(\msg_{1,b-1},\msg_{2,b-1})$, $\wvec_{3,b}$, $\yvec_{3,b}$ by $\vvec_1, \vvec_2, \tilde{\xvec}_1, \txvec_2, \xvec_3, \wvec_{3}, \yvec_{3}$, respectively.
Note that the joint distribution~obeys
\begin{align}
    & \mspace{-10mu} p(\tilde{\svec}_1, \tsvec_2, \ttvec, \bqvec, \vvec_1, \vvec_2, \txvec_1, \txvec_2, \xvec_3, \wvec_3, \yvec_3) \nonumber \\
		& \mspace{15mu}  = \prod_{j=1}^{n}{p(\tilde{s}_{1,j}, \tilde{s}_{2,j}, \tilde{t}_j, \bar{q}_j,w_{3,j})} \times \nonumber \\
		& \mspace{75mu} p(v_{1,j}, v_{2,j}, x_{3,j} |\tilde{s}_{1,j},\tilde{s}_{2,j},w_{3,j}, \tilde{t}_j, \bar{q}_j) \times \nonumber \\
    & \mspace{85mu} p(\tilde{x}_{1,j}, \tilde{x}_{2,j}| \tilde{s}_{1,j}, \tilde{s}_{2,j},\tilde{t}_j, \bar{q}_j,v_{1,j}, v_{2,j},x_{3,j}) \times \nonumber \\
		& \mspace{95mu} p(y_{3,j}|w_{3,j}, v_{1,j}, v_{2,j}, x_{3,j}, \bar{q}_j).
\label{eq:RelayE25_Dist_CP}
\end{align}	

\noindent 
Similarly to the analysis for  $E_{24}^{r}(\tsvec_1, \tsvec_2, \ttvec)$ we have
that $(\tsvec_1, \tsvec_2, \ttvec, \bqvec) \in \stypt(S_1,S_2,T,Q|\wvec_{3,b})$. Hence, we can use \cite[Lemma, Appendix A]{Cover:80} with the following assignments:
%
%
\begin{align}
    & \zvec_1= (\wvec_3, \bqvec), \quad \zvec_2=(\tsvec_{1}, \tsvec_2, \ttvec, \bqvec), \quad \Zvec_3=(\Xvec_3,\Vvec_1,\Vvec_2), \nonumber \\
		& \Zvec_4=(\tXvec_1, \tXvec_2), \quad \Zvec_5 = \Yvec_{3}.
\label{eq:Relay_E25_Zvals_CP}
\end{align}

\noindent Then we get the following bound:
\begin{align}
   \Pr \Big\{ \Big( & \tsvec_{1}, \tsvec_{2}, \ttvec, \bqvec, \Vvec_1(\msg_{1,b-1}), \Vvec_2(\msg_{2,b-1}), \Xvec_1(\tsvec_{1}, \msg_{1,b-1}, \bqvec), \nonumber \\
	& \Xvec_2(\tsvec_{2}, \msg_{2,b-1}, \bqvec), \Xvec_3(\msg_{1,b-1}, \msg_{2,b-1}), \wvec_{3,b}, \Yvec_{3,b} \Big) \nonumber \\
	& \mspace{8mu} \in \stypt \Big| (E_1^{r})^{c}, \Qvec(\tvec_b) = \bqvec, \nonumber \\
	& \mspace{20mu} (\tsvec_1, \tsvec_2, \ttvec, \bqvec) \in \stypt(S_1,S_2,T,Q|\wvec_{3,b}) \Big\} \nonumber \\
%
    & \mspace{40mu} \stackrel{(a)}{\le} 2^{-n[I(X_1,X_2;Y_3|V_1,V_2,X_3,W_3,Q) -\epsilon_0]},
\label{eq:Relay_E25spec_errProb_Partial_CP}
\end{align}

\noindent where $(a)$ follows from the Markov chain $(S_1,S_2,T) - (X_1,X_2,V_1,V_2,X_3,W_3) - Y_3$. From \eqref{eq:Relay_E25spec_errProb_Partial_CP} and \eqref{eq:Relay_E25_expand_BasicTerm_CP} we have
\begin{align}
	 & \Pr \big\{ E_{25}^{r}(\tsvec_1, \tsvec_2, \ttvec) | \msA_b, (E_1^{r})^{c} \big\} \nonumber \\
	& \qquad \le \sum_{\bqvec \in \styp(Q)}{\mspace{-20mu} 2^{-n[I(X_1,X_2;Y_3|V_1,V_2,X_3,W_3,Q) -\epsilon_0]} \times} \nonumber \\
	& \mspace{125mu} \Pr \Big\{ \Qvec(\ttvec) = \bqvec \big| \msA_b, (E_1^{r})^{c} \Big\} \times \nonumber \\
	& \mspace{150mu} \Pr \Big\{ \Qvec(\tvec_b) = \bqvec \big| \msA_b, (E_1^{r})^{c} \Big\}.
\end{align}

\noindent However, for $\bqvec \in \styp(Q)$ the following holds
\begin{align*}
	\Pr \Big\{ \Qvec(\ttvec) = \bqvec \big| \msA_b, (E_1^{r})^{c} \Big\} & \le 2^{-n[H(Q) - \eps_1]}, \\
	\Pr \Big\{ \Qvec(\tvec_b) = \bqvec \big| \msA_b, (E_1^{r})^{c} \Big\} & \le 2^{-n[H(Q) - \eps_1]}. 
\end{align*}

\noindent Hence,
%
using the fact that ${\dst \left| \styp(Q) \right| \le 2^{n[H(Q) + \eps_1]}}$, we have that
\begin{align}
	& \Pr \big\{ E_{25}^{r}(\tsvec_1, \tsvec_2, \ttvec) | \msA_b, (E_1^{r})^{c} \big\} \nonumber \\
	& \qquad \le 2^{-n[I(X_1,X_2;Y_3|V_1,V_2,X_3,W_3,Q) + H(Q) -\epsilon_0 - 3 \eps_1]}. \label{eq:Relay_E25spec_errProb_CP}
\end{align}

\noindent Finally, plugging \eqref{eq:Relay_E25spec_errProb_CP} into \eqref{eq:RelayE25_Bound} we have
\begin{align}
    & \Pr \big\{ E_{25}^{r} | (E_1^{r})^{c} \big\} \nonumber \\
    & \quad \leq \left| \styp(S_1,S_2,T|\wvec_{3,b}) \right|  \times \nonumber \\
		& \qquad \qquad 2^{-n[I(X_1,X_2;Y_3|V_1,V_2,X_3,W_3,Q) + H(Q) -\epsilon_0 - 3 \eps_1]} \nonumber \\
%
%
		& \quad \le 2^{n[H(S_1,S_2|W_3) - I(X_1,X_2;Y_3|V_1,V_2,X_3,W_3,Q) - H(Q)  + 5\eps_0]},
    \label{eq:Relay_E25_errProb_CP}
\end{align}

\noindent which can be bounded by $\epsilon$, for large enough $n$, as long as
\begin{align}
    & \mspace{-8mu} H(S_1,S_2|W_3) \nonumber \\
		& \mspace{2mu} < I(X_1,X_2;Y_3|V_1,V_2,X_3,W_3,Q) + H(Q) - 5\eps_0.
\label{eq:Relay_E25_Cond_CP}
\end{align}

\noindent However, condition \eqref{eq:Relay_E25_Cond_CP} is redundant since it is dominated by condition \eqref{eq:Relay_E24_Cond_CP},
%
	hence, we conclude that if conditions \eqref{bnd:Joint_rly_S1_CP}-\eqref{bnd:Joint_rly_S1S2_2_CP} hold, then for large enough $n$,
\begin{eqnarray}
    \Pr \big\{ E_2^{r} | (E_1^{r})^{c} \big\} \leq \sum_{j=1}^{5}{\Pr \big\{ E_{2j}^{r} | (E_1^{r})^{c} \big\}} \leq 5\epsilon.
\label{eq:RelayJntE2Rbound2_CP}
\end{eqnarray}

\noindent Combining equations \eqref{eq:RelayDecErrProbDef_CP}, \eqref{eq:RelayDecErrProb1_CP} and \eqref{eq:RelayJntE2Rbound2_CP} yields
\begin{equation}
    \bar{P}_{r}^{(n)} \leq \Pr \big\{ E_2^{r} | (E_1^{r})^{c} \big\} + 2\epsilon \leq 7\epsilon.
\label{eq:RelayDecErrProbRes_CP}
\end{equation}

\noindent Next the destination error probability analysis is derived.

\begin{figure*}[!b]
\normalsize
\setcounter{MYtempeqncnt}{\value{equation}}
\setcounter{equation}{36}

\vspace*{4pt}
\hrulefill
\vspace{-0.3cm}

\begin{align}
    \bar{P}_{d,ch}^{(n)}
    & \triangleq \sum_{(\svec_{1,b+1},\svec_{2,b+1}) \in \mS_1^n \times \mS_2^n}{\mspace{-30mu} p(\svec_{1,b+1},\svec_{2,b+1})} \sum_{(\msg_{1,b},\msg_{2,b}) \in \msgCal_1 \times \msgCal_2}{\mspace{-24mu} p(\msg_{1,b},\msg_{2,b})}
        \Pr \big\{ E^{d}_{ch}(\msg_{1,b},\msg_{2,b};\svec_{1,b+1},\svec_{2,b+1}) \big\} \nonumber \\
		& \stackrel{(a)}{\le} \eps + \mspace{-12mu} \sum_{ (\svec_{1,b+1},\svec_{2,b+1},\wvec_{b+1}, \tvec_{b+1}) \in \styp(S_1,S_2,W,T)}{\mspace{-100mu} p(\svec_{1,b+1},\svec_{2,b+1},\wvec_{b+1})} \times \nonumber \\
		& \qquad \qquad\qquad\qquad\qquad \sum_{(\msg_{1,b},\msg_{2,b}) \in \msgCal_1 \times \msgCal_2}{\mspace{-24mu} p(\msg_{1,b},\msg_{2,b})} \Pr \big\{ E^{d}_{ch}(\msg_{1,b},\msg_{2,b};\svec_{1,b+1},\svec_{2,b+1}) | \msD_{b+1} \big\}. \label{eq:DestDecErrProbDefFlip_CP}
\end{align}

\setcounter{equation}{\value{MYtempeqncnt}}

\end{figure*}

{\bf Channel decoder:} Let $E^{d}_{ch}(\msg_{1,b},\msg_{2,b};\svec_{1,b+1},\svec_{2,b+1})$ denote the  channel decoding error event
for decoding $(u_{1,b},u_{2,b})$ at the destination at block $b$, assuming $(\svec_{1,b+1},\svec_{2,b+1})$ is available at the destination,
namely, the event that $(\hmsg_{1,b},\hmsg_{2,b}) \ne (\msg_{1,b},\msg_{2,b})$. Let $\tvec_{b+1}=h_1(\svec_{1,b+1})=h_2(\svec_{2,b+1})$.
The average probability of channel decoding error at the destination at block $b$, $\bar{P}_{d,ch}^{(n)}$, is defined in \eqref{eq:DestDecErrProbDefFlip_CP} at the bottom of the page,
%
\noindent where in step (a) leading to \eqref{eq:DestDecErrProbDefFlip_CP} we apply similar reasoning as \cite[Eq. (16)]{Cover:80}. In the following we show that the inner sum in \eqref{eq:DestDecErrProbDefFlip_CP} can be upper bounded independently of $(\svec_{1,b+1},\svec_{2,b+1})$.
Assuming correct decoding at block $b+1$ (hence $(\svec_{1,b+1},\svec_{2,b+1})$ are available at the destination), we now define the following events:
\vspace{-0.25cm}
\begin{align*}
    E_1^{d} \triangleq \big\{ &(\svec_{1,b+1}, \svec_{2,b+1}, \tvec_{b+1}, \Qvec(\tvec_{b+1}),  \\
		& \quad \Vvec_1(\msg_{1,b}), \Vvec_2(\msg_{2,b}), \Xvec_1(\svec_{1,b+1}, \msg_{1,b}, \Qvec(\tvec_{b+1})), \\
		& \qquad \Xvec_2(\svec_{2,b+1}, \msg_{2,b}, \Qvec(\tvec_{b+1})), \Xvec_3(\msg_{1,b}, \msg_{2,b}), \\
		& \qquad \quad \wvec_{b+1}, \Yvec_{b+1}) \notin \styp \big\},  \\
    E_2^{d} \triangleq \big\{ & \exists \hmsg_{1} \in \msgCal_1:  \hmsg_{1} \neq \msg_{1,b}, \\
		& \quad (\svec_{1,b+1}, \svec_{2,b+1}, \tvec_{b+1}, \Qvec(\tvec_{b+1}), \\
		& \qquad \Vvec_1(\hmsg_1), \Vvec_2(\msg_{2,b}), \Xvec_1(\svec_{1,b+1}, \hmsg_1, \Qvec(\tvec_{b+1})), \\
		& \qquad \quad \Xvec_2(\svec_{2,b+1}, \msg_{2,b}, \Qvec(\tvec_{b+1})), \Xvec_3(\hmsg_1, \msg_{2,b}), \\
		& \qquad \qquad \wvec_{b+1}, \Yvec_{b+1})\in \styp \big\},   \\
    E_3^d \triangleq \big\{ & \exists \hmsg_{2} \in \msgCal_2:  \hmsg_{2} \neq \msg_{2,b}, \\
		& \quad (\svec_{1,b+1}, \svec_{2,b+1}, \tvec_{b+1}, \Qvec(\tvec_{b+1}), \\
		& \qquad \Vvec_1(\msg_{1,b}), \Vvec_2(\hmsg_2), \Xvec_1(\svec_{1,b+1}, \msg_{1,b}, \Qvec(\tvec_{b+1})), \\
		& \qquad \quad \Xvec_2(\svec_{2,b+1}, \hmsg_2, \Qvec(\tvec_{b+1})), \Xvec_3(\msg_{1,b}, \hmsg_2), \\
		& \qquad \qquad \wvec_{b+1}, \Yvec_{b+1}) \in \styp \big\},   \\
    E_4^d \triangleq \big\{ & \exists (\hmsg_{1},  \hmsg_{2}) \in \msgCal_1 \times \msgCal_2:  \hmsg_{1} \neq \msg_{1,b}, ,  \hmsg_{2} \neq \msg_{2,b}, \\
		& \quad (\svec_{1,b+1}, \svec_{2,b+1}, \tvec_{b+1}, \Qvec(\tvec_{b+1}), \\
    & \qquad \Vvec_1(\hmsg_1), \Vvec_2(\hmsg_2), \Xvec_1(\svec_{1,b+1}, \hmsg_1, \Qvec(\tvec_{b+1})),  \\
		& \qquad \quad \Xvec_2(\svec_{2,b+1}, \hmsg_2, \Qvec(\tvec_{b+1})), \Xvec_3(\hmsg_1, \hmsg_2), \\
		& \qquad \qquad \wvec_{b+1}, \Yvec_{b+1})\in \styp \big\}.
\end{align*}

\setcounter{equation}{37}

\vspace{-0.15cm}
\noindent The average probability of error for decoding $(\msg_{1,b},\msg_{2,b})$ at the destination at block $b$, for fixed $(\svec_{1,b+1}, \svec_{2,b+1})$, subject to the event
$\msD_{b+1}$, 
is then upper bounded by
\vspace{-0.25cm}
\begin{align}
    & \Pr \left\{ E^{d}_{ch}(\msg_{1,b},\msg_{2,b};\svec_{1,b+1},\svec_{2,b+1}) | \msD_{b+1} \right\} \nonumber \\
		& \quad \leq  \Pr \big\{ E^{d}_{1} | \msD_{b+1} \big\} + \sum_{j=2}^{4}{\Pr \big\{ E^{d}_{j} | (E_1^{d})^{c} \big\}},
\label{eq:DestJntChanDecErrProb_CP}
\end{align}

\vspace{-0.25cm}
\noindent where \eqref{eq:DestJntChanDecErrProb_CP} follows from the union bound.
From the AEP \cite[Ch. 5.1]{YeungBook}, for sufficiently large $n$, $\Pr \{ E^{d}_1 | \msD_{b+1} \}$ can be upper bounded by $\epsilon$ for $n$ large enough. Let $\eps_0$ be a positive number such that $\eps_0 > \eps$ and $\eps_0 \rightarrow 0$ as $\eps \rightarrow 0$.
To bound $\Pr \left\{ E_2^d(\hmsg_1) | (E_1^{d})^{c} \right\}$ we first define the event $E_2^d(\hmsg_1)$ as follows
\begin{align}
    & E_2^d(\hmsg_1) \triangleq \nonumber \\
		& \quad \big\{ \big(\svec_{1,b+1}, \svec_{2,b+1}, \tvec_{b+1}, \Qvec(\tvec_{b+1}), \Vvec_1(\hmsg_1), \Vvec_2(\msg_{2,b}), \nonumber \\
		& \qquad \Xvec_1(\svec_{1,b+1}, \hmsg_1, \Qvec(\tvec_{b+1})), \Xvec_2(\svec_{2,b+1}, \msg_{2,b}, \Qvec(\tvec_{b+1})), \nonumber \\
		& \qquad \quad \Xvec_3(\hmsg_1, \msg_{2,b}), \wvec_{b+1}, \Yvec_{b+1} \big) \in \styp \big\}.
\label{eq:DestJntChanE2_msg1_def_CP}
\end{align}

\noindent Recalling that $\hmsg_1 \neq \msg_{1,b}$, we can bound
\begin{align}
	\Pr \left\{ E_2^d | (E_1^{d})^{c} \right\}  \leq  \sum_{\hmsg_1 \in \msgCal_1 , \hmsg_1 \neq \msg_{1,b}}{\mspace{-24mu}\Pr \left\{E_2^d(\hmsg_1) | (E_1^{d})^{c}\right\}}. \label{eq:DestJntE2R_basic_Bound_CP}
\end{align}

\noindent Using \cite[Thm. 14.2.3]{cover-thomas:it-book}, $\Pr \LL{\{} E_2^d(\hmsg_1) | (E_1^{d})^{c}\RR{\}}$ can be bounded by
\begin{align}
    & \Pr \LL{\{} E_2^d(\hmsg_1) | (E_1^{d})^{c} \RR{\}} \nonumber \\
		& \qquad \leq 2^{-n[I(V_1,X_1,X_3;Y|S_1,S_2,T,V_2,X_2,W,Q) - \epsilon_0]} \nonumber \\
%
    & \qquad \stackrel{(a)}{=}  2^{-n[I(X_1,X_3;Y|S_1,V_2,X_2,Q) - \epsilon_0]},    \label{eq:DestJntE2R_msg1_Bound_CP}
\end{align}

\noindent where (a) follows from the Markov chains $V_1 - (X_1,X_2,X_3,$ $S_1,S_2,W,T,Q,V_2) - Y$ and $(S_2,T,W) - (X_1,X_2,X_3,S_1,$ $V_2,X_2,Q) - Y$.
Plugging \eqref{eq:DestJntE2R_msg1_Bound_CP} into \eqref{eq:DestJntE2R_basic_Bound_CP} we have
\begin{align}
    \mspace{-8mu} \Pr \left\{ E_2^d | (E_1^{d})^{c} \right\}
    & \leq 2^{n[R_1 - (I(X_1,X_3;Y|S_1,X_2,V_2,Q) - \epsilon_0)]},
\label{eq:DestJntE2RBound_CP}
\end{align}

\noindent which can be bounded by $\epsilon$, for large enough $n$, as long as
\begin{equation}
    R_1 < I(X_1,X_3;Y|S_1,X_2,V_2,Q) - \epsilon_0.
\label{eq:DestJntE2RDecCondition_CP}
\end{equation}

Following similar arguments as in \eqref{eq:DestJntChanE2_msg1_def_CP}--\eqref{eq:DestJntE2RDecCondition_CP}, we can also show that $\Pr \left\{ E_3^d | (E_1^{d})^{c} \right\}$ can be bounded by $\epsilon$,
for large enough $n$, as long as
\begin{equation}
    R_2 < I(X_2,X_3;Y|S_2,X_1,V_1,Q) - \epsilon_0,
\label{eq:DestJntE3RDecCondition_CP}
\end{equation}

\noindent and $\Pr \left\{ E_4^d | (E_1^{d})^{c} \right\}$ can be bounded by $\epsilon$, for large enough $n$, as long as
\begin{equation}
    R_1 + R_2 < I(X_1,X_2,X_3;Y|S_1,S_2,Q) - \epsilon_0.
\label{eq:DestJntE4RDecCondition_CP}
\end{equation}

Hence, if conditions \eqref{eq:DestJntE2RDecCondition_CP}--\eqref{eq:DestJntE4RDecCondition_CP} hold, for large enough $n$, $\bar{P}_{d,ch}^{(n)} \leq 5\epsilon$.

\textbf{Source decoder:} From the SW theorem \cite{SW:73} it follows that, given correct decoding of $(u_{1,b},u_{2,b})$, the average probability of error in decoding $(\svec_{1,b},\svec_{2,b})$
at the destination can be made arbitrarily small for sufficiently large $n$, as long as
\begin{subequations} \label{eq:DestJntSourceCondE678_CP}
\begin{eqnarray}
    H(S_1|S_2,W) + \epsilon_0 &<& R_1 , \label{eq:DestJntSourceCondE6_CP} \\
    H(S_2|S_1,W) + \epsilon_0 &<& R_2 , \label{eq:DestJntSourceCondE7_CP} \\
    H(S_1,S_2|W) + \epsilon_0 &<& R_1 + R_2 . \label{eq:DestJntSourceCondE8_CP}
\end{eqnarray}
\end{subequations}

Combining conditions \eqref{eq:DestJntE2RDecCondition_CP}--\eqref{eq:DestJntE4RDecCondition_CP} with conditions \eqref{eq:DestJntSourceCondE678_CP} yields the destination decoding constrains \eqref{bnd:Joint_dst_S1_CP}--\eqref{bnd:Joint_dst_S1S2_CP} in \Thmref{thm:jointCond_CP}.


\vspace{-0.2cm}
\section{Proof of Theorem \thmref{thm:jointCondFlip_CP}}	\label{annex:jointProofFlip_CP}

Fix a distribution $p(s_1,s_2,w_3,w)p(q)p(x_1|s_1,q)p(x_2|s_2,q)$ $p(x_3|s_1,s_2,q)p(y_3,y|x_1,x_2,x_3)$.

\subsection{Codebook construction}
For $i=1,2$, assign every $\svec_i \in \mS_i^n$ to one of $2^{nR_i}$ bins independently according to a uniform distribution on $\msgCal_i \triangleq \{1,2,\dots,2^{nR_i}\}$. Denote this assignment by $f_i, i=1,2$.

For each $\tvec \in \mT^n$ generate one $n$-length codeword $\qvec(\tvec)$ by choosing the letters $q_k$ independently with distribution $p_{Q}(q_{k})$, for $k = 1,2,\dots, n$.
For each pair $(\msg_i, \svec_i)\in \msgCal_i \times \mS_i^n, i=1,2$,  set $\tvec = h_i(\svec_i)$, and generate one $n$-length codeword $\xvec_i(\msg_i, \svec_i,\qvec(\tvec)), \svec_i \in \mS_i^n, \qvec \in \mQ^n$,
by choosing the letters $x_{i,k}(\msg_i, \svec_i, \qvec(\tvec))$ independently with distribution $p_{X_i|S_i,Q}(x_{i,k}|s_{i,k},q_k(\tvec))$ for $k = 1,2,\dots,n$.
Finally, generate one length-$n$ relay codeword $\xvec_3(\svec_1,\svec_2, \qvec(\tvec))$ for each pair $(\svec_1,\svec_2) \in \mS_1^n \times \mS_2^n$ by choosing $x_{3,k}(\svec_1,\svec_2,\qvec(\tvec))$
independently with distribution $p_{X_3|S_1,S_2,Q}(x_{3,k}|s_{1,k},s_{2,k},q_k(\tvec))$ for $k = 1,2,\dots,n$.

\subsection{Encoding}
    Consider the  sequences $s^{Bn}_{i,1} \in \mS^{Bn}_i, i=1,2$, $w_{3,1}^{Bn}\in\mW_3^{Bn}$, and $w^{Bn}\in\mW^{Bn}$, all of length $Bn$.
    Partition each sequence into $B$ length-$n$ subsequences, $\svec_{i,b}$, $i=1,2$, $\wvec_{3,b}$, and $\wvec_b$,  $b=1,2,\dots,B$.
A total of $Bn$ source samples are transmitted in $B+1$ blocks of $n$ channel symbols each.
Let $(\avec_1, \avec_2) \in \mS_1^n \times \mS_2^n$ be two sequences generated i.i.d according to $p(\avec_1,\avec_2) = \prod_{k=1}^{n}{p_{S_1,S_2}(a_{1,k}, a_{2,k})}$. These sequences are known to all nodes.
At block $1$, source terminal $i, i=1,2$, observes $\svec_{i,1}$, finds its corresponding bin index $\msg_{i,1} = f_i(\svec_{i,1})\in \msgCal_i$, and transmits the channel codeword $\xvec_i(\msg_{i,1}, \avec_i, \qvec(h_i(\avec_i)))$.
At block $b, b=2,\dots,B$, source terminal $i,i=1,2$, transmits the channel codeword $\xvec_i(\msg_{i,b},\svec_{i,b-1}, \qvec(h_i(\svec_{i,b-1})))$, where $\msg_{i,b} = f_i(\svec_{i,b}) \in \msgCal_i$.
At block $B+1$, source terminal $i,i=1,2$, transmits $\xvec_i(1,\svec_{i,B}, \qvec(h_i(\svec_{i,B})))$.

 Let $\tvec_1 = h_1(\avec_1) = h_2(\avec_2)$. At block $b=1$, the relay transmits $\xvec_3(\avec_1,\avec_2, \qvec(\tvec_1))$.
Assume that at block $b, b=2,\dots,B,B+1$, the relay has estimates $\tilde{\svec}_{i,b-1}$ of $\svec_{i,b-1}, i=1,2$, and let $\ttvec_{b-1} = h_1(\tsvec_{1,b-1}) = h_2(\tsvec_{2,b-1})$. The relay then transmits the
channel codeword $\xvec_3(\tilde{\svec}_{1,b-1},\tilde{\svec}_{2,b-1}, \qvec(\ttvec_{b-1}))$.

\subsection{Decoding} \label{annex:jointProofFlip_CP_Decoding}
The relay decodes the source sequences sequentially trying to reconstruct source block  $\svec_{i,b}, i=1,2$, at the end of channel block $b$ as follows: Let $(\tsvec_{1,b-1}, \tsvec_{2,b-1})$
be the estimates of $(\svec_{1,b-1}, \svec_{2,b-1})$ at the end of block $b-1$, and let $\ttvec_{b-1} \triangleq h_1(\tsvec_{1,b-1}) = h_2(\tsvec_{2,b-1})$.
The relay channel decoder at time $b$ decodes $(\msg_{1,b}, \msg_{2,b})$, 
by looking for a unique pair $(\tmsg_{1}, \tmsg_{2}) \in \msgCal_1 \times \msgCal_2$ such that:
\begin{align}
    \Big( & \tsvec_{1,b-1}, \tsvec_{2,b-1}, \ttvec_{b-1}, \qvec(\ttvec_{b-1}), \nonumber \\
		& \quad \xvec_1(\tmsg_{1}, \tsvec_{1,b-1},\qvec(\ttvec_{b-1})),\xvec_2(\tmsg_{2}, \tsvec_{2,b-1},\qvec(\ttvec_{b-1})), \nonumber \\
		& \qquad \xvec_3(\tsvec_{1,b-1}, \tsvec_{2,b-1}, \qvec(\ttvec_{b-1})), \wvec_{3,b-1}, \yvec_{3,b} \Big) \nonumber \\
		& \qquad \quad \in \styp (S_1,S_2,T,Q,X_1,X_2,X_3,W_3,Y_3).
    \label{eq:RelayJntFlipDecType_CP}
\end{align}

The decoded bin indices, denoted $(\tmsg_{1,b}, \tmsg_{2,b})$, are then given to the relay source decoder, which
estimates
$(\svec_{1,b}, \svec_{2,b})$ 
by looking for a unique pair of sequences $(\tsvec_{1},\tsvec_{2}) \in \mS_1^n\times\mS_2^n$
that satisfies $f_1(\tsvec_{1})= \tmsg_{1,b}$, $f_2(\tsvec_{2})= \tmsg_{2,b}$, and $(\tsvec_{1},\tsvec_{2},\wvec_{3,b}) \in \styp(S_1,S_2,W_3)$.
Denote the decoded sequences by $(\tsvec_{1,b},\tsvec_{2,b})$.

Decoding at the destination is done using backward decoding. The destination node waits until the end of channel block $B+1$.
It first tries to decode $\svec_{i,B}, i=1,2$, using the received signal at channel block $B+1$ and its side information $\wvec_{B}$.
Going backwards from the last channel block to the first, at channel block $b$ we assume that the destination has estimates $(\hsvec_{1,b+1},\hsvec_{2,b+1})$ of $(\svec_{1,b+1},\svec_{2,b+1})$ and consider decoding
of $(\svec_{1,b},\svec_{2,b})$. From $(\hsvec_{1,b+1}, \hsvec_{2,b+1})$ the destination finds the corresponding $(\hmsg_{1,b+1}, \hmsg_{2,b+1})$.
Then, the destination decodes $(\svec_{1,b},\svec_{2,b})$ by looking for a unique pair $(\hsvec_{1}, \hsvec_{2})\in\mS_1^n\times\mS_2^n$ such that:
\begin{align}
   & \big(\hsvec_{1}, \hsvec_{2}, \htvec, \qvec(\htvec), \xvec_1(\hmsg_{1,b+1}, \hsvec_{1}, \qvec(\htvec)),  \nonumber \\
    & \quad \xvec_2(\hmsg_{2,b+1}, \hsvec_{2}, \qvec(\htvec)), \xvec_3(\hsvec_{1}, \hsvec_{2}, \qvec(\htvec)), \wvec_{b}, \yvec_{b+1}\big) \nonumber \\
		& \qquad \in \styp (S_1,S_2,T,Q,X_1,X_2,X_3,W,Y).
\label{eq:DestFlipJntChanDecType_CP}
\end{align}

\noindent where $\htvec = h_1(\hsvec_{1}) = h_2(\hsvec_{2})$. Denote the decoded sequences by $\hsvec_{1,b}$ and $\hsvec_{2,b}$.

\ifthenelse{\boolean{SquizFlag}}{}{}

\ifthenelse{\boolean{SquizFlag}}{}{}

\ifthenelse{\boolean{IncludeJntOldProofs}}{}{}

\end{document}